\documentclass[12pt]{article}
\usepackage[T1]{fontenc}
\usepackage{mathpazo,bbm}
\DeclareSymbolFont{calletters}{OMS}{cmsy}{m}{n}
\DeclareSymbolFontAlphabet{\mathcal}{calletters}
\ifdefined\useFiraSansFont
    \usepackage[T1]{fontenc}
    \usepackage[sfdefault,scaled=.85]{FiraSans}
    \usepackage{newtxsf}

\fi

\usepackage{microtype}
\usepackage{amssymb,amsmath,amsthm, bbm}
\usepackage{mathtools}
\usepackage{bm}        
\usepackage{thm-restate}
\usepackage[dvipsnames]{xcolor}
\usepackage{graphicx}
\usepackage{xspace}
\usepackage{multirow}
\usepackage{array}
\usepackage{complexity}
\usepackage[linesnumbered,ruled,vlined]{algorithm2e}
\SetKwInput{KwInput}{Input}                
\SetKwInput{KwOutput}{Output}
\SetKwFor{RepTimes}{repeat}{times}{end}
\SetKwComment{Comment}{/* }{ */}

\usepackage{dirtytalk}
\usepackage{tikz}
\usepackage{hyperref}
\usepackage[capitalize,nameinlink]{cleveref}

\usepackage[shortlabels]{enumitem}

\usepackage[margin=1in]{geometry}
\usepackage{braket}
\usepackage{ wasysym }
\hypersetup{colorlinks=true,urlcolor={MidnightBlue},linkcolor={MidnightBlue},
    citecolor={MidnightBlue}}
\usepackage{csquotes}

\theoremstyle{plain}
\newtheorem{theorem}{Theorem}[section]
\newtheorem{corollary}[theorem]{Corollary}
\newtheorem{proposition}[theorem]{Proposition}
\newtheorem{lemma}[theorem]{Lemma}
\newtheorem{claim}[theorem]{Claim}

\newtheorem{observation}[theorem]{Observation}

\newtheorem{definition}[theorem]{Definition}

\theoremstyle{remark}
\newtheorem{remark}[theorem]{Remark}
\theoremstyle{plain}
\newclass{\DNF}{DNF}
\newclass{\DNFs}{DNFs}
\newclass{\ACzero}{AC^0}
\newclass{\TCzero}{TC^0}
\newclass{\Logspace}{L}

\newcommand{\F}{\mathbb{F}} 
\newcommand{\Z}{\mathbb{Z}} 

\renewcommand{\Pr}{\mathop{\bf Pr\/}}
\renewcommand{\E}{\mathop{\bf E\/}}

\newcommand{\Es}[1]{\mathop{\bf E\/}_{{\substack{#1}}}}

\newcommand{\abs}[1]{\left|#1\right|}
\let\p\undefined
\newcommand{\p}[1]{\left(#1\right)}
\newcommand{\s}[1]{\left[#1\right]}

\newcommand{\ceil}[1]{\lceil #1 \rceil}

\newcommand{\floor}[1]{\lfloor #1 \rfloor}

\DeclarePairedDelimiter\normd{\lVert}{\rVert}
\newcommand{\norm}[1]{\normd*{#1}}
\newcommand{\bignorm}[1]{\Bigl \| #1 \Bigr \| }

\newfunc{\MAJ}{MAJ}
\newfunc{\pmaj}{promise\text{-}MAJ}
\newfunc{\MUX}{MUX}
\newfunc{\NAE}{NAE}
\newfunc{\OR}{OR} 
\newfunc{\AND}{AND}
\newfunc{\XOR}{XOR}
\newfunc{\Tribes}{Tribes}
\newfunc{\LocalCorrect}{LocalCorrect}

\newfunc{\sgn}{sgn} 

\newfunc{\spar}{sparsity}
\newfunc{\rank}{rank}
\newfunc{\spn}{span}
\newfunc{\quasipoly}{quasipoly}
\newfunc{\Bias}{Bias}
\newcommand{\noise}{\mathsf N}
\newcommand{\fouriernoise}{\mathsf T}

\newfunc{\DT}{DTdepth} 
\newfunc{\DTs}{DTsize} 

\renewcommand{\hat}{\widehat}
\renewcommand{\tilde}{\widetilde}

\newcommand{\bits}{\{0,1\}}

\newcommand{\calA}{\mathcal{A}}
\newcommand{\calB}{\mathcal{B}}
\newcommand{\calC}{\mathcal{C}}
\newcommand{\calD}{\mathcal{D}}

\newcommand{\calG}{\mathcal{G}}

\newcommand{\calO}{\mathcal{O}}

\newcommand{\calS}{\mathcal{S}}

\newcommand{\calU}{\mathcal{U}}

\newfunc{\Parity}{PARITY}

\newfunc{\Dict}{Dict}
\newfunc{\Corr}{Corr}
\newfunc{\avg}{avg}
\newfunc{\smooth}{smooth}
\newfunc{\dist}{\calD}

\newclass{\ETH}{ETH}

\renewcommand{\epsilon}{\varepsilon}

\newcommand{\indicator}{\mathbbm{1}}

\newcommand{\supp}{\mathrm{supp}}

\newfunc{\DISJ}{DISJ}
\newfunc{\search}{SEARCH}

\newfunc{\Th}{Th}
\newfunc{\coll}{Coll}

\newcommand{\pisucc}{\Pi_{\mathrm{succ}}}
\newcommand{\maxmix}{\pi_{\textrm{mix}}}

\newcommand{\range}{\mathsf{Y}}
\newcommand{\domain}{\mathsf X}
\newcommand{\dom}{\mathsf{dom}}

\newcommand{\oracle}{\calO}

\newcommand{\relation}{\mathsf R}
\newcommand{\inputspace}{D}

\newcommand{\outputspace}{\hat\range}
\newcommand{\inputreg}{\mathsf I}
\newcommand{\workreg}{\mathsf W}
\newcommand{\outputmapfn}{q}

\newcommand{\domainsize}{d}
\newcommand{\identity}{\,\mathbb{I}}

\newcommand{\op}[2]{\ket{#1}\!\bra{#2}}

\newcommand{\crefdefpart}[2]{%
    \hyperref[#2]{\namecref{#1}~\labelcref*{#1},~\ref*{#2}}%
}

\newcommand{\blank}[1]{}

\newcommand{\pispread}{\Pi_{\mathsf{spread}}}
\newcommand{\piout}{\Pi_{\tau}}

\newcommand{\toBin}[1]{\mathsf{Bin}_{#1}}
\newcommand{\toBins}[1]{\mathsf{Bins}_{#1}}
\newcommand{\counts}{\mathsf{counts}}

\newcommand{\sortproblem}{\textsc{Sort}}

\newcommand{\circuit}{\mathcal{C}}
\newcommand{\event}{\mathsf{E}}

\newcommand{\badset}{\ensuremath \event_{\mathsf{free}}}

\newcommand{\noisyx}{\tilde x}

\allowdisplaybreaks
\ifdefined\useTheoremBorders
\usepackage[most]{tcolorbox}

\tcbset{
    theorem border/.style={
        enhanced jigsaw,
        breakable,
        colback=white,
        colframe=black,
        boxrule=0.4pt,
        sharp corners,
        left=0.75em,
        right=0.75em,
        top=0.5ex,
        bottom=0.5ex,
        before skip=1.5ex plus 0.5ex minus 0.25ex,
        after skip=1.5ex plus 0.5ex minus 0.25ex,
        parbox=false,
    },
}

\newcommand{\borderTheoremEnvironment}[1]{%
    \tcolorboxenvironment{#1}{theorem border}%
}

\borderTheoremEnvironment{theorem}
\borderTheoremEnvironment{corollary}
\borderTheoremEnvironment{proposition}
\borderTheoremEnvironment{lemma}
\borderTheoremEnvironment{claim}
\borderTheoremEnvironment{fact}
\borderTheoremEnvironment{conjecture}
\borderTheoremEnvironment{exercise}
\borderTheoremEnvironment{question}
\borderTheoremEnvironment{obs}
\borderTheoremEnvironment{openquestion}
\borderTheoremEnvironment{claimNoNum}
\borderTheoremEnvironment{lemmaNoNum}
\borderTheoremEnvironment{theoremNoNum}
\borderTheoremEnvironment{corollaryNoNum}
\borderTheoremEnvironment{questionNoNum}
\borderTheoremEnvironment{definition}
\borderTheoremEnvironment{example}
\borderTheoremEnvironment{thm}
\borderTheoremEnvironment{open}
\borderTheoremEnvironment{remark}
\borderTheoremEnvironment{proof}

\fi

\newcommand{\out}[1]{\ket{#1}\!\bra{#1}}

\renewcommand{\supp}{\mathsf{Supp}}
\newcommand{\record}{\mathcal R}
\newcommand{\queryreg}{\mathsf Q}
\newcommand{\phasereg}{\mathsf P}
\newcommand{\functionreg}{\mathsf I}
\newcommand{\0}{\mathbf{0}}
\newcommand{\U}{\mathsf{U}}
\newcommand{\B}{\mathcal{B}}
\newcommand{\unif}[1]{\mathcal{U}_{\,#1}}
\newcommand{\recordingvs}[1]{\Gamma_{#1}}
\newcommand{\fouriervs}[1]{\mathsf{B}_{#1}}
\newcommand{\projrecordingvs}[1]{\Pi_{\recordingvs{#1}}}
\newcommand{\projfouriervs}[1]{\Pi_{\fouriervs{#1}}}
\newcommand{\eval}{\textsc{Eval}}
\newcommand{\hash}{\mathsf H}
\newcommand{\probspace}{\bm{\Delta}}

\NewDocumentCommand{\ezlem}{o +m}{%
    \IfNoValueTF{#1}
        {\begin{lemma}#2\end{lemma}}
        {\begin{lemma}[#1]#2\end{lemma}}%
}
\NewDocumentCommand{\ezprop}{o +m}{%
    \IfNoValueTF{#1}
        {\begin{proposition}#2\end{proposition}}
        {\begin{proposition}[#1]#2\end{proposition}}%
}
\NewDocumentCommand{\ezthm}{o +m}{%
    \IfNoValueTF{#1}
        {\begin{theorem}#2\end{theorem}}
        {\begin{theorem}[#1]#2\end{theorem}}%
}

\title{A Noise Operator Approach to Quantum Query Complexity and Time-Space Tradeoff Lower Bounds}
\author{Paul Beame\thanks{Research supported by NSF grant  CCF-2422205.}\\Computer Science \& Engineering\\University of Washington
    \and 
    Blake Holman\thanks{Research supported by the LDRD Program at Sandia National Laboratories. Sandia is managed and operated by
NTESS under DOE NNSA contract DE-NA0003525}\\
    Sandia National Laboratories\\
Purdue University
    \and
    Niels Kornerup\footnotemark[2]\\ Sandia National Laboratories}

\begin{document}

\maketitle

\begin{abstract}
    Time and space (memory) are two of the most important measures of cost in computation, even more so for quantum computation where coherence times and logical qubit counts are critically constrained resources; it is important to understand when our algorithms cannot be improved in terms of these parameters. Establishing unconditional tradeoff lower bounds between time and space generally depends on being able to prove lower bounds on query algorithms with exponentially small success probability. In quantum computation our tools for proving such bounds are surprisingly limited.

    The first quantum time-space tradeoff lower bounds were proven for sorting by Klauck, \v{S}palek and de Wolf using strong direct product theorems. Unfortunately, this method is limited to proving \emph{output-oblivious} lower bounds (i.e. the lower bounds only apply to algorithms with a non-adaptive output schedule). While other methods have followed that can show fully general quantum time-space tradeoff lower bounds for some problems, they have yielded nothing beyond output-oblivious lower bounds for sorting.   We prove the first fully general quantum time-space tradeoff lower bound for sorting.

    \begin{sloppypar}
    We do so by introducing a novel method based on the noise operator to add to the analysis toolkit for proving exponentially small upper bounds on the success probability of quantum query algorithms. By combining our resulting \emph{quantum noise stability bound} with quantum recording query methods, we prove an $\Omega(n^{4/3} (\log \log n)/(S^{1/3} \log n))$ lower bound on the number of queries that a fully general quantum algorithm with at most $S$ qubits of memory requires to sort $n$ numbers from $[n^2]$.
    \end{sloppypar}

    A key feature of our noise operator method is that applying it involves purely classical arguments, which makes it particularly simple to use. We use it to prove a fully general quantum time-space tradeoff lower bound that extends one of the most general lower bounds known for classical computation. This quantum extension seems out of reach by other methods.

    Using the noise operator method alone we also prove that, for any strongly universal (pairwise independent) hash function family $\hash$ from $n$ bits to $m$ bits, almost all hash functions in $\hash$ require a quantum algorithm with at most $S$ qubits of memory to make $\Omega(nm/S)$ queries to an input $x$ in order to compute $h(x)$, even with very small success probability. Such a lower bound was not previously known even for classical algorithms. Previously, Mansour, Nisan, and Tiwari had shown the same quantitative bound for algorithms that have as input both a description of the hash function and its input using their hash mixing lemma. Our noise operator method allows us to use a related but simpler property of hash functions to prove our lower bounds.

    As a consequence of this lower bound we derive optimal time-space tradeoff lower bounds for quantum algorithms computing integer multiplication.
\end{abstract}

\section{Introduction}

Running time and the amount of memory (space) that algorithms use are among the most important
resources used in classical computation.
The challenges we face in quantum computation make both of these resources paramount:  
We need quantum computations to be fast, since keeping them isolated from outside entanglements gets harder
as time progresses, and we need the number of qubits that we maintain in simultaneous superpositions to be small,  because each logical qubit corresponds to such a large number of physical qubits and because maintaining large numbers of them is expensive.
Because of these costs, it is even more critical for quantum computation to optimize both of these resources and understand what the possible tradeoffs are between them. 

Methods for proving time-space tradeoff lower bounds for quantum computation have been known for roughly twenty years beginning with the work of Klauck, \v{S}palek and de Wolf~\cite{KSdW07}, who showed how to
translate the major paradigm for proving classical lower bounds due to Borodin and Cook~\cite{BC82} to quantum computation.   The Borodin-Cook method is based on proving time lower bounds for query algorithms that
have only an exponentially small probability of 
success. 
In classical computation, the method is able to focus on the success probability on individual computation paths, since these paths cannot interfere with one another.
This is obviously not possible in analyzing quantum query complexity. 
Klauck, \v{S}palek, and de Wolf showed that the same connection to query complexity with exponentially small success probability holds for quantum computation, but they needed a different analysis technique to prove exponentially small success probability of quantum query algorithms.

The technique that Klauck, \v{S}palek and
de Wolf used to obtain lower bounds for quantum
query algorithms with exponentially small success probability was \emph{strong direct product theorems}.
They proved such theorems for functions such as OR to obtain time-space tradeoff lower bounds by showing how to embed direct-product problems
into the outputs of sorting and
Boolean matrix product to obtain the first time-space tradeoff lower bounds for quantum computation.
In particular, they proved the well-known lower bound of $T=\Omega(n^{3/2}/\sqrt{S})$ for sorting which 
is tight up to a polylog factor in the quantum query model~\cite{DBLP:conf/stoc/Klauck03}.

However, this method has a drawback that is not so well known:   In order to prove a lower bound, one needs to know the output positions in which to embed the direct-product problems, independent of the input. 
It follows that lower bounds proven by this method only apply to \emph{output-oblivious} quantum algorithms that are forced to commit to a schedule for producing their outputs that is oblivious to the input.
In particular, the Klauck, {\v{S}}palek, and  de Wolf~\cite{KSdW07} sorting lower bound requires that the outputs always be produced in ascending rank order with fixed output time-steps.
This fixed order requirement for output production, though not the specific time steps, is similar to the optimal classical lower bound of Beame \cite{DBLP:journals/siamcomp/Beame91}, but is more conditional than a full general classical time-space
tradeoff of $T=\Omega(n^2/(S \log n))$ of \cite{BC82} which does not have this requirement.

A fully general quantum time-space tradeoff lower bound for sorting should allow algorithms to adaptively choose the order in which they wish to produce outputs based on the results of their queries.
In contrast, an output-oblivious sorting algorithm might have to, for example, commit to the value for the smallest element even if it has only observed values it expects to be high ranking.
Hamoudi and Magniez~\cite{HM23} re-proved the same $T=\Omega(n^{3/2}/\sqrt{S})$  ascending-rank output-oblivious lower bound for sorting, and Beame and Kornerup~\cite{bk:cumulative-journal} showed how to extend this lower bound to arbitrary output-oblivious orderings.
Nonetheless, the problem of proving any non-trivial fully general quantum time-space tradeoff lower bound for sorting has remained wide open!

To resolve this it would be natural to try to extend the classical fully general lower bound of~\cite{BC82}, but one first needs a lower bound method that is not inherently limited
to output-oblivious algorithms.
Hamoudi and Magniez~\cite{HM23} were able to prove the first time-space tradeoff lower bounds for general quantum algorithms by introducing another method, based on the \emph{recording-query basis} (a.k.a. compressed oracle) approach of Zhandry~\cite{zhandry}.
This method lets one express the results of
quantum query algorithms running for $t$ steps as
superpositions over basis states that are defined by
a set of at most $t$ input coordinates where the values
are exactly known and the remaining input coordinates are the same as in the original input distribution (denoted by $\perp$ in the basis).   While these basis states have something in common with
classical computation paths of length $t$, they occur in superposition and must be considered together.
The general approach in~\cite{HM23} is to identify a measure of
progress for recording-query basis states based on their
known coordinates, prove inductively that the total amplitude on basis states with substantial progress is exponentially unlikely, and then to prove that producing correct answers on combinations of basis states without
substantial progress is exponentially unlikely.   
They used this to prove a general quantum time-space tradeoff lower bound
for the problem of finding multiple collisions in an input.  
Unfortunately, as noted above, when applying this method to sorting, Hamoudi and Magniez were only able to prove lower bounds for output-oblivious algorithms.  

Beame, Kornerup, and Whitmeyer~\cite{BKW26} introduced an 
alternative method to obtain fully general time-space tradeoff lower bounds for a wide variety of linear algebra problems, many of them optimal, using the recording-query basis approach,
by \emph{bucketing} recording-query basis states based
on shared coordinates with value $\perp$, indicating that no information about those coordinates has been learned.   
Strong lower bounds for individual buckets follow from similar ideas to classical lower bounds that do not know anything about those shared input coordinates, but the states in different buckets are not mutually orthogonal so the method requires that the total number of different buckets is small.
This number will only be small if the defining sets of
shared $\perp$-coordinates are small; this is sufficient
for the lower bounds for linear-algebra problems.
However, the fully general classical lower bound for sorting in \cite{BC82} requires that a constant fraction of the input
coordinates have not been queried in order to prove
exponentially small success probability, which rules
out the bucketing approach also.

In this paper we find a way to make a quantum version of the classical time-space tradeoff lower bound for sorting in \cite{BC82}.
The result is the first non-trivial fully general quantum time-space tradeoff lower bound for sorting.
To do this, we combine the recording-query progress
approach in \cite{HM23} with a novel approach that takes advantage of the \emph{noise operator} from the analysis of Boolean functions (see \cite{ODonnell14} for a detailed overview of the topic).
The result is a fully general $T=\Omega(n^{4/3} (\log \log n)/(S^{1/3} \log n))$ quantum query lower bound for sorting.

Once established, this \emph{quantum noise stability bound} argument is surprisingly general and easy to apply; it lets us upper bound the success probability of any quantum query algorithm based on the probability that applying a small amount of noise to \emph{typical} inputs will cause the output to change.
Applying the method only requires purely classical quantities.
While the method does not result in tight lower bounds for all functions, it is quite natural to apply and is capable of producing non-trivial lower bounds even for small success probabilities.
This makes our method a natural first trick to apply when trying to prove new quantum time-space tradeoff lower bounds.

Using our noise operator, we prove the first non-trivial quantum time-space tradeoff lower bounds for evaluating arbitrary universal hash functions, matching  the best known classical lower bounds for the problem.
For every \emph{strongly universal hash function family} (i.e. pairwise independent) from $n$ bits to $m$ bits, we are able to prove a fully general quantum lower bound of $T = \Omega(nm/S)$.
This lower bound  
addresses an open question in \cite{BKW26}, where the authors conjectured that the classical time-space tradeoff lower bound for this problem could be extended to quantum computation.
The classical lower bound for this problem in \cite{DBLP:journals/tcs/MansourNT93} uses a Hash Mixing Lemma, which essentially shows that as long as $\Omega(n)$ bits of the hash function and its input are unspecified, the output values are nearly uniform.
Neither the recording query progress method in \cite{HM23} nor the bucketing method of \cite{BKW26} seem to be able to take advantage of the Hash Mixing Lemma, as there is no natural notion of \emph{unusual progress} for this problem and, as was the case for sorting, the number of shared $\perp$ coordinates needed to define the buckets would need to be too large to guarantee that the number of buckets is small.

We prove an alternative kind of hash mixing lemma which implies that for every choice of $k$, almost all functions in a strongly universal hash function family have the property that almost all inputs to that hash function are extremely sensitive to noise on all possible choices of $k$ output indices.
This purely classical property is sufficient to derive the $T=\Omega(mn/S)$ quantum time-space tradeoff lower bound for universal hashing with our noise method.
As an immediate consequence of this bound, we can improve the quantum time-space tradeoff lower bound of $\Omega(n^2/(S \log n))$ for integer multiplication in \cite{BKW26} by a factor of $\log(n)$.

Our noise operator method lets us prove even more for both the classical and quantum computation of strongly universal hash functions:  Unlike the result of~\cite{DBLP:journals/tcs/MansourNT93} which only proves the collective hardness of
computing a strongly universal hash function family $\hash$, which only holds when the hash function $h$ is part of the input, our results show that
almost all hash functions $h\in \hash$ \emph{individually} require
the same $T=\Omega(nm/S)$ lower bound 
to compute.

\begin{sloppypar}
\paragraph{Road-map}
In \cref{sec:noise} we introduce and prove our \emph{quantum noise stability bound} (\cref{cor:noise-stability}) and explain how it can be used in proving quantum time-space tradeoff lower bounds.
In \cref{sec:hashing} we prove our $T=\Omega(nm / S)$ quantum time-space tradeoff lower bounds for computing almost all members of strongly universal hash function families.
\cref{sec:sorting} contains our $T=\Omega(n^{4/3} (\log \log n)/(S^{1/3} \log n))$ fully general quantum time-space tradeoff lower bound for sorting, 
the proof of which follows by combining recording query methods to derive
properties that we show
in \cref{sec:spreadrecord,sec:nonspread}
with properties of the noise operator that we
show in 
\cref{sec:noisysort}.
Finally in \cref{sec:other-bounds} we give some intuition about when our quantum noise stability bound is useful for proving quantum time-space tradeoff lower bounds and show how it can be applied to give a significantly streamlined proof of a version of the matrix-vector product lower bound for inputs from $\inputspace^n\subseteq\mathbb{F}^n$ with $|\inputspace|$ constant 
originally proven in \cite{BKW26}.
\end{sloppypar}

\section{Preliminaries}
We start with some basic facts that we will need in this paper.

\begin{proposition}[Perron-Frobenius]\label{thm:perron-frobenius-row-sums}
      Let $M\in\mathbb R^{N\times N}$ be symmetric and entrywise
        nonnegative, and let
        $r_i=\sum_{j=1}^N M_{ij}$.
        Then
       $\min_{i\in[N]}r_i
            \le
            \lambda_{\max}(M)
            \le
            \max_{i\in[N]}r_i$.
        In particular, if every row sum equals $r$, then
        $\lambda_{\max}(M)=r$.
\end{proposition}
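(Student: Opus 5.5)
Since $M$ is real symmetric, I will work with the Rayleigh quotient characterization $\lambda_{\max}(M)=\max_{x\neq 0} x^\top M x/x^\top x$. Each inequality then comes from choosing an appropriate test vector or bounding an arbitrary eigenvector. No earlier results are needed beyond the spectral theorem for real symmetric matrices.

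For the lower bound, take $x=\mathbf{1}$, the all-ones vector. Then $\mathbf{1}^\top M\mathbf{1}=\sum_i r_i \ge N\min_i r_i$ and $\mathbf{1}^\top\mathbf{1}=N$. Hence $\lambda_{\max}(M)\ge \frac1N\sum_i r_i\ge \min_i r_i$.

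For the upper bound, let $v$ be a (real) eigenvector for $\lambda_{\max}(M)$ and pick an index $i$ with $|v_i|=\max_j|v_j|>0$. Reading off the $i$-th coordinate of $Mv=\lambda_{\max}v$ and using nonnegativity of the entries gives
\[
|\lambda_{\max}|\,|v_i| = \Bigl|\sum_j M_{ij}v_j\Bigr| \le \sum_j M_{ij}|v_j| \le r_i |v_i| \le \max_k r_k\,|v_i|,
\]
so $\lambda_{\max}(M)\le|\lambda_{\max}(M)|\le\max_k r_k$. This is the usual bound of the spectral radius by the $\infty$-norm. Entrywise nonnegativity is used exactly in the step $|\sum_j M_{ij}v_j|\le\sum_j M_{ij}|v_j|$. (In fact the lower bound also holds without nonnegativity, since $\mathbf{1}^\top M\mathbf{1}=\sum_i r_i$ for any matrix.)

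When all row sums equal $r$, the two bounds coincide and force $\lambda_{\max}(M)=r$. Equivalently, $M\mathbf{1}=r\mathbf{1}$ exhibits $r$ as an eigenvalue. I do not expect any step to be a real obstacle. The only point needing care is taking $v$ real, which is possible because $M$ is real symmetric.
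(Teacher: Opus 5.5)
Your proof is correct. The paper gives no proof of this proposition. It simply labels it ``Perron--Frobenius'' and treats it as a standard fact. The intended justification is presumably the classical row-sum bounds $\min_i r_i\le\rho(M)\le\max_i r_i$ for entrywise nonnegative matrices. These are combined with the Perron--Frobenius fact that, for a nonnegative matrix, the spectral radius $\rho(M)$ is itself an eigenvalue, so that $\rho(M)=\lambda_{\max}(M)$ once $M$ is also symmetric.

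Your route avoids Perron--Frobenius entirely:
\begin{itemize}
\item Symmetry gives the Rayleigh quotient. Evaluating it at $\mathbf{1}$ yields the stronger average-row-sum lower bound, with no sign assumption on the entries.
\item The maximal-coordinate argument works for every eigenvalue, not just $\lambda_{\max}$. So it actually shows $\rho(M)\le\max_i r_i$.
\end{itemize}

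What each approach buys:
\begin{itemize}
\item The Perron--Frobenius route also covers nonsymmetric nonnegative matrices, for which $\lambda_{\max}$ must be read as $\rho(M)$.
\item Your argument is elementary and self-contained. It also proves exactly what the paper uses. In the proof of \cref{thm:noise-events}, only the upper bound is needed, as an operator-norm bound on the symmetric nonnegative matrix $\Pi_\event\fouriernoise^A_p\Pi_\event$, whose rows restricted to $\event$ have sums $\Pr[\noise^A_p(x)\in\event]$. For symmetric matrices the operator norm equals $\rho(M)$, which your argument bounds directly, without even invoking positive semidefiniteness.
\end{itemize}

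One small inaccuracy in your commentary: nonnegativity is not used only in $|\sum_j M_{ij}v_j|\le\sum_j M_{ij}|v_j|$. It is also needed in the next step, $\sum_j M_{ij}|v_j|\le r_i|v_i|$.
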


\begin{proposition}[Chernoff lower bound]\label{thm:chernoff-lower-tail}
    Let $R\sim\operatorname{Binomial}(n,q)$ and $\mu=nq$. For every
        $0\le\delta\le 1$,
        $
            \Pr\s{R\le(1-\delta)\mu}
            \le
            e^{-\delta^2\mu/2}.
        $
        In particular,
        $
            \Pr\s{R\le \mu/2}
            \le
            e^{-\mu/8}$.
\end{proposition}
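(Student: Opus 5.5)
The statement to prove is the Chernoff lower-tail bound (\cref{thm:chernoff-lower-tail}): for $R\sim\operatorname{Binomial}(n,q)$ with $\mu=nq$ and $0\le\delta\le1$, $\Pr[R\le(1-\delta)\mu]\le e^{-\delta^2\mu/2}$. I will use the standard exponential moment method applied to $-R$. The cases $\delta=0$ and $\mu=0$ are trivial, so assume $\delta>0$ and $\mu>0$.

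\textbf{Markov step.} Fix $\lambda>0$. Since $x\mapsto e^{-\lambda x}$ is decreasing, Markov's inequality gives
\[
\Pr[R\le(1-\delta)\mu]=\Pr\bigl[e^{-\lambda R}\ge e^{-\lambda(1-\delta)\mu}\bigr]\le e^{\lambda(1-\delta)\mu}\,\E[e^{-\lambda R}].
\]

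\textbf{Moment generating function.} Write $R=\sum_i X_i$ with independent $X_i\sim\mathrm{Bernoulli}(q)$. Then
\[
\E[e^{-\lambda R}]=(1-q+qe^{-\lambda})^n=(1+q(e^{-\lambda}-1))^n\le \exp\bigl(\mu(e^{-\lambda}-1)\bigr),
\]
using $1+x\le e^x$. Choose $\lambda=-\ln(1-\delta)$; this requires $\delta<1$, and the case $\delta=1$ is handled below. The bound becomes
\[
\Pr[R\le(1-\delta)\mu]\le \exp\bigl(-\mu[\delta+(1-\delta)\ln(1-\delta)]\bigr).
\]

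\textbf{Calculus inequality.} It remains to show $f(\delta)=\delta+(1-\delta)\ln(1-\delta)-\delta^2/2\ge0$ on $[0,1)$.
\begin{itemize}
\item $f(0)=0$.
\item $f'(\delta)=-\ln(1-\delta)-\delta$.
\item $f'(0)=0$.
\item $f''(\delta)=\frac{1}{1-\delta}-1\ge0$.
\end{itemize}
So $f'\ge0$ and hence $f\ge0$. This step is the only real computation and is routine.

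\textbf{The case $\delta=1$.} Here $\Pr[R\le0]=(1-q)^n\le e^{-\mu}\le e^{-\mu/2}$. Alternatively, take the limit $\delta\to1^-$, noting that $(1-\delta)\ln(1-\delta)\to0$.

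\textbf{The particular case.} Setting $\delta=1/2$ gives $\Pr[R\le\mu/2]\le e^{-\mu/8}$.
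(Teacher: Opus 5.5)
Your proof is correct: it is the standard exponential-moment argument (Markov applied to $e^{-\lambda R}$ with $\lambda=-\ln(1-\delta)$, then $1+x\le e^x$ and the convexity check $\delta+(1-\delta)\ln(1-\delta)\ge\delta^2/2$, with $\delta=1$ handled separately). The paper gives no proof of its own and simply quotes this as a standard preliminary fact, so your derivation supplies exactly the argument it relies on.
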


\subsection{Unitary quantum query circuits}\label{sec:unitary-circuit}
We are interested in quantum circuits that compute functions $f:\domain^n \to \range^m$ or relations $\tau \subseteq  \domain^n \times \range^{\subseteq [m]}$ corresponding to computing some number of coordinates of $f$.
We let $\domainsize = \abs{\domain}$ and when $\domain \neq [d]$, we assume the existence of a bijective map $v$ between $\domain$ and $[\domainsize]$ which gives a total ordering on $\domain$.
To simplify presentation, we will use $x$ in lieu of $v(x)$ whenever the intended meaning is clear from context.
Without loss of generality, we will think of the state of a quantum algorithm as being comprised of a query register $\queryreg$, a phase register $\phasereg$, a workspace register $\workreg$, and an input register $\inputreg$.
We define input oracle $\oracle$ that maps each basis state $\oracle \ket{i,p,w}_{\queryreg,\phasereg,\workreg}\ket{x}_\inputreg = \omega_\domainsize^{p x_i} \ket{i,p,w}\ket{x}$ where $i,p,w,x$ are the values stored in $\queryreg,\phasereg,\workreg,\inputreg$ respectively and $\omega_\domainsize$ is some fixed $\domainsize$'th root of unity.
We will drop the register subscripts when they are obvious from context.

A $T$ query quantum circuit $\circuit$ is then defined using $T+1$ unitaries $U_0, \ldots, U_T$ acting only on $\queryreg, \phasereg, \workreg$.
The state of $\circuit$ after $t \leq T$ queries on input $x$ is given by the pure state
\begin{displaymath}
    \ket{\psi_t} = \ket{\psi^x_t}_{\queryreg,\phasereg,\workreg}\ket{x}_\inputreg = (U_t \otimes \identity) \oracle (U_{t-1} \otimes \identity) \oracle \ldots (U_1 \otimes \identity) \oracle (U_0 \otimes \identity) \ket{0}_{\queryreg \phasereg \workreg}\ket{x}_{\inputreg}.
\end{displaymath}
For any distribution $\dist$ over domain $\inputspace$, let $\ket{\dist} = \sum_{x \in \inputspace} \sqrt{\dist(x)}\ket{x}$.
Since each $U_t$ and $\oracle$ behave as the identity on $\inputreg$, we can simulate running $\circuit$ on input distribution $\dist$ by initializing $\inputreg$ to $\ket{\dist}$.
Thus after $t$ queries the state of our algorithm on input distribution $\dist$ is given by the pure state
\begin{displaymath}
    \ket{\psi_t} = \sum_{x \in \inputspace} \sqrt{\dist(x)}\ket{\psi_t^x}\ket{x} = (U_t \otimes \identity) \oracle (U_{t-1} \otimes \identity) \oracle \ldots (U_1 \otimes \identity) \oracle (U_0 \otimes \identity) \ket{0}_{\queryreg \phasereg \workreg}\ket{\dist}_{\inputreg}.
\end{displaymath}
After final query $T$, $\inputreg$ is measured in the standard basis to sample the input $x \sim \dist$ and the final state of the algorithm collapses to state $\ket{\psi^x_T}$.
The work register $\workreg$ of $\ket{\psi^x_T}$ is then measured in the standard basis, producing classical bit-string $w$.
The output of $\circuit$ is then given by some input independent post processing function $\outputmapfn(w)$.
There are two notions of output for circuits that we will be using in this paper.

\paragraph{Computing (restricted outputs of) functions}
We start by formally defining restricted outputs of functions.
\begin{definition}
    Let $f: \domain^n \to \range^m$ be a function.
    For any $S \subseteq [m]$ we use $f_S$ to denote $f$ restricted to its output coordinates with indices in $S$.
\end{definition}
Since the set $S$ is fixed, computing the restricted outputs $S$ of a function $g: \domain^n \to \range^m$ is the same as computing the function $f: \domain^n \to \range^S$ given by $f(x) = g_S(x)$. 

When $\circuit$ is tasked to produce the full output (or restricted outputs) of a function, we say that it has correctly computed input $x$ if $q(w) = f(x)$.
Then we can define $\Pi_{\outputmapfn(w)} = \sum_{x \in f^{-1}(\outputmapfn(w))} \op{x}{x}$ and $\Pi_{\text{succ}} = \sum_{w} \identity_\queryreg \otimes \identity_\phasereg \otimes \left(\op{w}{w}\right)_{\workreg} \otimes (\Pi_{\outputmapfn(w)})_{\inputreg} $ and the success probability of $\circuit$ on input distribution $\dist$ is given by $\norm{\Pi_{\text{succ}}\ket{\psi_T}}^2$.

\paragraph{Computing partial assignments}
We start by formally defining partial assignments.
\begin{definition}
    Let $f: \domain^n \to \range^m$ be a function.
    We say that $\tau \in \range^{\subseteq [m]}$ is a partial assignment that agrees with $f(x)$ (denoted $f(x) || \tau$) if $\tau_i = f_i(x)$ on every point where $\tau$ is defined.
    The size of $\tau$ is the number of points on which it is defined.
\end{definition}
Informally, one can think of partial assignments as a generalization of restricted outputs where the choice of set $S$ is not fixed.

When $\circuit$ is tasked to produce a partial assignment to $k$ outputs of $f$, we say that it has correctly produced such a partial assignment if $q(w)$ is some vector $\tau \in \range^S$ for a set $S$ of size $k$ such that $f(x) || \tau$.
We can define $\Pi_{q(w)} = \Pi_{\tau} = \sum_{x \text{ s.t. } f(x) || \tau} \op{x}{x}$ and $\Pi_{\text{succ}} = \sum_{w} \identity_\queryreg \otimes \identity_\phasereg \otimes \left(\op{w}{w}\right)_{\workreg} \otimes (\Pi_{\outputmapfn(w)})_{\inputreg}$.
Then the success probability of $\circuit$ on input distribution $\dist$ is given by $\norm{\Pi_{\text{succ}}\ket{\psi_T}}^2$.

\subsection{Space bounded quantum circuits}
Typically the deferred measurement principle~(\cite{DBLP:books/daglib/0046438}) lets one assume without loss of generality that all measurements are postponed to after the final step of a quantum algorithm.
However, doing this comes at a cost in terms of the space cost of the circuit, as each `delayed measurement' must be stored in a fresh ancilla qubit.
In particular, as is common in other works (both classical and quantum) on time-space tradeoffs (e.g. \cite{BC82,Abr91,DBLP:journals/tcs/MansourNT93,KSdW07,HM23,bk:cumulative-journal,BKW26}), it is useful to let algorithms write parts of the output to a write-only tape which does not count against their space bound in the middle of the computation.

To account for both of these factors, we allow space bounded quantum algorithms to perform intermediate measurements.
We take note of two subtly different models for these intermediate measurements.
An \emph{output oblivious} quantum circuit, as first used in \cite{KSdW07}, can be formalized as having $m$ fixed time-steps $t$ where fixed output index $i_t$ must be produced.
After applying $U_t$ a measurement is applied on the first $\ceil{\log_2 \abs{\range}}$ qubits of $\workreg$ in the standard basis, obtaining value $w$ and committing to $\outputmapfn(w, i_t)$ as the $i_t$'th output.
A \emph{fully general} quantum circuit, as first specified in \cite{HM23}, allows the algorithm to be adaptive with when it chooses to produce each output.
This is done by adding a `flag' measurement gate in the standard basis on the first qubit of $\workreg$ after applying each $U_t$.
If that measurement gives value $1$, then a second conditional measurement gate will measure the next $\ceil{\log_2 m} + \ceil{\log_2\abs{\range}}$ qubits of $\workreg$ in the standard basis, obtaining value $w$ and committing to $\outputmapfn(w)$ as a partial assignment to $1$ output of $f$.
For full generality, we allow both output oblivious and non output oblivious circuits to include additional measurement gates as a means to reset qubits to $\ket{0}$ without requiring uncomputation.
Since we make no distinction between quantum and quantum + classical space in this paper, we assume that the state of $\circuit$ is entirely stored in qubits.

It is worth noting that this model is much more messy than our unitary model in \cref{sec:unitary-circuit}.
Intermediate measurements make our algorithm's state mixed and might interfere with many of the lower bound tools we use in this paper.
Fortunately, we are able to have the best of both worlds.
We prove our time-space tradeoff lower bounds via key lemmas about query algorithms producing partial outputs that do not require bounded space.
Since the deferred measurement principle gives us that intermediate measurements could not possibly help such algorithms, we can prove these lemmas in the unitary model and then directly apply them to our bounded space models.

For the output oblivious model, these lemmas prove exponentially small upper bounds (in $k$) on the probability of computing $f_S$ for fixed choices of $S \subseteq [m]$ with $\abs{S} = k$ after $h(n,k)$ quantum queries to the input.
For the non output oblivious model, these lemmas instead prove exponentially small upper bounds (in $k$) on the probability of producing a partial assignment to $k$ outputs of $f$ after $h(n,k)$ quantum queries to the input.

A key ingredient in converting these lemmas into time-space tradeoff lower bounds is the following result implicitly proven in \cite{Aar05}, first applied to this context in \cite{KSdW07}, and fully formalized and proven as a stand-alone statement in \cite{BKW26}.
\begin{proposition}[\cite{Aar05,KSdW07,BKW26}]\label{thm:quantum-union-bound}
    Let $\calC$ be a quantum circuit, $\rho$ be an $S$-qubit (possibly mixed) quantum state, and $\maxmix$ be the maximally mixed state on $S$ qubits.
    If $\calC$ produces output $z$ on input state $\rho$ with probability $p$, then $\calC$ produces output $z$ on input state $\maxmix$ with probability at least $p 2^{-S}$.
\end{proposition}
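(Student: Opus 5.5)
The key observation is that the maximally mixed state on $S$ qubits dominates every $S$-qubit state up to a factor $2^S$. Concretely, for any density operator $\rho$ on $S$ qubits we have $\rho \preceq \identity = 2^S \maxmix$ in the Loewner order, since $\rho$ is positive semidefinite with trace $1$ and hence all its eigenvalues lie in $[0,1]$. Equivalently, $2^{-S}\rho \preceq \maxmix$. The plan is to push this operator inequality through the circuit and the final measurement using only linearity and positivity.

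First I fix the model of $\calC$ acting on an input state. Its behavior, including any unitaries, queries, ancilla initialization and intermediate or final measurements, is described by a quantum instrument. For the fixed outcome $z$, let $\Lambda_z$ be the completely positive map that sends the input state to the unnormalized post-measurement state associated with outcome $z$. Then the probability of outcome $z$ on input $\sigma$ is $\mathrm{tr}(\Lambda_z(\sigma))$. This map is linear in $\sigma$. Alternatively, via the Heisenberg picture, there is a POVM element $0 \preceq E_z \preceq \identity$ with $\Pr[z \mid \sigma] = \mathrm{tr}(E_z \sigma)$. I will use the POVM form, since it is cleanest. If $\calC$ also acts on other fixed registers (for instance a fixed input register or fresh ancillas), these are absorbed into $E_z$ by tensoring with the fixed state and taking a partial trace. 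Positivity of $E_z$ on the $S$-qubit system is preserved under this operation.

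Then the computation is a single line:
\[
\mathrm{tr}(E_z \maxmix) = \mathrm{tr}\bigl(E_z (\maxmix - 2^{-S}\rho)\bigr) + 2^{-S}\,\mathrm{tr}(E_z \rho) \ge 0 + 2^{-S} p .
\]
The first term is nonnegative because it is the trace of a product of two positive semidefinite operators. An alternative, more intuitive route is to write $\maxmix$ as a mixture in which $\rho$ appears with weight $2^{-S}$. Spectrally decompose $\rho = \sum_i \lambda_i \op{v_i}{v_i}$ and complete $\{v_i\}$ to an orthonormal basis $\{v_j\}$ of all $2^S$ dimensions. Then $\maxmix = 2^{-S}\sum_j \op{v_j}{v_j}$. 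Each $\op{v_i}{v_i}$ with $\lambda_i>0$ contributes at least $2^{-S}\lambda_i \Pr[z\mid v_i]$, and summing these contributions gives at least $2^{-S}p$.

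I expect no real obstacle here. The only point needing care is the modeling claim that an arbitrary circuit with intermediate measurements and adaptive classical control still yields outcome probabilities of the form $\mathrm{tr}(E_z\sigma)$ with $E_z \succeq 0$. This follows from the deferred measurement principle, or directly from the composition of instruments, and I would state it explicitly.
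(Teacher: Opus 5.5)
Your proof is correct. The inequality $2^{-S}\rho \preceq \maxmix$ (equivalently, writing $\maxmix = 2^{-S}\rho + (1-2^{-S})\sigma$ for a valid density operator $\sigma$), pushed through linearity and positivity of the POVM element $E_z$, is the standard argument behind the sources the paper cites. The paper gives no proof of its own for this proposition and only cites that prior work, so there is nothing further to compare.
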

This proposition lets us apply our key lemmas to blocks of a quantum algorithm that are initialized with $S$ qubits of input-dependent advice resulting from prior queries.

\subsection{Recording Queries}

    A query algorithm has a query register $\queryreg$, a phase register
    $\phasereg$, and a workspace register $\workreg$, with computational-basis
    states
    \[
        \ket{i,p,w}_{\queryreg\phasereg\workreg},
        \qquad
        i\in[n],\quad p\in\Z_m.
    \]
    The recording database register $\inputreg$ has basis states
    \[
        \ket {z}_\inputreg
        =
        \bigotimes_{i=1}^n\ket{z_i}
        \qquad
        z\in(\Z_m\cup\{\bot\})^n.
    \]
We describe the behavior of the recording query operator for a single input register first. 
For $w\in \Z_m$ recall that 
    \[
        \ket{\hat w}
        =
        \frac1{\sqrt m}\sum_{x\in [m]}\omega_m^{w\,x}\ket x.
    \]
    Let $\mathsf S$ swap $\ket\bot$ and $\ket{\hat0}$ and fix their
    orthogonal complement. 
    For each phase $p\in\mathbb Z_m$, define
    \[
        \oracle_p\ket z=\omega_m^{p\, z}\ket z,
        \qquad
        \oracle_p\ket\bot=\ket\bot,
        \qquad
        \record_p=\mathsf S\oracle_p\mathsf S.
    \]
    Thus $\record_0=\identity$, while for $p\ne0$,
    \[
        \record_p\ket\bot=\ket{\hat p},
        \qquad
        \record_p\ket{\hat w}
        =
        \begin{cases}
            \ket{\hat0},         & w=0,             \\
            \ket\bot,            & w=-p,            \\
            \ket{\widehat{w+p}}, & w\notin\{0,-p\},
        \end{cases}
    \]
    with arithmetic in $\Z_m$.

The extension to all $n$ input registers is straightforward as follows:
The full oracle $\oracle$ in the recording  basis is given by
$\oracle=\sum_{i\in[n]}\,
        \sum_{p\in\Z_m}
        (\ket{i,p}\!\bra{i,p})
        \otimes \identity_{\workreg}
        \otimes \oracle_{i,p}$
where $\oracle_{i,p}=\identity^{\otimes (i-1)}\otimes \mathcal{O}_p\otimes \identity^{\otimes (n-i)}$.
Write  $\record_{i,p}=\mathsf S^{\otimes n} \mathcal{O}_{i,p} \mathsf S^{\otimes n}$.
Then, letting $\mathcal{S}=\identity_{\queryreg\phasereg\workreg}\otimes \mathsf{S}^{\otimes n}$, we have
the full recording query operator
$\record=\mathcal{S}\mathcal{O}\mathcal{S}$
and hence
    \[
        \record
        =
        \sum_{i\in[n]}\,
        \sum_{p\in\Z_m}
        (\ket{i,p}\!\bra{i,p})
        \otimes \identity_{\workreg}
        \otimes \record_{i,p}.
    \]

\begin{proposition}[Lemma 4.1 in \cite{HM23}]
\label{prop:recording-query}
If the recording-query operator $\mathcal{R}$  is applied to a recording-query
basis state $\ket{i,p,w}\ket{z_1,\ldots,z_n}$  where $p \ne 0$ then the register $\ket{z_i}$ is mapped to
\begin{displaymath}
    \begin{cases}
        \sum\limits_{y\in [m]}\frac{\omega_m^{p\, y}}{\sqrt{m}}\ket{y}&\textrm{if }z_i=\bot\\
        (1-\frac{2}{m})\,\omega_m^{p\, z_i}\ket{z_i} + \frac{1}{m} \ket{z_i} +  \frac{\omega_m^{p\, z_i}}{\sqrt{m}} \ket{\bot} + \hspace{-5pt} \sum\limits_{y\in [m]\setminus \{z_i\}} \hspace{-5pt} \frac{1-\omega_m^{p\, y}-\omega_m^{p\, z_i}}{m}\ket{y}&\textrm{otherwise.}
    \end{cases}
\end{displaymath}
If $p = 0$ then the register remains unchanged.
\end{proposition}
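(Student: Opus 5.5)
This is a direct single-register computation: expand $\record_p=\mathsf S\oracle_p\mathsf S$ on the basis state in question. Only register $i$ is touched, since $\mathsf S^{\otimes n}$ acts on the other registers and then cancels ($\mathsf S^2=\identity$, and $\oracle_{i,p}$ is the identity there). The case $p=0$ is immediate because $\oracle_0=\identity$, so $\record_0=\mathsf S^2=\identity$. For $p\neq 0$ there are two cases.

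\textbf{Case $z_i=\bot$.} Here $\mathsf S\ket\bot=\ket{\hat 0}=\frac1{\sqrt m}\sum_y\ket y$. Applying $\oracle_p$ gives $\frac1{\sqrt m}\sum_y\omega_m^{py}\ket y$, which equals $\ket{\hat p}$. Since $p\neq0$, $\ket{\hat p}$ is orthogonal to both $\ket\bot$ and $\ket{\hat0}$, so the final $\mathsf S$ fixes it. This gives the first line of the statement.

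\textbf{Case $z_i=z\in[m]$.} Write $\mathsf S=\identity-\op{\bot}{\bot}-\op{\hat0}{\hat0}+\op{\bot}{\hat0}+\op{\hat0}{\bot}$.
\begin{enumerate}[1.]
\item Compute $\mathsf S\ket z$. Using $\langle\hat0|z\rangle=1/\sqrt m$ and $\langle\bot|z\rangle=0$,
\[
\mathsf S\ket z=\ket z-\tfrac1{\sqrt m}\ket{\hat0}+\tfrac1{\sqrt m}\ket\bot .
\]
\item Apply $\oracle_p$. It fixes $\ket\bot$, multiplies $\ket z$ by $\omega_m^{pz}$, and sends $\ket{\hat0}$ to $\ket{\hat p}$. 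The result is
\[
\omega_m^{pz}\ket z-\tfrac1{\sqrt m}\ket{\hat p}+\tfrac1{\sqrt m}\ket\bot .
\]
\item Apply $\mathsf S$ termwise. Both $\ket{\hat p}$ and $\ket\bot$ are handled as before. For $\ket z$ we reuse step 1, multiplied by $\omega_m^{pz}$. This gives
\[
\omega_m^{pz}\Bigl(\ket z-\tfrac1{\sqrt m}\ket{\hat0}+\tfrac1{\sqrt m}\ket\bot\Bigr)-\tfrac1{\sqrt m}\ket{\hat p}+\tfrac1{\sqrt m}\ket{\hat 0}.
\]
\item Expand $\ket{\hat0}$ and $\ket{\hat p}$ in the standard basis and collect coefficients.
\begin{itemize}
\item The $\ket\bot$ coefficient is $\omega_m^{pz}/\sqrt m$.
\item For $y\neq z$, the coefficient of $\ket y$ is $(1-\omega_m^{pz}-\omega_m^{py})/m$.
\item The $\ket z$ coefficient is $\omega_m^{pz}-\omega_m^{pz}/m-\omega_m^{pz}/m+1/m=(1-\frac2m)\omega_m^{pz}+\frac1m$.
\end{itemize}
\end{enumerate}
These match the stated formula exactly. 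The only real care needed is sign and conjugation conventions: the oracle maps $\ket{\hat0}\mapsto\ket{\hat p}$ using the paper's convention $\ket{\hat w}=\frac1{\sqrt m}\sum_x\omega_m^{wx}\ket x$.
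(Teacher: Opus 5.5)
Your computation is correct. Writing $\mathsf S$ as a rank-two correction of the identity and expanding $\record_p=\mathsf S\oracle_p\mathsf S$ gives exactly the stated coefficients, and $\mathsf S^2=\identity$ handles the case $p=0$ and the untouched registers. The paper gives no proof of its own, importing this as Lemma 4.1 of Hamoudi--Magniez, so your direct expansion is the natural verification and is consistent with its definition of $\record$ and its remark that $\record_p\ket\bot=\ket{\hat p}$.
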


    A recording-query algorithm has initial
    state $\ket{0,0,0}_{\queryreg\phasereg\workreg}\ket{\bot}^{\otimes n}_\inputreg$ and alternates unitaries $\U=\U_{\queryreg\phasereg\workreg}\otimes \identity_\inputreg$ with recording queries.
Consequently, the algorithmic unitaries
    commute with every recording-basis projector, and after $t$ queries the
    database contains at most $t$ non-$\bot$ entries.

    \begin{proposition}[\cite{zhandry}]
    \label{prop:tquery}
      The state of any quantum query algorithm that makes
      at most $t$ queries to its input is a linear
      combination of states whose input register in the recording-query basis $\ket{z}$ has the property
     that $z$ contains at most $t$
      non-$\bot$ entries.
    \end{proposition}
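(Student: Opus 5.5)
The plan is to prove \cref{prop:tquery} by induction on the number of queries $t$. We will track the support of the input register in the recording-query basis, i.e. the set of basis states $\ket{i,p,w}\ket{z}$ carrying nonzero amplitude. For the base case $t=0$, the initial state is $\ket{0,0,0}\ket{\bot}^{\otimes n}$, and this has $0$ non-$\bot$ entries. The unitary $U_0\otimes\identity_\inputreg$ does not touch $\inputreg$, so after it the state still lies in the span of basis states whose database $z$ is all-$\bot$.

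For the inductive step, assume the state after $t$ queries and the subsequent algorithm unitary lies in the subspace $V_t$. Here $V_t$ is spanned by all $\ket{i,p,w}\ket{z}$ with $|\{j: z_j\neq\bot\}|\le t$. By linearity it suffices to apply $\record$ to a single basis state $\ket{i,p,w}\ket{z}$ in $V_t$. By the block form $\record=\sum_{i,p}\op{i,p}{i,p}\otimes\identity_\workreg\otimes\record_{i,p}$, the query and phase registers are unchanged, and only the $i$-th database register is acted on, by $\record_p$. \cref{prop:recording-query} lists the possible cases:
\begin{itemize}
\item if $p=0$, the register is unchanged;
\item if $z_i=\bot$, the register becomes a superposition of values $\ket{y}$ with $y\in[m]$, which raises the count of non-$\bot$ entries by at most one;
\item if $z_i\neq\bot$, the register becomes a superposition of $\ket{\bot}$ and values $\ket{y}$, so the count stays the same or drops by one.
\end{itemize}
All other coordinates $z_j$ with $j\ne i$ are untouched. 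Hence $\record V_t\subseteq V_{t+1}$. The next unitary $U_{t+1}\otimes\identity_\inputreg$ acts only on $\queryreg\phasereg\workreg$, so it preserves $V_{t+1}$, since $V_{t+1}$ is defined purely by a condition on $\inputreg$. This completes the induction.

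Finally, the statement concerns the actual algorithm, which queries $\oracle$ on $\ket{\dist}$ rather than $\record$ on $\ket{\bot}^{\otimes n}$. We connect the two using $\record=\mathcal S\oracle\mathcal S$ and $\mathcal S^2=\identity$. For the uniform input distribution, $\mathsf S^{\otimes n}\ket{\bot}^{\otimes n}=\ket{\hat 0}^{\otimes n}=\ket{\dist}$. Also, $\mathcal S$ commutes with each $U_t\otimes\identity$. Therefore the real final state equals $\mathcal S$ applied to the recording-query state, and "in the recording-query basis" means exactly that we expand after conjugating by $\mathcal S$. The recording state lies in $V_t$, which gives the claim.

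The only delicate point is this bookkeeping about which basis and initial state the statement refers to (uniform $\ket{\dist}$ corresponding to $\ket{\bot}^{\otimes n}$). The induction itself is routine once \cref{prop:recording-query} is available.
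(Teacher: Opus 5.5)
Your proof is correct and is essentially the argument the paper relies on. The paper cites Zhandry and justifies the statement only by noting that the algorithm's unitaries act trivially on the database register, while each recording query, by \cref{prop:recording-query}, changes only the queried coordinate and so adds at most one non-$\bot$ entry; your induction, together with your conjugation by $\mathcal{S}$ from the uniform input state to $\ket{\bot}^{\otimes n}$, spells this out in the same way and matches the paper's convention $\ket{\phi_t}=\calS\ket{\psi_t}$.
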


        \begin{definition}
        Let $\recordingvs{t}$ denote the subspace over the input registers spanned by recording-query basis vectors $\ket{z_1, \ldots, z_n}$ for $z_i \in[m]\cup\{\bot\}$ such that at most $t$ of the $z_i \neq \bot$.
        Define $\projrecordingvs{t}$ to be the projection
        onto $\recordingvs{t}$.
    \end{definition}

\paragraph{Alternative formulation in the Fourier basis}

We focus on the domain $\domain=\Z_m$ and the input distribution being the uniform
distribution on $\domain^n$.

A quantum query algorithm (with phase queries) has a query register $\queryreg$, a phase register
    $\phasereg$, and a workspace register $\workreg$.
    When we purify this by adding the input register
    $\inputreg$, we obtain computational basis
    states
    \[
        \ket{i,p,w}_{\queryreg\phasereg\workreg}\ket{x}_\inputreg,
        \qquad
        \textrm{for 
        $i\in[n]$, $p\in\Z_m$, $x\in [m]^n$.}
    \]

If $n=1$, for each single $z\in \Z_m$ we have the associated Fourier basis state over $\inputreg$ 
    \[
        \ket{\hat z}
        =
        \frac1{\sqrt m}\sum_{x\in [m]}\omega_m^{z\,x}\ket x.
    \]
More generally, when $n\ge 1$ and $z\in \Z_m^n$ we
have the associated Fourier basis state over $\inputreg$ as
$$ \ket{\hat z}
        =\bigotimes_{i\in [n]} (\frac1{\sqrt m}\sum_{x_i\in [m]}\omega_m^{z_i\,x_i}\ket {x_i})=\frac{1}{m^{n/2}}\sum_{x\in [m]^n} \omega_m^{z\cdot x} \ket{x}.$$
        
Observe that for a single coordinate of the input register, $\ket{\hat 0}$ is the analogue of $\perp$ in the 
recording query basis, and that for the vector $\0\in \Z_m^n$ consisting of $n$ zeroes,
the initial state of any quantum query algorithm
over $\domain^n=[m]^n$ under the uniform distribution on inputs is
$$\ket{0,0,0}_{\queryreg\phasereg\workreg}\ket{\hat \0}_\inputreg.$$

By applying this translation, we can rewrite~\cref{prop:tquery} equivalently as

\begin{corollary}
    \label{cor:tquery-fourier}
       The state of any quantum query algorithm that makes
      at most $t$ queries to its input is a linear
      combination of states whose input register in the Fourier basis $\ket{\hat z}$ has the property
     that $z$ contains at most $t$
      non-$0$ entries.
\end{corollary}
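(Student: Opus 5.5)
The plan is to transport \cref{prop:tquery} through the basis change $\mathcal{S}=\identity_{\queryreg\phasereg\workreg}\otimes\mathsf S^{\otimes n}$ that relates the standard query model to the recording-query model. The key observation is that $\mathsf S$ swaps $\ket\bot$ and $\ket{\hat 0}$ and fixes every other $\ket{\hat w}$ with $w\neq 0$. Hence, for any recording basis vector $\ket{z}$ with $z\in(\Z_m\cup\{\bot\})^n$, the vector $\mathsf S^{\otimes n}\ket{z}$ is a linear combination of Fourier basis states $\ket{\hat y}$ whose nonzero coordinates all lie among the non-$\bot$ coordinates of $z$.

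First I relate the two evolutions. Since $\mathsf S^2=\identity$, we have $\mathcal{S}^2=\identity$ and $\record=\mathcal{S}\oracle\mathcal{S}$. The unitaries $U_j\otimes\identity_\inputreg$ commute with $\mathcal{S}$, because they act on different registers. The initial standard-model state on the uniform distribution is $\ket{0,0,0}\ket{\hat\0}=\mathcal{S}\ket{0,0,0}\ket{\bot}^{\otimes n}$. Substituting $\oracle=\mathcal{S}\record\mathcal{S}$ at each query and cancelling adjacent pairs $\mathcal{S}\mathcal{S}$ gives $\ket{\psi_t}=\mathcal{S}\ket{\phi_t}$, where $\ket{\phi_t}$ is the state of the corresponding recording-query algorithm after $t$ queries.

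Next, by \cref{prop:tquery}, $\ket{\phi_t}$ is a combination of states $\ket{i,p,w}\ket{z}$ in which $z$ has at most $t$ non-$\bot$ entries. Applying $\mathcal{S}$ to each term maps every $\bot$ coordinate to $\ket{\hat0}$. Each non-$\bot$ coordinate $z_i\in\Z_m$ is sent to $\mathsf S\ket{z_i}$; expanding $\ket{z_i}$ in the Fourier basis and applying $\mathsf S$ yields a combination of some $\ket{\hat w}$'s and $\ket\bot$. Here we need a little care: $\mathsf S$ maps $\ket{\hat 0}$ to $\ket\bot$, which is not a Fourier vector over $\Z_m$.

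This last point is the only real obstacle, and I handle it by working on the physical input space. The physical state $\ket{\psi_t}$ lies in the span of $\ket{x}$ for $x\in[m]^n$, so it is orthogonal to everything involving $\ket\bot$. I therefore project each coordinate onto $\spn\{\ket{\hat w}:w\in\Z_m\}$. This projection fixes $\ket{\psi_t}$ and maps each term $\mathcal{S}\ket{i,p,w}\ket{z}$ into the span of Fourier states whose nonzero coordinates lie in $\{i: z_i\neq\bot\}$, a set of size at most $t$.

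Summing these projected terms expresses $\ket{\psi_t}$ as a linear combination of states $\ket{i,p,w}\ket{\hat y}$ with $y$ having at most $t$ nonzero entries. This is exactly the claim of \cref{cor:tquery-fourier}.
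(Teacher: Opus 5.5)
Your proof is correct and takes the same route as the paper. The paper obtains this corollary in one line by translating \cref{prop:tquery} through the basis change $\mathsf S^{\otimes n}$, under which $\ket{\bot}$ corresponds to $\ket{\hat 0}$, so that $\ket{\psi_t}=\mathcal S\ket{\phi_t}$. Your extra step, projecting onto the physical span of $\{\ket{\hat w}: w\in\Z_m\}$ to remove the $\ket{\bot}$ components that $\mathsf S$ creates from non-$\bot$ entries, spells out a detail the paper leaves implicit.
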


\begin{definition}
Let $\fouriervs{t}$ denote the subspace of the input registers spanned by Fourier basis vectors $\ket{\hat{z}}=\ket{\hat{z_1}, \ldots, \hat{z_n}}$ for $z_i \in [m]$ such that at most $t$ of the $z_i \neq 0$.
Elements of $\fouriervs{t}$ are said to have \emph{Fourier degree at most} $t$.
Define $\projfouriervs{t}$ to be the projection
onto $\fouriervs{t}$.
\end{definition}

\begin{proposition}
For every $t$,
    $\projfouriervs{t}=\mathsf S^{\otimes n}\, \projrecordingvs{t}\,\mathsf S^{\otimes n}$ and
    $\projrecordingvs{t}=\mathsf S^{\otimes n}\, \projfouriervs{t}\,\mathsf S^{\otimes n} $.
\end{proposition}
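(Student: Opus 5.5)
The plan is to show that $\mathsf S^{\otimes n}$ maps the recording-query basis of $\recordingvs{t}$ bijectively onto the Fourier basis of $\fouriervs{t}$, and then conjugate the projector.

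First, $\mathsf S^{\otimes n}$ is a self-inverse unitary. On a single register, $\mathsf S$ swaps $\ket\bot$ and $\ket{\hat0}$ and fixes their orthogonal complement, so $\mathsf S$ is a unitary with $\mathsf S^2=\identity$. Taking tensor powers, $\mathsf S^{\otimes n}$ is also unitary and $(\mathsf S^{\otimes n})^2=\identity$.

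Second, we identify the image of each basis vector. The recording-query basis on one register is taken, as in the usual compressed-oracle convention, to be $\{\ket\bot\}\cup\{\ket{\hat w}: w\in\Z_m\setminus\{0\}\}$. This is the basis whose $\le t$ non-$\bot$ entries are counted in \cref{prop:tquery}, and it is the reading under which \cref{cor:tquery-fourier} is a translation. On it, $\mathsf S\ket\bot=\ket{\hat 0}$ and $\mathsf S\ket{\hat w}=\ket{\hat w}$ for $w\neq 0$. Tensoring, $\mathsf S^{\otimes n}$ sends the recording basis vector with non-$\bot$ pattern $A\subseteq[n]$ and values $(w_i)_{i\in A}$ to the Fourier basis vector $\ket{\hat z}$ with $z_i=w_i\ne 0$ for $i\in A$ and $z_i=0$ for $i\notin A$. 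This correspondence is a bijection between recording basis vectors with at most $t$ non-$\bot$ entries and Fourier basis vectors with at most $t$ non-zero entries. Hence $\mathsf S^{\otimes n}\recordingvs{t}=\fouriervs{t}$.

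Third, we conjugate. For a unitary $U$ and a subspace $V$, $U\Pi_V U^\dagger=\Pi_{UV}$. Applying this with $U=\mathsf S^{\otimes n}=U^\dagger$ gives $\projfouriervs{t}=\mathsf S^{\otimes n}\projrecordingvs{t}\mathsf S^{\otimes n}$. Multiplying on both sides by $\mathsf S^{\otimes n}$ and using $(\mathsf S^{\otimes n})^2=\identity$ yields the second identity.

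The only delicate point is the convention for the non-$\bot$ recording basis states. If one instead reads $\ket{z_i}$ for $z_i\in[m]$ as computational basis vectors, the span on a single register is the full space plus $\ket\bot$, and the identity must be checked via the orthogonal decomposition $\ket\bot\oplus\ket{\hat0}\oplus\spn\{\ket{\hat w}:w\neq0\}$. I would therefore first state explicitly that non-$\bot$ values are interpreted in the $\ket{\hat w}$, $w\ne0$ form, which is consistent with \cref{prop:recording-query}, and then run the argument above.
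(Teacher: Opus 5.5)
Your Step~2 works only because you replace the paper's recording-query basis with a different one, and that replacement is exactly where the content of the statement lies.

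In the paper each database register is $(m+1)$-dimensional, with basis $\ket{\bot},\ket{1},\ldots,\ket{m}$, where $\ket{y}$ for $y\in[m]$ is a computational basis state. The paper uses this basis throughout:
\begin{itemize}
\item \cref{prop:recording-query} sends $\ket{\bot}$ to $\sum_{y}\omega_m^{py}\ket{y}/\sqrt m$, and its second case puts amplitude on all $m$ values $y$.
\item The sorting analysis reads recorded values in this basis, via $\toBin{\Delta}(z_i)$ in \cref{lem:single-query-fixed-progress}.
\item \cref{claim:abz} uses $\mathsf S\ket{y}=\frac{1}{\sqrt m}\ket{\bot}+\ket{y}-\frac1m\sum_{y'\in[m]}\ket{y'}$.
\end{itemize}
Your set $\{\ket{\bot}\}\cup\{\ket{\hat w}:w\neq 0\}$ has only $m$ elements. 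So it is not a basis of that register, and it is not consistent with \cref{prop:recording-query}, contrary to what you assert.

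With the paper's definition, $\mathsf S$ does not send recording basis vectors to Fourier basis vectors. For instance, $\ket{\hat 0}$ is a combination of non-$\bot$ recording vectors, yet $\mathsf S\ket{\hat 0}=\ket{\bot}$. In fact, for $t\ge 1$ both identities fail as operator identities on all of $(\mathbb C^{[m]\cup\{\bot\}})^{\otimes n}$. Since $\ket{\hat 0}\otimes\ket{\bot}^{\otimes(n-1)}\in\recordingvs{t}$:
\begin{itemize}
\item $\mathsf S^{\otimes n}\projrecordingvs{t}\mathsf S^{\otimes n}$ fixes $\ket{\bot}\otimes\ket{\hat 0}^{\otimes(n-1)}$, which $\projfouriervs{t}$ annihilates;
\item $\projrecordingvs{t}$ fixes $\ket{\hat 0}\otimes\ket{\bot}^{\otimes(n-1)}$, which $\mathsf S^{\otimes n}\projfouriervs{t}\mathsf S^{\otimes n}$ annihilates.
\end{itemize}
So the ``delicate point'' you set aside is not a matter of convention. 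With the paper's definitions the statement holds only on a restricted domain, and your argument never establishes that restricted statement.

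The missing step is an invariant-subspace argument. Let $W=\mathrm{span}\{\ket{\hat w}:w\neq 0\}$ and $V=\mathbb C\ket{\bot}\oplus W=\mathsf S(\mathbb C^{[m]})$. Every recording-query state $\calS\ket{\psi}$ has its database register in $V^{\otimes n}$. Write
\[
\projrecordingvs{t}=\sum_{A\subseteq[n],\,|A|\le t}\ \bigotimes_{i\in A}P^{(i)}_{[m]}\otimes\bigotimes_{i\notin A}P^{(i)}_{\bot},
\]
where $P_{[m]}$ and $P_{\bot}$ are the single-register projectors onto $\mathbb C^{[m]}$ and $\mathbb C\ket{\bot}$.

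On $V$, $P_{[m]}$ acts as the projector onto $W$, since it kills $\ket{\bot}$ and fixes $W$. Also $P_{\bot}$ maps $V$ into itself. Hence $\projrecordingvs{t}$ preserves $V^{\otimes n}$. There it acts as the projector onto products with at most $t$ factors in $W$ and all other factors equal to $\ket{\bot}$. That is exactly your projector, which equals $\mathsf S^{\otimes n}\projfouriervs{t}\mathsf S^{\otimes n}$.

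This gives the second identity on $V^{\otimes n}$, and, after conjugating by $\mathsf S^{\otimes n}$, the first identity on $(\mathbb C^{[m]})^{\otimes n}$. That is all that is needed to turn \cref{prop:tquery}, a statement about the paper's computational recording basis, into \cref{cor:tquery-fourier}. With this lemma inserted, your Steps~1 and~3 finish the proof.
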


\section{Noise operators and quantum query algorithms}\label{sec:noise}

    The compressed oracle model is a powerful tool for analyzing the knowledge of a quantum query algorithm by keeping track of a database of queries and answers intuitively corresponding to the input values potentially learned by the algorithm. The problem, however, is that because it can make these queries in superposition, a quantum algorithm can potentially learn much more about the input than just the values
    of the coordinates it has queried because, unlike classical randomized algorithms, the different databases can be entangled with each other. 
    While there are valuable tools for upper bounding the success probability in the recording query basis \cite{zhandry, parallelqrom}, they fail to produce inverse-exponential upper bounds. 
    It is for this reason that \cite{HM23} developed problem-specific analyses for bounding the algorithm's probability of success in the standard basis (where the input being queried actually lives) given that the state in the recording query basis has some property. 
    Still, these techniques often fail when the witness size for the property is too large.
    
    We show how to prove inverse-exponential bounds under the uniform input distribution over inputs of the form $[m]^n$ (which is also the typical setting for the recording query basis) for problems that are somewhat sensitive to the input, using an approach that ends up requiring predominantly
    classical reasoning.

    By~\cref{cor:tquery-fourier}, after $t$ queries, the input register of the state $\ket{\psi_t}$ has Fourier degree at most $t$.
    However, to find the probability that some event $E$ has occurred, we care about properties of $\ket{\psi_t}$ in the standard basis.
    We therefore aim to upperbound $\norm{\Pi_E \projfouriervs{t}}^2$. Our first observation is that this quantity is equal to $\norm{\Pi_E\projfouriervs{t}\Pi_E}$, which quantifies how stable the event $E$ can be when supported by low-degree states. 
    Unfortunately, the analysis of our events of interest in the standard basis can be very difficult, since the natural basis of the subspace $\fouriervs{t}$ is the set of low-Fourier degree vectors. 
    Instead, we ``smooth out'' $\projfouriervs{t}$ by applying noise. 
    This ``noisy'' version of $\projfouriervs{t}$ can be interpreted as a random walk over inputs, where each entry is resampled with some probability.

    We review the ideas related to noise sensitivity 
    and the noise operator from~\cite{ODonnell14}, though we focus on input coordinates having values in $[m]$
    rather than Boolean values:
   For each
    coordinate $x_i\in [m]$, 
    the random variable $\noise_p(x_i)$ on $[m]$ is given by a two-stage process:  with probability $p$, replace $x_i$ with a random $y\sim\unif{[m]}$ and
    with probability $1-p$ return $x_i$.
    Observe that applying $\noise_p$ does not change the distribution over inputs that were originally uniformly random.
    
    This can be related to the following single-coordinate noise operator over
    the Fourier basis
        \[
            \fouriernoise_p
            =
            p\ket{\hat0}\!\bra{\hat0}+(1-p)\identity,
        \]
 which can be interpreted as a Markov transition matrix
 from $x$ to $\noise_p(x)$, as follows:
 \begin{proposition}
 For any $x, y\in [m]$,      
 $$\Pr[\noise_p(x)=y]=\bra x \fouriernoise_p \ket y=(1-p)\,\delta_{x,y}+\frac{p}{m}$$
 where $\delta_{x,y}$ is the Kronecker $\delta$ function that equals 1 iff $x=y$. 
  \end{proposition}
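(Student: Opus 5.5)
The plan is to compute both sides directly and check that they agree. The key input is an explicit standard-basis form of the projector $\ket{\hat0}\!\bra{\hat0}$.

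\emph{Probabilistic side.} The two-stage process gives $\noise_p(x)=y$ in two disjoint ways. With probability $1-p$ the value is kept, which contributes $(1-p)\delta_{x,y}$. With probability $p$ the value is resampled uniformly, which contributes $p\cdot\frac1m$. Summing these gives $\Pr[\noise_p(x)=y]=(1-p)\delta_{x,y}+\frac pm$.

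\emph{Operator side.} From the definition $\ket{\hat0}=\frac1{\sqrt m}\sum_{x\in[m]}\ket x$, every entry of $\ket{\hat 0}$ in the standard basis equals $1/\sqrt m$. Hence
\[
\bra x\hat0\rangle\langle\hat0\ket y=\frac1m \quad\text{for all } x,y.
\]
Since $\bra x\identity\ket y=\delta_{x,y}$, linearity gives
\[
\bra x\fouriernoise_p\ket y=p\cdot\frac1m+(1-p)\,\delta_{x,y},
\]
which matches the probabilistic side.

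The argument has no real obstacle. The only point needing care is the orientation of the matrix, namely that it is indexed by $\bra x\cdot\ket y$ as a transition from $x$ to $y$. Because $\fouriernoise_p$ is real and symmetric in the standard basis, this orientation does not matter. As a consistency check, the row sums are $(1-p)+p=1$, so $\fouriernoise_p$ is indeed a stochastic matrix.
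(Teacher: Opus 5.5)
Your proof is correct and is the same direct computation from the definitions of $\noise_p$ and $\fouriernoise_p=p\ket{\hat0}\!\bra{\hat0}+(1-p)\identity$ that the paper relies on; the paper states this proposition without proof. Your remark about orientation is apt, since the paper later uses the transposed convention $\bra y \fouriernoise^{\otimes n}_p\ket x=\Pr[\noise_p(x)=y]$, and that is harmless because of symmetry.
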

        
For inputs $x\in [m]^n$ the random variable $\noise_p(x)$ is defined by applying $\noise_p(x_i)$ at every
index independently: $\noise_p(x)=\bigotimes_{i\in [n]}\noise_p(x_i)$.
The corresponding noise operator over the Fourier basis is therefore the $n$-fold tensor product of the single copy case $\fouriernoise_p^{\otimes n}$.
Therefore, the probability that $\fouriernoise^{\otimes n}_p$ transitions $x\in [m]^n$ to some $y\in [m]^n$ is \[
                \bra y \fouriernoise^{\otimes n}_p\ket x
                =
                \Pr\s{\noise_p(x)=y}
                =
                \p{1-p+\frac{p}{m}}^{n-d(x,y)}
                \p{\frac{p}{m}}^{d(x,y)}\]
                where $d(x,y)=\sum_{i\in [n]} (1-\delta_{x_i,y_i})$ is the Hamming distance
                between $x$ and $y$.

        Understanding the relationship between $\fouriernoise^{\otimes n}_p$ and $\projfouriervs{t}$ is straightforward since they are both diagonalizable in the Fourier basis. First, we have already defined: 
                \[
            \projfouriervs{t}
            =
            \sum_{\abs{\supp'(z)}\le t}
            \ket{\hat z}\!\bra{\hat z}.
        \]
        We also have 
        \begin{align*}
            \fouriernoise^{\otimes n}_p
            &= \bigotimes_{i\in [n]}\bigg(\out{\hat 0} + (1-p)\sum_{z_i\in\Z_m\setminus \set{0}}\out{\hat z_i}\bigg)\\
            &=\sum_{z\in \Z_m^n}(1-p)^{|\supp'(z)|}\out{\widehat z}.
        \end{align*}
        More generally, for any $A\subseteq [n]$, and
        $x\in [m]^n$,
        we can define random variable $\noise^A_p(x)$ by applying the noise operator to $x$ on the coordinates in $A$ and leaving other coordinates
        unchanged.
        We then have a corresponding
        $$\fouriernoise^A_p=\identity^{\otimes ([n]\setminus A)}\otimes \bigotimes_{i\in A}\bigg(\out{\hat 0} + (1-p)\sum_{z_i\in\Z_m\setminus \set{0}}\out{\hat z_i}\bigg)\\
            =\sum_{z\in \Z_m^n}(1-p)^{|\supp'(z)\cap A|}\out{\widehat z}.$$

\begin{proposition}
\label{prop:noise-A}
   Let $x, y\in [m]^n$.  If $x_{[n]\setminus A}=y_{[n]\setminus A}$ then,
   \[
                \Pr\s{\noise^A_p(x)=y}
                =\bra y \fouriernoise^{A}_p\ket x
                =
                \p{1-p+\frac{p}{m}}^{|A|-d_A(x,y)}
                \p{\frac{p}{m}}^{d_A(x,y)}\]
                where $d_A(x,y)=\sum_{i\in A} (1-\delta_{x_i,y_i})=d(x,y)$ is the Hamming distance
                between $x_A$ and $y_A$.
                When $x_{[n]\setminus A} \neq y_{[n] \setminus A}$, $\Pr\s{\noise^A_p(x)=y}=0$.
\end{proposition}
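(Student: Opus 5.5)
The plan is to use the tensor-product structure of $\fouriernoise^A_p$ and reduce everything to single-coordinate computations. By definition,
\[
\fouriernoise^A_p=\identity^{\otimes([n]\setminus A)}\otimes\bigotimes_{i\in A}\fouriernoise_p ,
\]
and the standard basis vectors also factor, $\ket{x}=\bigotimes_i\ket{x_i}$ and $\ket{y}=\bigotimes_i\ket{y_i}$. Hence the matrix entry factors as
\[
\bra y\fouriernoise^A_p\ket x=\prod_{i\notin A}\braket{y_i|x_i}\cdot\prod_{i\in A}\bra{y_i}\fouriernoise_p\ket{x_i}.
\]
For $i\notin A$ the factor is $\delta_{x_i,y_i}$. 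So if $x_{[n]\setminus A}\neq y_{[n]\setminus A}$, the whole product is $0$. If they agree, these factors are all $1$.

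For $i\in A$, I invoke the single-coordinate proposition already stated: $\bra{y_i}\fouriernoise_p\ket{x_i}=(1-p)\delta_{x_i,y_i}+p/m$. This equals $1-p+p/m$ when $x_i=y_i$ and $p/m$ otherwise. Multiplying over $A$ gives $(1-p+p/m)^{|A|-d_A(x,y)}(p/m)^{d_A(x,y)}$. The equality $d_A(x,y)=d(x,y)$ follows because the coordinates outside $A$ agree, so they contribute nothing to the Hamming distance.

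On the probabilistic side, $\noise^A_p(x)$ acts independently on each coordinate: the identity outside $A$ and $\noise_p$ inside $A$. Therefore $\Pr[\noise^A_p(x)=y]$ factors into the same product of single-coordinate probabilities. Each factor matches the corresponding matrix entry by the single-coordinate proposition, together with the trivial fact that $\Pr[x_i=y_i]=\delta_{x_i,y_i}$ for unnoised coordinates. This establishes the first equality, and also shows the probability is $0$ when $x$ and $y$ differ outside $A$.

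There is no real obstacle. The only care needed is to note that $\fouriernoise_p$ is real and symmetric in the standard basis (its entries are $(1-p)\delta+p/m$), so $\bra y\cdot\ket x$ and $\bra x\cdot\ket y$ agree. This means the direction of the Markov transition does not matter.
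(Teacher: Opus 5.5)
Your proof is correct and matches the paper's implicit argument. The paper gives no separate proof; it states the proposition right after writing $\fouriernoise^A_p$ as $\identity^{\otimes([n]\setminus A)}$ tensored with single-coordinate copies of $\fouriernoise_p$, so the entry factors coordinatewise exactly as you do, using the single-coordinate proposition and the coordinatewise independence of $\noise^A_p$. Your remark that $\fouriernoise_p$ is real symmetric, which reconciles the $\bra{y}\fouriernoise^A_p\ket{x}$ form here with the $\bra{x}\fouriernoise_p\ket{y}$ form of the single-coordinate statement, is a small detail the paper leaves unstated.
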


The following is the key observation behind our arguments.

\begin{lemma}
\label{lem:fouriernoise}
For $A\subseteq [n]$, $0\le p<1$, and every nonnegative integer $t$,
$\projfouriervs{t}\preceq (1-p)^{-t}\, \fouriernoise^A_p$.
\end{lemma}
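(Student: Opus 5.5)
Both operators are diagonal in the same orthonormal basis, the Fourier basis $\{\ket{\hat z}: z\in\Z_m^n\}$, so the Loewner inequality reduces to a comparison of eigenvalues, one basis vector at a time. Concretely, we have
\[
\projfouriervs{t}=\sum_{z:\,\abs{\supp'(z)}\le t}\out{\hat z},
\qquad
\fouriernoise^A_p=\sum_{z\in\Z_m^n}(1-p)^{\abs{\supp'(z)\cap A}}\out{\hat z}.
\]
Hence
\[
(1-p)^{-t}\fouriernoise^A_p-\projfouriervs{t}=\sum_{z}c_z\out{\hat z},
\qquad
c_z=(1-p)^{\abs{\supp'(z)\cap A}-t}-\indicator\s{\abs{\supp'(z)}\le t}.
\]
A diagonal operator in an orthonormal basis is positive semidefinite exactly when all of its diagonal entries are nonnegative, so it suffices to show $c_z\ge 0$ for every $z$.

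\textbf{Case $\abs{\supp'(z)}>t$.} Here the indicator is $0$. The remaining term is a positive power of $1-p$, since $0\le p<1$, so $c_z>0$.

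\textbf{Case $\abs{\supp'(z)}\le t$.} Here $\abs{\supp'(z)\cap A}\le\abs{\supp'(z)}\le t$, so the exponent $\abs{\supp'(z)\cap A}-t$ is nonpositive. Because $0<1-p\le 1$, this gives $(1-p)^{\abs{\supp'(z)\cap A}-t}\ge 1$, and therefore $c_z\ge 0$.

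Combining the two cases, the difference is positive semidefinite, which is exactly $\projfouriervs{t}\preceq(1-p)^{-t}\fouriernoise^A_p$.

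The only point needing care is confirming that $\fouriernoise^A_p$ really has the stated spectral form; this is the expansion already displayed before \cref{prop:noise-A}. On coordinates outside $A$ the factor is $\identity=\sum_{z_i}\out{\hat z_i}$, which contributes no power of $1-p$, and on coordinates in $A$ each nonzero $z_i$ contributes a factor $1-p$. With that form in hand there is no real obstacle: the argument is a pointwise comparison of eigenvalues using $1-p\in(0,1]$.
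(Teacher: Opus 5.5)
Your proof is correct and is essentially the paper's own argument: both operators are diagonal in the Fourier basis, and the diagonal entries of $(1-p)^{-t}\fouriernoise^A_p-\projfouriervs{t}$ are nonnegative in each of the two cases $\abs{\supp'(z)}\le t$ and $\abs{\supp'(z)}>t$. One small wording point: in the second case the exponent $\abs{\supp'(z)\cap A}-t$ can be negative, so the term need not be a ``positive power'' of $1-p$. It is still strictly positive because $1-p>0$, and that is all the argument needs.
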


\begin{proof}
Both $\fouriernoise^A_p$ and $\projfouriervs{t}$ are diagonal in the Fourier basis.
The diagonal entry at $z\in\Z_m^n$ in
$(1-p)^{-t}\fouriernoise^A_p-\projfouriervs{t}$ is
$(1-p)^{|\supp'(z)\cap A|-t}-1$ if $|\supp'(Z)|\le t$, and
$(1-p)^{|\supp'(z)\cap A|-t}$ otherwise. Thus, every eigenvalue is
nonnegative, and
$\projfouriervs{t}\preceq(1-p)^{-t}\fouriernoise^A_p$.
\end{proof}

We apply this to obtain the following theorem.

\begin{theorem}
\label{thm:noise-events}
            Let $A\subseteq [n]$ and $\event\subseteq[m]^n$ be some event. If $\Pi_\event$ is the projection onto the input states that correspond to event $\event$ then for $0\le t\le n$, we have \[\norm{\Pi_\event\projfouriervs{t}}^2
            \le (1-p)^{-t}
            \max_{x\in \event}\Pr\s{\noise^A_p(x)\in \event}.
            \]
\end{theorem}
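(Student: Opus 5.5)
The plan is to reduce the operator norm to a quadratic form and then compare operators using \cref{lem:fouriernoise}.

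First, since $\projfouriervs{t}$ is a projection, $\norm{\Pi_\event\projfouriervs{t}}^2=\norm{\Pi_\event\projfouriervs{t}\projfouriervs{t}\Pi_\event}=\norm{\Pi_\event\projfouriervs{t}\Pi_\event}$. This uses $\norm{B}^2=\norm{BB^\dagger}$ with $B=\Pi_\event\projfouriervs{t}$, together with $\projfouriervs{t}^2=\projfouriervs{t}$.

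Second, conjugating the semidefinite inequality of \cref{lem:fouriernoise} by $\Pi_\event$ preserves the order. This gives
\[0\preceq \Pi_\event\projfouriervs{t}\Pi_\event\preceq(1-p)^{-t}\,\Pi_\event\fouriernoise^A_p\Pi_\event.\]
For PSD operators, $X\preceq Y$ implies $\norm{X}\le\norm{Y}$. The right-hand side is PSD because $\fouriernoise^A_p$ is diagonal in the Fourier basis with nonnegative entries $(1-p)^{|\supp'(z)\cap A|}$. It therefore suffices to show that
\[\norm{\Pi_\event\fouriernoise^A_p\Pi_\event}\le\max_{x\in\event}\Pr\s{\noise^A_p(x)\in\event}.\]

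Third, I bound this norm classically. In the standard basis, $\Pi_\event\fouriernoise^A_p\Pi_\event$ is the principal submatrix $M$ of $\fouriernoise^A_p$ indexed by $\event$. By \cref{prop:noise-A}, its entries are $M_{xy}=\Pr\s{\noise^A_p(x)=y}$. These entries are nonnegative and symmetric in $x,y$, because the formula depends only on the Hamming distance $d_A(x,y)$ and on whether $x$ and $y$ agree outside $A$.

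The row sum at $x\in\event$ is $\sum_{y\in\event}\Pr\s{\noise^A_p(x)=y}=\Pr\s{\noise^A_p(x)\in\event}$. Since $M$ is PSD, its norm equals $\lambda_{\max}(M)$. By \cref{thm:perron-frobenius-row-sums}, this is at most the maximum row sum, which is the required bound.

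The only point needing care is the identification of the standard-basis entries of $\fouriernoise^A_p$ with the transition probabilities, which is exactly \cref{prop:noise-A}. Once that is in hand, the remaining steps are routine.
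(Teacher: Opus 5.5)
Your proposal is correct and follows the paper's proof essentially step for step: the identity $\norm{\Pi_\event\projfouriervs{t}}^2=\norm{\Pi_\event\projfouriervs{t}\Pi_\event}$, then the operator comparison from \cref{lem:fouriernoise} after conjugating by $\Pi_\event$, then \cref{prop:noise-A} with \cref{thm:perron-frobenius-row-sums} to bound the norm by the maximum row sum $\max_{x\in\event}\Pr\s{\noise^A_p(x)\in\event}$. You also make explicit what the paper leaves implicit, namely the symmetry and nonnegativity of the restricted transition matrix and the fact that the norm of a PSD operator equals its largest eigenvalue.
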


        \begin{proof}
        Since $\Pi_\event$ and $\projfouriervs{t}$ are both
        projections, we have
        \begin{align*}
            \norm{\Pi_\event\projfouriervs{t}}^2&=
            \norm{\Pi_\event \projfouriervs{t}\projfouriervs{t}\Pi_\event}\\
            &=
            \norm{\Pi_\event \projfouriervs{t}\Pi_\event}\\
            &\le (1-p)^{-t} \norm{\Pi_\event \fouriernoise^A_p\Pi_\event}\qquad\textrm{by \cref{lem:fouriernoise} and $\Pi_\event, \projfouriervs{t}$ are positive semidefinite}\\
            &\le (1-p)^{-t}\max_{x\in \event}\Pr\s{\noise^A_p(x)\in \event}
            \quad\textrm{by \cref{prop:noise-A} and \cref{thm:perron-frobenius-row-sums}.}\qedhere
        \end{align*}
        \end{proof}

This means that to upper bound the probability of event $\event$ occurring after $t$ queries, it suffices to instead analyze how \emph{stable} $\event$ is under the noise $\noise^A_p$.
We now characterize how this relates to quantum query algorithms.

\paragraph{Noise operators as a tool for bounding quantum query algorithms}

\cref{thm:noise-events} is a very useful tool, although it is a little unclear how exactly it can be applied to quantum query algorithms with full generality.

Let $\domain$ be a finite set and let $\relation\subseteq \domain^n \times \outputspace$ be a relation.
In general we will be interested in quantum algorithms computing functions or relations
on $\domain^n\times \range$, but we sometimes need to analyze the ability of 
quantum query algorithms to compute partial outputs, so we use
$\outputspace$ as a potentially more general notation for some related space which could, of course, be $\range$ itself.

In analyzing quantum algorithms to compute $\relation$, it would be natural to
apply \cref{thm:noise-events} using events
that consist of all inputs
$x\in \domain^n$ that can produce specific
values $y$ such that $(x,y)\in\relation$.
However,
there might be a very small number of inputs $x$ that are much more robust to adding noise than typical choices of $x$ are.
Thus, it is useful to be able to discard a small number of $x$ in a way similar to common methods for proving classical distributional lower bounds, where one gives away success on a very small number of inputs for free. 
That is the intuition for the set $\badset$ in the 
theorem below.


\begin{theorem}[\textbf{Quantum Noise Stability Bound}]\label{cor:noise-stability}
    Let  $\relation \subseteq \domain^n \times \outputspace$.
    Let $A\subseteq [n]$ and $p\in (0,1)$.
    Let $\badset\subset \domain^n$ and, for every $y \in \outputspace$, define $\event_y$ to be the set of $x \in \domain^n\setminus \badset$ such that $(x,y) \in \relation$.
    Then the probability that a quantum query algorithm making $T \leq n$ queries given
    an input chosen uniformly at random from $\domain^n$ 
    produces a $y\in \outputspace$ such that $(x,y)\in \relation$
    is at most
    \begin{displaymath}
        2(1-p)^{-T} \max_{y \in \outputspace} \max_{x \in \event_y} \Pr\s{\noise^A_p(x)\in \event_{y}}  + 2\frac{\abs{\badset}}{\abs{\domain}^n}.
    \end{displaymath}
    When $\badset = \emptyset$ the bound can be tightened to
    \begin{displaymath}
        (1-p)^{-T} \max_{y \in \outputspace} \max_{x \in \event_y} \Pr\s{\noise^A_p(x)\in \event_{y}}.
    \end{displaymath}
\end{theorem}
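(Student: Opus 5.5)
We write the final state after $T$ queries as $\ket{\psi_T}=\sum_w \ket{\phi_w}_{\queryreg\phasereg\workreg}\otimes(\ldots)$. Measuring $\workreg$ gives $w$ and output $y=\outputmapfn(w)$. The success probability is then $\norm{\pisucc\ket{\psi_T}}^2$, where $\pisucc=\sum_w \op{w}{w}_\workreg\otimes(\Pi_{\relation,y})_\inputreg$ and $\Pi_{\relation,y}$ projects onto the $x$ with $(x,y)\in\relation$. By \cref{cor:tquery-fourier}, $\ket{\psi_T}=(\identity\otimes\projfouriervs{T})\ket{\psi_T}$, because the algorithm's unitaries and phase oracles never raise Fourier degree beyond the number of queries. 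So the key quantity is $\norm{\pisucc(\identity\otimes\projfouriervs{T})}^2$.

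\textbf{Case $\badset=\emptyset$.} Here $\Pi_{\relation,y}=\Pi_{\event_y}$, and $\pisucc(\identity\otimes\projfouriervs{T})=\sum_w \op{w}{w}\otimes \Pi_{\event_{\outputmapfn(w)}}\projfouriervs{T}$. Decompose $\ket{\psi_T}=\sum_w \ket{w}\ket{\chi_w}$, where $\ket{\chi_w}$ lives on $\queryreg\phasereg\inputreg$. Since $\projfouriervs{T}$ acts only on $\inputreg$, each $\ket{\chi_w}$ lies in $\identity\otimes\fouriervs{T}$. Using orthogonality of the $\ket{w}$,
\[
\norm{\pisucc\ket{\psi_T}}^2=\sum_w\norm{(\identity\otimes\Pi_{\event_{\outputmapfn(w)}}\projfouriervs{T})\ket{\chi_w}}^2\le \max_y\norm{\Pi_{\event_y}\projfouriervs{T}}^2\sum_w\norm{\chi_w}^2 .
\]
Since $\sum_w\norm{\chi_w}^2=1$, applying \cref{thm:noise-events} with $t=T\le n$ gives the tightened bound.

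\textbf{General case.} Split $\Pi_{\relation,y}=\Pi_{\event_y}+\Pi_{\relation,y}\Pi_{\badset}$, since these commute as diagonal projections. This gives $\pisucc=P_1+P_2$, where $P_1=\sum_w\op{w}{w}\otimes\Pi_{\event_{\outputmapfn(w)}}$ and $P_2\preceq\identity\otimes\Pi_\badset$. Then
\[
\norm{\pisucc\ket{\psi_T}}^2\le 2\norm{P_1\ket{\psi_T}}^2+2\norm{P_2\ket{\psi_T}}^2 .
\]
The first term is bounded exactly as in the case $\badset=\emptyset$. For the second, $\norm{P_2\ket{\psi_T}}^2\le\norm{(\identity\otimes\Pi_\badset)\ket{\psi_T}}^2$. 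This is the probability that measuring $\inputreg$ lands in $\badset$. The algorithm never touches $\inputreg$ except through controlled phases, which are diagonal in the standard basis, so the reduced distribution on $\inputreg$ stays uniform. Hence this probability is $\abs{\badset}/\abs{\domain}^n$, which yields the factor-$2$ bound.

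A small improvement is possible: $P_1$ and $P_2$ have orthogonal ranges, so $\norm{\pisucc\psi}^2=\norm{P_1\psi}^2+\norm{P_2\psi}^2$ and the factors of $2$ are not needed. The stated bound is weaker, so it certainly holds.

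\textbf{Main obstacle.} The one delicate point is justifying that the output-dependent projector decomposes blockwise over $w$. This is what lets the maximum over $y$ be pulled out. It works because the post-processing $\outputmapfn$ depends only on the measured classical $w$, and because $\projfouriervs{T}$ commutes with $\op{w}{w}_\workreg$. The remaining steps are standard once \cref{thm:noise-events} is in hand.
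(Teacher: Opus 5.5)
Your proof is correct and follows essentially the same route as the paper: decompose the final state by the measured classical register, use that each branch has Fourier degree at most $T$ so \cref{thm:noise-events} bounds it uniformly over outputs, and control the excluded set through the uniform marginal on the input register. Your extra remark is also right: because $\event_y$ and $\relation^{-1}(y)\cap\badset$ are disjoint we have $P_1P_2=0$, so the Pythagorean identity can replace the paper's triangle-inequality and Cauchy--Schwarz step, which gives the sharper bound with neither factor of $2$.
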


We note that the minimum over choices of
$A$ in \cref{thm:noise-events} and \cref{cor:noise-stability} always occurs with $A=[n]$, but we state
both more generally since the restricted noise operator will be convenient in applications.
Note that this gives us a way to bound the success probabilities of quantum algorithms with purely classical arguments.

\begin{proof}[Proof of \cref{cor:noise-stability}]
As discussed in the preliminaries when describing quantum queries, we identify
the elements of a set $\domain$ such that $\abs{\domain}=m$ with the set
$[m]$.  
Therefore, in the
following we assume without loss of generality that $\domain=[m]$.

Let $\ket{\psi_T}$
be the final state of a quantum query algorithm after making $T$ queries to a uniformly random input in $[m]^n$.
Let $q$ be the function mapping a measured value in the workspace register of $\ket{\psi_T}$ to the algorithm's produced output.
The probability that the algorithm produces a correct answer for $\relation$ is given by $\norm{\pisucc^\relation \ket{\psi_T}}^2$, where 
\begin{align*}
    \pisucc^\relation 
   &= \sum_{i,p,w} \op{i,p,w}{i,p,w} \otimes \Pi_{\relation^{-1}(q(w))}
\end{align*}
is the projector onto basis states where the algorithm would output a $y$ on input $x$ such that $(x,y) \in \relation$.

We first assume that
$\badset=\emptyset$.
In this case, $\relation^{-1}(q(w))=\event_{q(w)}$.
Since the states  $\ket{i,p,w}$ are mutually orthogonal, the idea is to bound $\norm{\pisucc^\relation \ket{\psi_T}}^2$ by  using \cref{thm:noise-events} applied to the events $\event_{q(w)}$.

To apply this idea, we need one more observation
about $\projfouriervs{T}$.
While the fact that $(\identity_{\queryreg \phasereg \workreg} \otimes \projfouriervs{T})\ket{\psi_T}  = \ket{\psi_T}$ is an immediate consequence of \cref{cor:tquery-fourier}, it is also true that for every choice of $i,p,w$ that $\projfouriervs{T} \ket{\psi^{i,p,w}_T} =\ket{\psi^{i,p,w}_T}$, where $\ket{\psi^{i,p,w}_T}$ is the input state on input register $\inputreg$ resulting from measuring the algorithm registers $\queryreg \phasereg \workreg$ in the standard basis and receiving outcomes $i,p,w$.

Then
\begin{align*}
    \norm{\pisucc^\relation \ket{\psi_T}}^2
    &\leq \max_{i,p,w}\bignorm{\pisucc^\relation \ket{i,p,w} \ket{\psi^{i,p,w}_T}}^2\\
    &=\max_{i,p,w}\bignorm{\Pi_{\relation^{-1}(q(w))}  \ket{\psi^{i,p,w}_T}}^2\\
    &= \max_{i,p,w}\bignorm{\Pi_{\event_{q(w)}} \projfouriervs{T} \ket{\psi^{i,p,w}_T}}^2\qquad\textrm{since we assumed that $\badset=\emptyset$}\\
    &\leq \max_{y\in \outputspace} \bignorm{\Pi_{\event_{y}} \projfouriervs{T}}^2\\
    &\leq (1-p)^{-T}\max_{y \in \outputspace} \max_{x \in \event_y} \Pr\s{\noise^A_p(x)\in \event_{y}}\qquad \textrm{by \cref{thm:noise-events}}.
\end{align*}

We now consider the general case where $\badset\ne \emptyset$.
Then we have $$\Pi_{R^{-1}(q(w))}=\Pi_{\event_{q(w)}}+\Pi_{R^{-1}(q(w))\,\cap\, \badset}$$ and, since 
$\Pi_{R^{-1}(q(w))\,\cap\, \badset}$ is dominated by
$\Pi_{\badset}$ we have,
{\allowdisplaybreaks
\begin{align*}
  &\norm{\pisucc^\relation \ket{\psi_T}}^2\\
    &=\bignorm{\big(\sum_{i,p,w}\op{i,p,w}{i,p,w} \otimes \Pi_{\relation^{-1}(q(w))}\big)\ket{\psi_T}}^2\\
    &= \bignorm{\sum_{i,p,w} \big(\op{i,p,w}{i,p,w} \otimes (\Pi_{\event_{q(w)}}  + \Pi_{R^{-1}(q(w))\,\cap\, \badset})\big)\ket{\psi_T}}^2\\
     &\le \bigg(\bignorm{\big(\sum_{i,p,w}\op{i,p,w}{i,p,w} \otimes \Pi_{\event_{q(w)}}\big)\ket{\psi_T}}+\norm{(\identity_{\queryreg\phasereg\workreg}\otimes \Pi_{\badset})\ket{\psi_T}}\bigg)^2\\
    &\leq \bigg(\max_{i,p,w}\bignorm{\Pi_{\event_{q(w)}}  \ket{\psi^{i,p,w}_T}}+ \norm{(\identity_{\queryreg\phasereg\workreg}\otimes \Pi_{\badset})\ket{\psi_T}}\bigg)^2\\
    &= \bigg(\max_{i,p,w}\bignorm{\Pi_{\event_{q(w)}}  \projfouriervs{T}\ket{\psi^{i,p,w}_T}}+ \sqrt{\frac{\abs{\badset}}{m^n}}\bigg)^2 \quad \textrm{since $\ket{\psi_T}$ has a uniformly random input}\\
     &\le\bigg(\max_{y\in \outputspace} \bignorm{\Pi_{\event_{y}} \projfouriervs{T}}+ \sqrt{\frac{\abs{\badset}}{m^n}}\bigg)^2\\
    &\le 2\max_{y\in \outputspace} \bignorm{\Pi_{\event_{y}} \projfouriervs{T}}^2+2\frac{\abs{\badset}}{m^n}\qquad\textrm{by the Cauchy-Schwarz inequality}\\
    &\leq 2 (1-p)^{-T}\max_{y \in \outputspace} \max_{x \in \event_y} \Pr\s{\noise^A_p(x)\in \event_{y}}  + 2\frac{\abs{\badset}}{m^n} \qquad \textrm{by \cref{thm:noise-events}}.
\end{align*}
}
The statement of the theorem follows since $\abs{\domain}=m$.
\end{proof}

\paragraph{A toy application to parity}
As a clear example of this method in action, we construct an $\Omega(n)$ query lower bound for parity.
This lower bound is already known and the parameters we are able to extract with noise operators are far from ideal.
With $<n/2$ quantum queries, it is known that determining the parity of a uniformly random input is impossible with any probability larger than $1/2$ \cite{PhysRevLett.81.5442, BBC+01}.
Our noise operator bound is much weaker, both in terms of the number of queries and the success probability.
We include this example as a demonstration of our \cref{cor:noise-stability} in action.
\begin{proposition}
    Any quantum algorithm computing $\Parity$ of $n$ uniformly random input bits with $t \leq n/10$ quantum queries to its input will have a success probability smaller than $7/10$.
\end{proposition}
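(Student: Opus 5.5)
We apply \cref{cor:noise-stability} in its tightened form with $\badset=\emptyset$. The domain is $\domain=\{0,1\}$ (so $m=2$), the output space is $\outputspace=\{0,1\}$, and $\relation=\{(x,y): \Parity(x)=y\}$. We take $A=[n]$. For each $y$, the event $\event_y$ is the set of $x$ with parity $y$. The success probability after $T=t\le n/10$ queries is then at most
\[
(1-p)^{-t}\max_{y}\max_{x\in\event_y}\Pr\s{\noise_p(x)\in\event_y}.
\]
It remains to compute the noise stability of parity and then pick $p$.

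Fix $x$. Under $\noise_p$, each coordinate is independently resampled with probability $p$. A resampled bit flips with probability $1/2$, so each bit flips independently with probability $p/2$. The parity is therefore preserved exactly when an even number of bits flip, which happens with probability
\[
\frac{1+(1-p)^n}{2}.
\]
This value is the same for every $x$ and both $y$, so the maximum is just this expression. The bound becomes
\[
(1-p)^{-t}\cdot\frac{1+(1-p)^n}{2}.
\]

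The main step is choosing $p$ to make this below $7/10$ using only $t\le n/10$. Write $(1-p)^n=e^{-c}$ for a constant $c$, i.e.\ $p=1-e^{-c/n}$. Then
\[
(1-p)^{-t}\le e^{ct/n}\le e^{c/10},
\]
so the bound is at most $e^{c/10}(1+e^{-c})/2$. With $c=2$ this equals $e^{0.2}(1+e^{-2})/2\approx 1.2214\cdot 0.5677\approx 0.693<0.7$. Any $c$ near $2$ works, and we need not assume $p<1$ is small beyond $p\in(0,1)$, which holds since $c/n>0$.

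The only point needing care is the exact value of the stability, which uses independence of the flips. The binomial identity $\sum_{k\text{ even}}\binom nk q^k(1-q)^{n-k}=(1+(1-2q)^n)/2$ with $q=p/2$ gives it directly. There is no genuine obstacle beyond verifying the numerical inequality, which has a small but sufficient margin.
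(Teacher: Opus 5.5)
Your proposal is correct and matches the paper's proof: apply \cref{cor:noise-stability} with $\badset=\emptyset$, compute the parity-preservation probability under $\noise_p$ as $(1+(1-p)^n)/2$, and choose $p$ so that $(1-p)^n$ is a fixed constant. The only difference is cosmetic: the paper sets $(1-p)^n=2^{-3}$, giving $2^{0.3}\cdot\tfrac{9}{16}\approx 0.6925$, while you set $(1-p)^n=e^{-2}$, giving $e^{0.2}(1+e^{-2})/2\approx 0.6934$.
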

\begin{proof}
    By \cref{cor:noise-stability}, the success probability of such an algorithm is bounded above by
    \begin{displaymath}
        (1-p)^{-t} \max_{b \in \{0,1\}}\max_{x \in \Parity^{-1}(b)} \Pr \big[\bigoplus_{i \in [n]}\noise_p(x_i) = b \big].
    \end{displaymath}
    We start by observing that the probability of $\bigoplus_{i \in [n]}\noise_p(x_i) = b$ is exactly the same as the probability that a Binomial random variable with $n$ trials and probability $p/2$ is even.
    The probability that such a variable equals to $k$ is $\tbinom{n}{k}(p/2)^k(1-p/2)^{n-k}$.
    Summing this up over all even $k$ gives a total probability of
    \begin{align*}
        \sum_{k \text{ even}} &\binom{n}{k}(p/2)^k(1-p/2)^{n-k}\\
        &= \frac{1}{2}\sum_{k} \binom{n}{k}(p/2)^k(1-p/2)^{n-k} + \frac{1}{2}\sum_{k} \binom{n}{k}(-p/2)^k(1-p/2)^{n-k}\\
        &= 1/2 + \frac{1}{2}\sum_{k} \binom{n}{k}(-p/2)^k(1-p/2)^{n-k}\\
        &= 1/2 + (1-p)^n /2 \qquad \textrm{by the Binomial theorem.}
    \end{align*}
    Setting $p=1-2^{-a/n}$ gives us that this is at most $2^{at/n}(1+2^{-a})/2$.
    Since $t \leq \alpha n$, we get that the probability is at most $2^{\alpha a}(1+2^{-a})/2$.
    Picking $a = 3$ and $\alpha=1/10$ proves the bound.
\end{proof}

While the bound we get for this particular problem is rather weak both in terms of success probabilities and number of queries, the method is powerful enough to prove new quantum time-space tradeoff lower bounds.

\paragraph{Applications to time-space tradeoff lower bounds}

When proving quantum time-space tradeoff lower bounds using the paradigm of Borodin and Cook~\cite{BC82}, the key property one
needs to show is that after $T$ quantum queries the probability
that any quantum algorithm can produce a partial assignment $\tau$
of at least $Ck$ correct output answers decays exponentially
in $k$.
Therefore, to prove such a lower bound for a function $f$, the standard will be to apply \cref{cor:noise-stability}, with  relation $\relation$ containing all pairs $(x,\tau)$ where $f(x) \| \tau$.

Since our goal is to obtain an exponentially small
upper bound, we choose $p$ such that $(1-p)^{-t}$
is some small exponential in $k$.
In particular, since $1/(1-a+a^2/2)\le e^a$ for every
$a\in [0,1]$,
by choosing $p\le\delta k/t - (\delta k/t)^2/2$, we obtain that $(1-p)^{-t}\le e^{\delta k}$.

Suppose that for this value of $p$, we can choose an $A\subseteq [n]$ and
prove that for every $\tau$ 
\begin{equation}
\label{eq:noise-goal}
\max_{x \textrm{ s.t. } f(x)\|\tau}\Pr\s{f(\noise^A_p(x))\|\tau}\le 2^{-\gamma k},
\end{equation}
where $\gamma>\delta \log_2(e)$.
We then obtain by \cref{cor:noise-stability} that the probability that $\tau$ is 
correct is at most $2^{-(\gamma-\delta \log_2(e))k}$
which satisfies the conditions of the Borodin-Cook key 
lemma for time-space tradeoff lower bounds.

Sometimes we can get tighter bounds by picking a set of problematic inputs $\badset$ containing at most a $2^{-\eta k}$ fraction of all possible inputs and prove that for the above value of $p$ some $A \subseteq [n]$ and every $\tau$
\begin{equation}
    \max_{x \not \in \badset \text{ s.t. } f(x) \| \tau} \Pr\s{f(\noise^A_p(x))\|\tau}\le 2^{-\gamma k},
\end{equation}
again where $\gamma > \delta \log_2(e)$.
By \cref{cor:noise-stability} we then get that the probability that $\tau$ is correct is at most $2^{1-(\gamma - \delta \log_2(e))k} + 2^{1-\eta k}$ which is at most $2^{-\zeta k}$ for some constant $\zeta >0$ when $k$ is larger than some constant.

\section{Universal Hashing}
\label{sec:hashing}

Carter and Wegman~\cite{DBLP:journals/jcss/CarterW79,DBLP:journals/jcss/WegmanC81} define two closely related
notions of universal hash function families.

\begin{definition}
    A family of functions $\hash$ from $\domain$ to $\range$ is \emph{a universal hash function family} iff for every
    $x\ne x'\in \domain$, $\Pr_{h\in \hash}\s{h(x)=h(x')}=1/|\range|$.
    
    $\hash$ is a \emph{strongly universal} hash function family iff
    for every $x\ne x'\in \domain$ and every $y,y'\in \range$,
    $\Pr_{h\in \hash}\s{h(x)=y \textrm{ and }h(x')=y'}=1/|\range|^2$.   In other words, strongly universal hash
    function families are uniform and pairwise independent.
\end{definition}

Mansour, Nisan, and Tiwari~\cite{DBLP:journals/tcs/MansourNT93} associated a
computational problem, which we denote by $\eval_\hash$, with
each universal hash function family $\hash$ from $\domain=\bits^n$ to $\range=\bits^m$ for $n\ge m$. 
The input to $\eval_\hash$ is a pair $(h,x)$, consisting of an input
$x\in \bits^n$ and a binary description of a hash function $h\in \hash$ such that
a random binary string corresponds to a random
element of $\hash$, and the required output is to compute $h(x)\in \bits^m$.

Mansour, Nisan, and Tiwari~\cite{DBLP:journals/tcs/MansourNT93} proved a general 
time-space tradeoff lower bound of $T=\Omega(nm/S)$ for
any classical algorithm that uses time $T$ and space $S$ to compute
$\eval_\hash$ for any strongly universal\footnote{Though \cite{DBLP:journals/tcs/MansourNT93} simply uses the term ``universal'' from~\cite{DBLP:journals/jcss/CarterW79}, the definition they use requires the pairwise independence of the ``strongly universal'' definition that appears in~\cite{DBLP:journals/jcss/WegmanC81}.} hash function family $\hash$ from $\bits^n$ to
$\bits^m$.

\paragraph{Example applications:}
\begin{itemize}
    \item Mansour, Nisan, and Tiwari~\cite{DBLP:journals/tcs/MansourNT93} proved that the hash function family
$\hash^{n,m}_{\textrm{conv}}=\set{h^{a,b}}_{a,b}$ from $\bits^n$ to $\bits^m$ is strongly universal where $h^{a,b}$ 
for $a\in \bits^{n+m-1}$ and $b\in \bits^m$ is given by
$h^{a,b}(x)=(a\circ x)\oplus b$ where the convolution  $a\circ x=A_a x \bmod 2$ and
$A_a$ is the $m\times n$ Toeplitz matrix defined by
the vector $a$.
This is equivalent to computing the multiplication tableau associated with the integers represented by $a$ and $x$, and outputting the mod-2 sums of each of the middle $m$ columns (the only full columns of the tableau).
(Note that \cite{DBLP:journals/tcs/MansourNT93} use different indexing in their definition.)
\item
Subsequently,
Dietzfelbinger~\cite{DBLP:conf/stacs/Dietzfelbinger96} showed that
the strongly universal property holds for the hash function family $\hash^{n,m}_{\textrm{mult}}=\set{h_{a,b}}_{a,b}$
for $(n+m)$-bit integers $a$ and $b$ where 
$h_{a,b}$  maps $x$ to $ax+b$ and returns the integer consisting of the middle $m$ bits of the result.  That is, the strongly universal property also holds when there are carries.
\end{itemize}

It happens that Abrahamson~\cite{Abr91} had already proven that computing matrix-vector product $f_A(x)=Ax$ over
any non-trivial finite subset of a field has a tight time-space tradeoff lower bound of
$T=\Omega(mn/S)$ for any fixed matrix $A$ that is sufficiently \emph{rigid} (see \cref{sec:matrix-vector} for the definition of rigidity), and that almost all $m\times n$ Toeplitz
matrices are that rigid.  This immediately implies that the same lower
bound applies to computing convolution and hence to computing $\eval_{\hash^{n,m}_{\textrm{conv}}}$; indeed it shows that there are single fixed hash functions
$h^{a,0}\in \hash^{n,m}_{\textrm{conv}}$ that are hard to compute.

Dietzfelbinger later remarked~\cite{DBLP:conf/birthday/Dietzfelbinger18} that the strong universality of $\hash_{\textrm{mult}}$ implies a
time-space lower bound of $T=\Omega(n^2/S)$ for integer multiplication
by use of the theorem of \cite{DBLP:journals/tcs/MansourNT93}, though
he does not give the details.

\begin{proposition}[\cite{DBLP:conf/birthday/Dietzfelbinger18}]
\label{prop:multconv}
    Classical algorithms computing $n$-bit integer multiplication
    using space $S$ requires time $T$ that is $\Omega(n^2/S)$. 
\end{proposition}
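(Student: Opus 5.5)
The plan is a direct reduction from $\eval_{\hash^{n',m'}_{\textrm{mult}}}$ to integer multiplication, followed by the Mansour--Nisan--Tiwari theorem. By Dietzfelbinger~\cite{DBLP:conf/stacs/Dietzfelbinger96}, $\hash^{n',m'}_{\textrm{mult}}$ is strongly universal. The theorem of \cite{DBLP:journals/tcs/MansourNT93} then says that any classical algorithm computing $\eval_{\hash^{n',m'}_{\textrm{mult}}}$ with space $S$ needs time $\Omega(n'm'/S)$.

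First, fix the parameters: take $n'=m'=\Theta(n)$, say $n'=m'=\lfloor n/3\rfloor$. The input to $\eval$ is then a pair $(h_{a,b},x)$, where $a,b$ are $(n'+m')$-bit integers and $x$ is an $n'$-bit integer; all of these fit in $n$ bits. A random binary string $(a,b)$ gives a uniformly random member of the family, so the encoding requirement of \cite{DBLP:journals/tcs/MansourNT93} holds.

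Second, build the reduction. Given $(a,b,x)$, the algorithm runs the assumed multiplier on $a$ and $x$ (padded to $n$ bits) to obtain $ax$, and must then output the middle $m'$ bits of $ax+b$. The multiplier may emit its output bits in any order and will not store them, so I must add $b$ on the fly. I would use a two-pass approach:
\begin{itemize}
\item Pass 1: rerun the multiplier and track the carry into position $n'$ (the start of the middle window). This carry is a single bit, obtained by scanning the low $n'$ output bits of $ax$ together with $b$ in increasing order.
\item Pass 2: rerun the multiplier again and emit the middle bits of $ax+b$ with the carry propagated.
\end{itemize}

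If the multiplier does not output bits in order of significance, the second step is the main obstacle. In that case I would instead run the multiplier once per needed output bit, recomputing each bit of $ax$ and discarding the rest. For each target bit position $j$, the carry into $j$ depends on all lower bits, which would blow up time. To keep the total at $O(T)$ rather than $O(nT)$, I would assume, as is standard in the Borodin--Cook output model, that the product bits are produced in order. Alternatively, the reduction could compute the middle bits of $ax$ alone with $b=0$, and invoke strong universality of the family restricted to the affine shift only through the MNT hashing lemma. I expect justifying this step to be the delicate point.

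With in-order outputs, the reduction uses time $O(T+n)$ and space $S+O(\log n)$. Combining this with the $\Omega(n'm'/S)=\Omega(n^2/S)$ bound gives $T=\Omega(n^2/S)$ for $S\ge\log n$. The regime $S<\log n$ is trivial, since $T\ge n$ anyway and $n^2/S$ must also be beaten there. I would handle it by noting that MNT's bound is applied with space $S+O(\log n)=O(S)$ once $S=\Omega(\log n)$, and that every such algorithm needs $\Omega(\log n)$ space to index its input.
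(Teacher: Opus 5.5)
\paragraph{Gap: adding $b$ after the multiplication.}
The gap is at the step you flag yourself. You feed $a$ and $x$ to the multiplier and add $b$ afterwards, which requires carry propagation. A space-$S$ multiplier in the Borodin--Cook/MNT model may emit its output bits in an arbitrary, input-dependent order. That is the defining feature of the fully general model, not a convention you may assume away. Restricting to multipliers that emit bits in order of significance proves the bound only for that restricted class. Without the restriction, finding the carry into each middle position means rerunning the multiplier to recover lower-order bits of $ax$, which destroys the $O(T)$ time bound.

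Your fallback with $b=0$ fails as well. The family $x\mapsto$ (middle bits of $ax$) is not strongly universal: $x=0$ maps to $0$ for every $a$. So the MNT theorem does not apply to it.

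\paragraph{Missing idea: fold the addition into the multiplication.}
Build the addition of $b$ into the product itself, so that no post-processing is needed. Take $a,b$ of $2n$ bits and $x$ of $n$ bits, and set $u=2^{4n}a+1$ and $v=2^{4n}b+x$, both $6n$-bit integers. Then $uv=2^{8n}ab+2^{4n}(ax+b)+x$.

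Since $x<2^{4n}$ and $ax+b<2^{3n+1}\le 2^{4n}$, the three summands occupy disjoint bit ranges with no carries between them. Bits $4n,\dots,8n-1$ of $uv$ are exactly the bits of $ax+b$. Hence $h_{a,b}(x)$ is literally a fixed subset of the output bits of $uv$.

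Every input bit of $(u,v)$ is either a constant or a bit of $a$, $b$, or $x$. So any time-$T$, space-$S$ multiplier for $6n$-bit integers computes $\eval_{\hash^{n,n}_{\textrm{mult}}}$ with the same time and space. It relays queries, hardwires the constants, and discards the irrelevant output bits, in whatever order it produces them. Applying MNT with $n'=m'=n$ gives $\Omega(n^2/S)$ for $6n$-bit multiplication, hence for $n$-bit multiplication. Because the reduction has no space overhead, your separate treatment of small $S$ is also unnecessary.
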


\begin{proof}
    We first consider integer multiplication.
    For convenience of notation, assume that the input integers are $6n$-bits long.
    For $2n$ bit integers $a$ and $b$ and an $n$-bit integer $x$
    define $6n$-bit integers $u=2^{4n}a +1$ and $v=  2^{4n}b+x$.
    Then $uv=2^{8n} ab + 2^{4n} (ax+b) + x$.
    Observe that $ax+b$ has at most $3n+1\le 4n$ bits so computing the bits of $uv$ includes computing the bits of $ax+b$;
    in particular, this includes the computation of $h_{a,b}(x)$.
    Thus, evaluating any function $h\in \hash^{n,n}_{\textrm{mult}}$ on its input $x$ is a subfunction of $6n$-bit integer multiplication.  
    This implies that $n$-bit integer multiplication using space $S$
    requires time $T$ that is $\Omega(n^2/S)$ by~\cite{DBLP:journals/tcs/MansourNT93}.
\end{proof}

This lower bound for integer multiplication improves
the classical $T=\Omega(n^2/(S\log n))$ lower bound due to Abrahamson~\cite{Abr91} which is derived
from his $T=\Omega(mn/S)$ lower bound for computing 
convolution.

\paragraph{Extension to quantum query algorithms?}

Beame, Kornerup, and Whitmeyer~\cite{BKW26} used a bucketing method to extend Abrahamson's bound to yield a $T=\Omega(mn/S)$ time-space tradeoff lower bound for fully general quantum
algorithms computing matrix-vector product $f_A(x)=Ax$ for any fixed rigid matrix and hence for convolution.   (In \cref{sec:other-bounds}, we give a much simpler proof of the lower bound of~\cite{BKW26} using the noise operator, which is tight in this case.)
While this implies that fully general quantum algorithms have asymptotically the same $T=\Omega(n^2/(S\log n))$ time-space tradeoff lower bound for integer multiplication that Abrahamson proved for classical computation, the following questions remained open:
\begin{itemize}
    \item 
Can the classical time-space tradeoff lower bounds for computing any strongly universal hash function of~\cite{DBLP:journals/tcs/MansourNT93} be extended to quantum computation? 
\item What about a quantum extension of the tight lower bound for integer multiplication in~\cref{prop:multconv}?
\end{itemize}

The key property of strongly universal hash functions used in the classical lower bound of~\cite{DBLP:journals/tcs/MansourNT93} is that they satisfy a
Hash Mixing Lemma. Informally, this lemma says that if $\hash$ is a strongly universal hash function family
from $\domain$ to $\range$ and one fixes a set
$A\subseteq\hash$, a set $B\subseteq \domain$, and a set $C\subseteq \range$, then for uniformly chosen  $h\in A$ and $x\in B$ the probability that $h(x)\in C$ is close to its expectation if $|A\times B|$ is larger than $|\range|$.

Neither the progress methods of \cite{HM23}, nor the bucketing methods of~\cite{BKW26} can make use of this
hash mixing property: There is no natural notion of unusual progress required for the former and the latter bucketing method requires that very small portions of the input can completely change the output when the rest is fixed, which is incompatible with the hash mixing property since it only applies when large portions of the input are allowed to vary (at least $n$ input bits).

The quantum time-space tradeoff lower bound for convolution-based hashing applies to computing most single members of the family individually.
Quite surprisingly, we are able to use our noise stability bound to prove the exact same fully general quantum time-space tradeoff lower bound for computing most members of arbitrary strongly universal hash function families that match the optimal classical lower bounds.

As a particular consequence we obtain optimal quantum time-space
tradeoff lower bounds for integer multiplication.

\blank{

For any function $h:\bits^n\rightarrow \bits^m$, and a
set $K\subseteq [m]$ we can define the function
$h_K:\bits^n\rightarrow \bits^K$ by
$h_K(x)=h(x)_K$, the projection of the output of $h$
onto the coordinates in $K$.
The following property is obvious from the
definition of strongly universal:

\begin{proposition}\label{lem:proj-hash-still-universal}
    Let $\hash$ be any strongly universal hash function family.
    Then, for any $K\subseteq [m]$, the family $\hash_K$ given as the uniform distribution over $h_K$ for $h$ chosen from $\hash$, is a strongly universal hash function family from $\bits^n$ to $\bits^K$.
\end{proposition}


\begin{lemma}
\label{lem:hash-oblivious}
Let $\hash$ be a strongly universal hash function family
from $\bits^n$ to $\bits^m$ and
let $K\subseteq [m]$ have size $k\le n/8$.
The probability that any quantum query algorithm running
for at most $n/8$ steps can correctly compute $\eval_{\hash_K}$
is at most $6\cdot 2^{-k/4}$.
\end{lemma}

\begin{definition}
    For any input $x\in \bits^n$ and any set $B\subseteq [n]$, define $x^B$ to be the input $x$ with the bits
    of $B$ flipped.
\end{definition}

The following proposition is immediate from the definition
of strongly universal hash function families and \cref{lem:proj-hash-still-universal}.

\begin{proposition}
\label{prop:hash-K}
    Suppose that $\hash$ is a strongly universal hash function family
    from $\bits^n$ to $\bits^m$ and let $K\subseteq [m]$
    with $|K|=k$.
    Then for any $x\in \bits^n$ and $B\ne \varnothing$,
    $$\Pr_{h\in \hash}\s{h_K(x)=h_K(x^B)}=2^{-k}.$$
\end{proposition}

\begin{proof}[Proof of \cref{lem:hash-oblivious}]
Let $p\in (0,1)$ to be defined later.
Observe that for $x\in \bits^n$, 
$\noise_p(x)=x^B$, where $x^B$ is $x$ with the bits of
set $B$ flipped,
and $B$ is randomly chosen
by including each element of $[n]$ independently with
probability $p/2$.
Denote ths distribution of subsets of $[n]$ as $\B([n],p/2)$.
    and hence by \cref{prop:hash-K},
    $$\Pr_{\substack{h\in \hash\\ x\in \bits^n\\ B\sim \B([n],p/2)}} \s{h_K(x)=h_K(x^B)\land B\ne \varnothing}\le 2^{-k}.$$
Write $P_K^{h,x}=\Pr_{B\sim \B([n],p/2)} [h_K(x)=h_K(x^B)\land B\ne \varnothing]$.
Then by \cref{prop:hash-K}, we have
$$\Es{\substack{h\in \hash\\x\in \bits^n}}[P_K^{h,x}]\le 2^{-k}.$$
Therefore by Markov's inequality we have
$$\Pr_{\substack{h\in \hash\\ x\in \bits^n}}[P_K^{h,x}\ge 2^{-k/2}]\le 2^{-k/2}.$$

Let $\badset^K$ over $\hash\times \domain$ be 
the set of pairs $(h,x)$ such that $P_K^{h,x} \geq 2^{-k/2}$.
For each fixed $\tau\in \bits^K$ define the event
$E_\tau$ to be $\set{(h,x)\mid h_K(x)=\tau} \setminus \badset^K$.
Let $A$ be the subset of the input register bits of $\eval_{\hash_K}$ corresponding to the $n$ bits of the input $x$ to the hash function.

Observe that with our choices, $\noise^A_p(h,x))=(h,\noise_p(x))$.
Therefore, to apply \cref{cor:noise-stability} with
$\badset = \badset^K$ and this set $A$, 
we need to upper bound the probability that, given any $(h,x)$ such that
$P^{h,x}_K\leq 2^{-k/2}$ and $h_K(x)=\tau$, we have
$h_K(\noise_p(x))=\tau$.

Since $h_k(x)=\tau$, $\noise_p(x)=x^B$ for $B\sim \B([n],p/2)$, and
 $P^{h,x}_K\le 2^{-k/2}$, we have
$\Pr[h_K(\noise_p(x))=\tau \land \noise_p(x) \ne x]\le 2^{-k/2}$.
We also have $\Pr[\noise_p(x)=x]=(1-p/2)^n\le e^{-pn/2}$.
Therefore $\Pr[h_K(\noise_p(x))=\tau]\le 2^{-k/2}+e^{-pn/2}$.

Let $t=n/8$ and set $p=\delta k/t -(\delta k/t)^2/2$ for $0<\delta<1$.
Applying \cref{cor:noise-stability}, we derive that the success probability of any quantum algorithm computing $\eval_{\hash_K}$ with $t \leq n/8$ queries to a random input $(h,x)$ is at most 
\begin{align*}
2(1-p)^{-t} (2^{-k/2}+e^{-pn/2}) + 2^{1-k/2} &\le 2 \cdot e^{\delta k}(2^{-k/2}+e^{-3\delta k}) + 2^{1-k/2}\\
&=e^{\delta k} 2^{1-k/2}+2 \cdot e^{-2\delta k} + 2^{1-k/2}.
\end{align*}
Choosing $\delta=1/6$ and using $e< 2\sqrt{2}$, we get that the probability of
correctness is at most $6\cdot 2^{-k/4}$.
\end{proof}

We use this to show that the analogue of the classical
time-space tradeoff lower bound for computing hash functions of Mansour, Nisan, and
Tiwari applies to output-oblivious quantum algorithms.
In fact, our bound applies slightly more broadly, since
it only requires that the hash function family is
universal, not strongly universal as required in~\cite{DBLP:journals/tcs/MansourNT93}, since their bound relies on the hash mixing lemma, which we do not need.

\begin{theorem}
\label{thm:hash-oblivious}
Let $\hash$ be a family of strongly universal hash functions 
from $\bits^n$ to $\bits^m$.
Any output-oblivious quantum algorithm that computes $\eval_\hash$ with $T$ queries and space $S$ and success probability at least $2^{-S}$ must have 
$T$ that is $\Omega(nm/S)$.
\end{theorem}

\begin{proof}
Suppose that we have an output-oblivious quantum 
algorithm that computes $\eval_\hash$ using $T$ queries
and space $S$.
Divide the sequence of query steps into $r=\lceil 8T/n\rceil$
segments, each of which makes at most $n/8$ queries.
Since the algorithm is output-oblivious, there are fixed
sets $K_1,\ldots,K_r$ of output bits that are
produced in each of the corresponding segment.
In order for the algorithm to be correct, the answers
in each segment must be correct.
There are $m$ output bits in total that must be produced
on every input so there must be some segment $j$
such that $|K_j|\ge m/r$.  

Consider the $j$-th segment. 
If $|K_j|>n/8$, we will ignore the correctness required for some of the elements of $K_j$ and focus on only its
first $n/8$ elements; so in the following we assume
that $|K_j|\le n/8$ without loss of generality.
The outputs that must be produced in that segment correspond
exactly to the answers to $\eval_{\hash_{K_j}}$.

The quantum state at the start of this 
segment does not correspond to the uniform distribution, so we cannot apply \cref{lem:hash-oblivious} directly,
but by combining this with
\cref{thm:quantum-union-bound} we obtain that the
probability that the answers in this segment are correct is at most
$6\cdot 2^{S-|K_j|/4}$.
For $|K_j|\ge 12S$, this is $6\cdot 2^{-2S}< 2^{-S}$.
Therefore, for success probability at least $2^{-S}$ we
need $m/r< 12S$ or equivalently $12nr\ge nm/S$.
By definition $nr$ is $\Theta(T)$ and the lower bound
follows.
\end{proof}

The following corollary is then immediate for the same reason as~\cref{prop:multconv}.

\begin{corollary}
    Any output-oblivious quantum query
    algorithm that computes  $n$-bit integer multiplication
    using space $S$ requires time $T$ that is $\Omega(n^2/S)$. 
\end{corollary}
}

\subsection{Quantum and classical time-space tradeoffs for computing strongly universal hash functions} 

We start by formally defining the notion of the smoothness of a probability distribution, which will be a useful notion in our arguments.

\begin{definition}
    For a set $G$, we use $\probspace(G)$ to denote the set of probability distributions on $G$.
    We say that a distribution $\calG\in \probspace(G)$ is \emph{$s$-smooth} iff the collision probability of
    independent samples from $\calG$ is at most $s$.
\end{definition}

While the motivation in~\cite{DBLP:journals/tcs/MansourNT93} was to prove classical time-space tradeoff lower bounds, this lemma does not seem useful for proving lower bounds for quantum computation.   However with our noise operator method, the following closely related lemma (whose proof has much in common with that of the Hash Mixing Lemma) is the key to our lower bounds.

The key differences between our lemma and the hash mixing lemma are as follows:
\begin{itemize}
    \item The mixing property is shown to hold for all but a small number of the functions $h\in \hash$ individually; there is no longer a restricted set $A\subseteq \hash$.
    \item The single target output set $C$ is replaced by a family of such sets.
\end{itemize}
The other less-substantial change, which matters for our application, is that the uniform distribution over a sufficiently large set $B$ is replaced by any distribution on $\domain$ that is sufficiently smooth; i.e., has small collision probability.

\begin{lemma}[Almost-All Hash Mixing Lemma]
\label{lem:avg-hash-mixing}
Let $\hash$ be a strongly universal family of hash
functions from $\domain$ to $\range$.
Let $s\in (0,1)$.
Let $\calB\in \probspace(\domain)$ be a distribution on $\domain$ that is
$s$-smooth.
Suppose that there is a collection $\calC$ of subsets $C\subset\range$ with $|C|/|\range|=q$.
Then, the probability over $h\sim \hash$ that
there exists a $C\in \calC$ such that 
$$\big|\Pr_{x\sim \calB}[h(x)\in C]- q\big|\ge  \sqrt{q(1-q)}$$ is at most 
$s|\calC|$.
\end{lemma}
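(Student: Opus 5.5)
Fix a single $C\in\calC$ and define the random variable (over $h\sim\hash$)
\[
Z_C(h)=\Pr_{x\sim\calB}[h(x)\in C]-q=\sum_{x\in\domain}\calB(x)\big(\indicator[h(x)\in C]-q\big).
\]
The plan is a second-moment (Chebyshev) bound on each $Z_C$, followed by a union bound over $\calC$. I would show $\E_h[Z_C]=0$ and $\E_h[Z_C^2]\le s\,q(1-q)$. Chebyshev then gives $\Pr_h[|Z_C|\ge\sqrt{q(1-q)}]\le s\,q(1-q)/(q(1-q))=s$, and summing over the $|\calC|$ sets yields the claimed bound $s|\calC|$. If $q\in\{0,1\}$ the deviation $Z_C$ is identically $0$, so the event $|Z_C|\ge 0$ always holds; I would note that the threshold is degenerate there and the statement implicitly assumes $0<q<1$.

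For the expectation, strong universality gives uniformity: each $h(x)$ is uniform on $\range$. Hence $\Pr_h[h(x)\in C]=q$, and $\E_h[Z_C]=0$ by linearity.

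For the second moment, write $\xi_x=\indicator[h(x)\in C]-q$, so that
\[
\E_h[Z_C^2]=\sum_{x,x'}\calB(x)\calB(x')\,\E_h[\xi_x\xi_{x'}].
\]
For $x\ne x'$, pairwise independence makes $\xi_x$ and $\xi_{x'}$ independent mean-zero variables, so their cross term vanishes. For $x=x'$, we have $\E_h[\xi_x^2]=q(1-q)$. Therefore
\[
\E_h[Z_C^2]=q(1-q)\sum_x\calB(x)^2\le s\,q(1-q),
\]
because $\sum_x\calB(x)^2$ is exactly the collision probability of $\calB$, which is at most $s$ by $s$-smoothness.

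There is no real obstacle. The only points to get right are that the cross terms vanish exactly, which uses the full pairwise independence of the strongly universal definition rather than mere universality, and that the threshold $\sqrt{q(1-q)}$ matches the variance scale so that Chebyshev gives precisely $s$ per set.
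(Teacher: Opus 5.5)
Your proposal is correct and is essentially the paper's proof. Both arguments bound $\E_{h\sim\hash}\bigl(\Pr_{x\sim\calB}[h(x)\in C]-q\bigr)^2\le q(1-q)s$ by separating the $x=x'$ contribution, which is controlled by the collision probability, from the $x\ne x'$ contribution, which is handled by pairwise independence. The only cosmetic difference is that the paper applies Markov once to the sum of these squared deviations over $C\in\calC$, while you apply Chebyshev to each $C$ and take a union bound; both give $s|\calC|$. Your caveat that $0<q<1$ is needed is also right, since the paper's Markov step at threshold $q(1-q)$ tacitly assumes it as well.
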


\begin{proof}
We begin with a second moment calculation associated with the distribution $\calB$ that is very similar to one in the proof of the Hash Mixing Lemma.

\begin{claim}
\label{claim:doublehash}
Fix any $C\in \calC$.  Then
    $\E_{h\sim \hash}\
    \big(\E_{x\sim \calB} (\indicator_{h(x)\in C}-q)\big)^2\le q(1-q)s$.
\end{claim}

\begin{proof}[Proof of Claim]
\begin{align*}
    &\E_{h\sim \hash}\
    \big(\E_{x\sim \calB} (\indicator_{h(x)\in C}-q)\big)^2\\
    &=\E_{h\sim \hash}
    \big(\E_{x\sim \calB} (\indicator_{h(x)\in C}-q)\big)\big(\E_{x'\sim \calB}(\indicator_{h(x')\in C}-q)\big)\\
    &=\E_{h\sim\hash} \big(
    \E_{x\sim \calB}\ \E_{x'\sim\calB}\indicator_{h(x)\in C}\, \indicator_{h(x')\in C}\ -\ 2q\E_{x\sim \calB} \indicator_{h(x)\in C}\ +\ q^2 \big)\\
    &=\E_{x\sim \calB}\ \E_{x'\sim\calB}\ \E_{h\sim\hash} \indicator_{h(x)\in C}\, \indicator_{h(x')\in C}\ -\ 2q\E_{x\sim \calB}\  \E_{h\sim\hash} \indicator_{h(x)\in C}\ +\ q^2\\
    &=\E_{x\sim \calB}\ \E_{x'\sim\calB}\ \E_{h\sim\hash}\indicator_{h(x)\in C}\,\indicator_{h(x')\in C}\ -\ q^2\qquad\textrm{since $\E_{h\sim\hash}\indicator_{h(x)\in C}=q$}.\tag{\theequation}\label{eq:mixmoment}
\end{align*}
Now 
\begin{align*}
&\E_{x\sim \calB}\ \E_{x'\sim\calB}\ \E_{h\sim\hash}\indicator_{h(x)\in C}\indicator_{h(x')\in C}\\
&=
\E_{x\sim \calB}\ \E_{x'\sim\calB} \big(\indicator_{x'=x}\cdot \E_{h\sim\hash}\indicator_{h(x)\in C}\  +\ \indicator_{x'\ne x}\cdot\E_{h\sim\hash}\indicator_{h(x)\in C}\ \indicator_{h(x')\in C}\big)\\
&=\E_{x\sim \calB}\ \E_{x'\sim\calB} (q\cdot \indicator_{x'=x}\  +\  q^2\cdot \indicator_{x'\ne x})\qquad\textrm{since $\hash$ is strongly universal}\\
&=q \Pr_{x,x'\sim\calB}[x'=x]\ +\ q^2(1-\Pr_{x,x'\sim \calB}[x'=x]) \\
&\le q s+q^2(1-s) = q(1-q)s+q^2\qquad\textrm{since $\calB$ is $s$-smooth}.
\end{align*}
Plugging this bound into \eqref{eq:mixmoment} yields the claim.
\end{proof}

\begin{claim}
\label{claim:sumhash}
    $\E_{h\sim \hash}\ \sum_{C\in\calC} \big(\E_{x\sim \calB}(\indicator_{h(x)\in C}- q) \big)^2\le q(1-q)s|\calC|$.
\end{claim}

\begin{proof}[Proof of Claim]
\begin{align*}
 \E_{h\sim \hash}
\sum_{C\in \calC} \big(\E_{x\sim \calB}(\indicator_{h(x)\in C}- q) \big)^2
&=\sum_{C\in \calC}\ \E_{h\sim \hash}  \big(\E_{x\sim \calB}(\indicator_{h(x)\in C}- q) \big)^2\\
&\le q(1-q)s|\calC|\qquad\textrm{by \cref{claim:doublehash}}\qedhere
\end{align*}
\end{proof}

\begin{claim}
\label{claim:chebyshev}
    $\Pr_{h\sim\hash} \bigg[\exists C\in \calC\mathrm{\ s.t. } \Pr\big[\big(\E_{x\sim \calB}(\indicator_{h(x)\in C}- q) \big)^2 \ge q(1-q) \big]\bigg]\le s|\calC|.$
\end{claim}

\begin{proof}
    By Markov's inequality, 
    the probability over $h\in \hash$ that $\sum_{C\in\calC} \big(\E_{x\sim \calB}(\indicator_{h(x)\in C}- q) \big)^2\ge q(1-q)$ is at most $s|\calC|$.
    This quantity is the expected number of $C\in \calC$ such that the expected value given $C$ is large, which is an upper bound on the probability that there is such a $C\in\calC$.
\end{proof}

Since
\begin{equation*}
\big|\Pr_{x\sim \calB}[h(x)\in C]- q\big|
=\big|\E_{x\sim \calB}\indicator_{h(x)\in C}-q\big|
\end{equation*}
using \cref{claim:chebyshev},
we have
\begin{equation*}
\Pr_{h\sim\hash} \bigg[\exists C\in \calC\textrm{ s.t. } \big|\Pr_{x\sim \calB}[h(x)\in C]- q\big|\ge \sqrt{q(1-q)} \bigg]\le s|\calC|
\end{equation*}
which is what we wanted to prove.
\end{proof}

We now use the Almost-All Hash Mixing Lemma to prove
the following key lemma that lets us apply the
general Borodin-Cook paradigm to prove quantum time-space tradeoff lower bounds for hashing problems.

\begin{lemma}
\label{lem:strong-hash-query-bound}
There is an $\alpha>0$ such that the following holds. Let $\hash$ be a strongly universal hash function family from $\bits^n$ to $\bits^m$ for $m\le n$ and $11\le k\le n/32$. 
For all but a $2^{-k/3}$ fraction of $h\in \hash$, 
any quantum query algorithm making
    $t\leq \alpha n$ queries to input 
    $x\in\bits^n$ chosen uniformly at random produces a partial output of size at least $k$ that is consistent with
    $h(x)$ with probability at most $2^{-k/8}$.
\end{lemma}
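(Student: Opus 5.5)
We will apply the Quantum Noise Stability Bound (\cref{cor:noise-stability}) with $A=[n]$, the relation $\relation=\{(x,\tau): h(x)\|\tau,\ |\tau|\ge k\}$ (and, without loss of generality, partial outputs of size exactly $k$, since truncating a larger consistent partial output keeps it consistent). We take $p=\delta k/t-(\delta k/t)^2/2$ with $t=\alpha n$, so that $(1-p)^{-t}\le e^{\delta k}$, for a small constant $\delta$. Note that $pn\approx \delta k/\alpha$. The plan is to show that, for all but a $2^{-k/3}$ fraction of $h$, all but a tiny fraction of inputs $x$ satisfy $\Pr[h(\noise_p(x))\|\tau]\le 2^{-\gamma k}$ for every size-$k$ partial output $\tau$ consistent with $h(x)$, with $\gamma$ comfortably larger than $\delta\log_2 e$. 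The remaining inputs are placed into $\badset$.

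\emph{Step 1: the noise distribution is smooth away from its center.} For a fixed $x$, write $\noise_p(x)=x^B$, where each coordinate is flipped independently with probability $p/2$. Conditioning on $|B|=r$ for some $r$ in a typical window around $pn/2$ (we will want $r\ge r_0=\Theta(k/\alpha)$, which holds except with probability $e^{-\Omega(pn)}=2^{-\Omega(k/\alpha)}$ by \cref{thm:chernoff-lower-tail}), the distribution of $x^B$ is uniform on the Hamming sphere of radius $r$ around $x$. It is therefore $s$-smooth with $s=\binom{n}{r}^{-1}\le (r/n)^r$. Because $pn/2\ge r_0$ is large compared to $k$, we get $\log(1/s)\ge r\log(n/r)$, which is $\gg k$ once $\alpha$ is small. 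This is where $\alpha$ is chosen.

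\emph{Step 2: apply the Almost-All Hash Mixing Lemma.} Take $\calC$ to be the family of sets $C_\tau=\{y: y\|\tau\}$ over all $\tau$ on $k$ coordinates. Then $q=2^{-k}$ and $|\calC|=\binom{m}{k}2^k\le (2em/k)^k$. Rather than use Markov separately for each $x$, we average. Let $X_{h,x}$ be the indicator that some $C\in\calC$ and some radius $r$ in the window satisfy $\Pr_{x'\sim\mathrm{sphere}_r(x)}[h(x')\in C]\ge q+\sqrt{q}\le 2\cdot 2^{-k/2}$. \cref{lem:avg-hash-mixing}, together with a union bound over at most $n$ radii, gives $\E_h X_{h,x}\le n s|\calC|$ for every $x$. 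Hence $\E_h \E_x X_{h,x}\le n s|\calC|$, and by Markov, all but a $2^{-k/3}$ fraction of $h$ have $\Pr_x[X_{h,x}=1]\le 2^{k/3}ns|\calC|$. For such an $h$ we set $\badset=\{x: X_{h,x}=1\}$. For $x\notin\badset$ and any consistent $\tau$,
\[
\Pr[h(\noise_p(x))\|\tau]\le 2^{1-k/2}+2^{-\Omega(k/\alpha)}.
\]
Here the second term comes from atypical $|B|$, which includes $B=\varnothing$, so $\noise_p(x)=x$ is harmless.

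\emph{Step 3: finish.} \cref{cor:noise-stability} bounds the success probability by
\[
2e^{\delta k}\bigl(2^{1-k/2}+2^{-\Omega(k/\alpha)}\bigr)+2\cdot 2^{k/3}ns|\calC|.
\]
With $\delta$ a small constant, the first term is at most $2^{-k/4}$. The last term needs $\log(1/s)\ge k\log(2em/k)+\log n+k/3+k/8+O(1)$.

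The main obstacle is this last requirement. Since $m\le n$, we have $k\log(em/k)\le k\log(en/k)$, while $\log(1/s)\approx r\log(n/r)$ with $r\approx pn/2\approx \delta k/(2\alpha)$. Thus $r\log(n/r)$ dominates $k\log(n/k)$ only if $r$ exceeds $k$ by a large constant factor, which forces $\alpha\ll\delta$. With $\delta$ fixed (for instance $\delta=1/20$) and $\alpha$ a sufficiently small constant, this works. We must also check that $r\le n/2$, which holds since $k\le n/32$ and $\alpha$ is fixed, and that the additive $\log n$ is absorbed using $k\ge 11$ together with the $\log(n/r)$ factor. If this bookkeeping is too tight, the fallback is to restrict the noise to a subset $A$ with $|A|$ around $n/2$, or to coarsen the window of radii. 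The constants $11$ and $1/8$ in the statement suggest that this balancing is exactly where the authors' argument spends its effort.
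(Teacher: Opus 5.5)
Your proposal follows the paper's proof in all essentials. The shared steps are: \cref{cor:noise-stability} with $A=[n]$; writing $\noise_p(x)=x^B$ and discarding small $|B|$ via \cref{thm:chernoff-lower-tail} so that what remains is smooth; applying \cref{lem:avg-hash-mixing} for each fixed $x$ with $\calC=\{C_\tau\}$; and a Markov-type averaging over pairs $(h,x)$ that yields both the excluded hash functions and, for each remaining $h$, the set of free inputs. The differences are cosmetic. The paper conditions once on $|B|\ge pn/4$ and shows directly that this single distribution is $2/\binom{n}{pn/4}$-smooth, so it needs no union bound over radii. It also splits the Markov step as $\sqrt{\epsilon}$ for $h$ and $\sqrt{\epsilon}$ for $x$, rather than your $2^{-k/3}$ and $2^{k/3}\epsilon$.

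The one thing to fix is your parameter discussion: you cannot make $r/k$ large by shrinking $\alpha$. Take $k$ near $n/32$ and $\alpha<\delta/32$. Then $\delta k/t>1$, so $p=\delta k/t-(\delta k/t)^2/2$ no longer behaves like $\delta k/t$, and it is negative once $\alpha<\delta/64$. In any case $p\le 1/2$ forces $\E|B|=pn/2\le n/4\approx 8k$. So in this regime both your estimate $pn\approx\delta k/\alpha$ and your justification of $r\le n/2$ fail.

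What rescues the argument is that a large ratio $r/k$ is unnecessary. For $k\le n/32$, each factor of $\binom{n}{k}/\binom{n}{2k}=\prod_{i=0}^{k-1}\frac{2k-i}{n-k-i}$ is at most $\frac{2k}{n-k}\le\frac18$. Hence $r_0=2k$ already gives $\binom{m}{k}2^{k}/\binom{n}{2k}\le 2^{-2k}$. The sharper form $2^k\bigl(2k/(n-k)\bigr)^k$ also absorbs the extra factor $n$ from your union over radii when $k\ge 11$. Accordingly, the paper ties $p$ to $t$ and fixes the constants as $p=k/(32t)$ and $\alpha=1/256$. These give $(1-p)^{-t}\le e^{k/8}$ and $2k\le pn/4<n/2$, and the rest is the bookkeeping you anticipated.
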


\begin{proof}
The idea is to analyze the problem using the
Noise Stability Bound (\cref{cor:noise-stability}) and
invoking the Almost-All Hash Mixing Bound (\cref{lem:avg-hash-mixing}) for each $x\in \bits^n$ where the distributions $\calB_x$ in 
that lemma are
defined in terms of the noise applied to the input
$x$. 

Let $p\in (0,1)$ to be defined later.
Observe that for $x\in \bits^n$, $\noise_p(x)=x^B$, where $x^B$ is $x$ with the bits of
set $B$ flipped,
and $B$ is randomly chosen
by including each element of $[n]$ independently with
probability $p/2$.
Denote this distribution of subsets of $[n]$ as $\B([n],p/2)$ and
For each $x\in \bits^n$, let $\calB_x$ be the distribution of $x^B$ for $B\sim \B([n],p/2)$ conditioned
on $|B|\ge pn/4$. 
A simple Chernoff bound proves the following:

\begin{claim}
    The probability that $B \sim \B([n],p/2)$ has $\abs{B}<pn/4$ is at most $e^{-pn/16}$; in particular, the probability that $\noise_p(x)$
    is not in the support of $\calB_x$ is at most $e^{-pn/16}$ and every element in 
    the support of $\calB_x$ has probability proportional to its probability under $\noise_p(x)$ by a factor $f>1$.
\end{claim}

\begin{claim}
The distribution $B\sim \B([n],p/2)$ conditioned on $|B|\ge pn/4$ is $2/\binom{n}{pn/4}$-smooth.
\end{claim}

\begin{proof}[Proof of Claim]
Let $B'$ be any subset of $[n]$ with $|B'|\ge pn/4$ be an arbitrary element in the support of the conditional distribution.
Each subset of size $|B'|$ is equally likely, $p \leq 1$, and the probability is monotonically decreasing in the size of the subset, so
 $\Pr_{B\sim \B([n],p/2)}[B=B']< 1/\binom{n}{pn/4}$.   
The conditioning, which holds with high probability, can only increase this probability by a small factor that is certainly at most 2. 
\end{proof}

Since the various $x^B= x^{B'}$ iff $B=B'$, each distribution $\calB_x$ is $2/\binom{n}{pn/4}$-smooth.

Each partial output $\tau$ of $k$ output bits
defines a set $C_\tau\subseteq \bits^m$ of
possible hash function outputs that could be consistent with it.
Since $|\tau|=k$, each $C_\tau$ consists of a fraction
$q=2^{-k}$ of elements of $\bits^m$.
Furthermore, 
the set $\calC$ of possible $C_\tau$
has  $|\calC|=\binom{m}{k}2^k$.

\begin{claim}
\label{claim:apply-hash-mixing}
For each fixed $x\in \bits^n$, at most
a $\binom{m}{k} 2^{k+1} /\binom{n}{pn/4}$ fraction
of $h\in \hash$ have some $C\in \calC$ such that 
$\Pr_{x'\sim \calB_x}[h(x')\in C]\ge 2^{1-k/2}$.
\end{claim}

\begin{proof}[Proof of Claim]
Fix $x$.
Letting $\calB=\calB_x$, $s=2/\binom{n}{pn/4}$, $q=2^{-k}$, we 
apply \cref{lem:avg-hash-mixing}
to derive that the
probability that $h\in \hash$ has some $C\in \calC$
such that 
$$\Pr_{x'\sim \calB_x}[h(x')\in C]\ge  q+\sqrt{q(1-q)}=2^{-k}+\sqrt{2^{-k}(1-2^{-k})}$$
is at most 
$|\calC|s=\binom{m}{k} 2^{k+1}/\binom{n}{pn/4}$.
Since $2^{1-k/2}\ge 2^{-k}+\sqrt{2^{-k}(1-2^{-k})}$ the
claim follows.
\end{proof}

\begin{claim}
\label{claim:noise-hash}
    For each fixed $x\in \bits^n$, at most
a $\binom{m}{k}  2^{k+1}/\binom{n}{pn/4}$ fraction
of $h\in \hash$ have 
$\Pr[h(\noise_p(x))||\tau]\ge e^{-pn/16} + 2^{1-k/2}$
for some $\tau$ that assigns $k$ output
    values.
\end{claim}

\begin{proof}
    This follows immediately from the closeness of
    distributions
    $\noise_p(x)$ and $\calB_x$ and the fact that
    $h(\noise_p(x))||\tau$ iff $h(\noise_p(x))\in C_\tau$ for $C_\tau\in \calC$. 
\end{proof}

Define $\hash_{bad}$ to be the set of those 
$h\in \hash$ such that for at least a $\sqrt{\binom{m}{k}  2^{k+1}/\binom{n}{pn/4}}$ fraction of $x\in \bits^n$,
there is some $\tau$ such that
$\Pr[h(\noise_p(x))||\tau]\ge e^{-pn/16} + 2^{1-k/2}$.
By an averaging (Markov-style) argument, we have
 $|\hash_{bad}|/|\hash| \le \sqrt{\binom{m}{k}  2^{k+1}/\binom{n}{pn/4}}$.

Fix $h\in \hash\setminus \hash_{bad}$.
We will prove that the quantum query algorithm for $h$
running in $t$ steps has small success probability using
\cref{cor:noise-stability}.

Let $\outputspace$ be the set of possible partial
outputs $\tau$ of that assign $k$ bits. 
Let $A=[n]$.
Define relation $\relation^h$ on $\bits^n\times \outputspace$ where $(x,\tau)\in \relation^h$ iff $h(x)$ is consistent with $\tau$, which is the partial output produced by the algorithm for $h$ on input $x$.

Define $\badset^h$ to be the set of $x\in \bits^n$
such that there exists a partial assignment $\tau$ of $k$ bits where $\Pr[h(\noise_p(x))||\tau]\ge e^{-pn/16} + 2^{1-k/2}$.  
For any $\tau\in \outputspace$ define
$\event^h_\tau$ to be the set of $x\in \bits^n\setminus \badset^h$ such
that $h(x)$ is consistent with $\tau$.

We now apply \cref{cor:noise-stability} to say that
the probability that given a uniformly random $x\in \bits^n$, the quantum query algorithm for $h$ making $t\le \alpha n$ queries produces a partial output with at
least $k$ assigned values
that is consistent with $h(x)$ is at most
\begin{align}
&2(1-p)^{-t} \max_{\tau}\max_{x\in \event^h_\tau}\Pr\big[h(\noise_p(x))||\tau\big]+\sqrt{\binom{m}{k} 2^{k+3}/\binom{n}{pn/4}}\nonumber\\
&\le 2(1-p)^{-t} (e^{-pn/16} + 2^{1-k/2})+\sqrt{\binom{m}{k} 2^{k+3}/\binom{n}{pn/4}}.\label{eq:strong-hash-noise-bound}
\end{align}

It remains to optimize the choice of $\alpha$ in the upper bound on $t$, which can be assumed to be maximal $\floor{\alpha n}$ without loss of generality, and to choose a value of $p$ so that \eqref{eq:strong-hash-noise-bound} yields the $2^{-k/8}$ bound claimed in the statement of the lemma.

We choose $p=k/(32t) < 1/2$ which is smaller than $k/(16t) -(k/(16t))^2/2$ so $(1-p)^{-t}\le e^{k/8}$.
Also, since $t= \floor{\alpha n}$, $pn/4\ge k/(128\alpha)$.
Choosing $\alpha=1/256$ we get $2k \leq pn/4 < n/2$ since
$k\le n/32$.

With these values, the bound \eqref{eq:strong-hash-noise-bound}
is 
$$2e^{k/8}(e^{-k/2}+2^{1-k/2})+ \sqrt{\binom{m}{k} 2^{k+3}/\binom{n}{2k}}.$$
Since $m\le n$ and $k\le n/32$ we have 
\begin{align*}
   \binom{m}{k}2^{k+3}/\binom{n}{2k}&\le 2^{k+3}\binom{n}{k}/\binom{n}{2k}
    = 2^{k+3} \frac{2k}{n-k}\cdot\frac{2k-1}{n-k-1}\cdots \frac{k+1}{n-2k+1} \\
    &\le 2^{k+3}\big(\frac{2k}{n-k}\big)^{k}\le 2^{k+3}2^{-3k}= 8\cdot 2^{-2k}.
    \end{align*}
Therefore \eqref{eq:strong-hash-noise-bound}
 is at most
$2e^{k/8}(e^{-k/2}+2^{1-k/2})+\sqrt{8}\cdot 2^{-k}\le 2^{-k/8}$ since $k\geq 11$.

Finally, with these choices of $\alpha$ and $p$ to say that the upper bound on $|\hash_{bad}|/|\hash|$
that is $\sqrt{\binom{m}{k} 2^{k+3}/\binom{n}{pn/4}}\le 2^{1/2-k}<2^{-k/3}$.
\end{proof}

We can now use~\cref{lem:strong-hash-query-bound} to prove our main theorem about the complexity of computing functions from strongly universal hash function families.

\begin{theorem}
\label{thm:almost-all-hash-time-space}
Let $\hash$ be a strongly universal family of hash functions from $\bits^n$ to
$\bits^m$ for $n\ge m$ for sufficiently large $n$.  Let $S\le m/1024$ be a space bound.   
For all but a $2^{-9S}$ fraction of functions $h\in \hash$, any quantum query algorithm that
computes the function $h$ using $S$ qubits of memory, that succeeds with probability at least $2^{-S}$ on inputs uniformly chosen from $\bits^n$,
must make a number of queries $T$ that is $\Omega(mn/S)$.
Moreover, this statement also holds if functions identified as hard for space $S$ must be hard for all space bounds $S'$ such that $S \leq S' \leq m/1024$.
\end{theorem}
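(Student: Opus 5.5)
We follow the Borodin--Cook paradigm with \cref{lem:strong-hash-query-bound} as the key lemma, applied with $k$ a fixed multiple of $S$. Set $k=24S$; in the case of interest, $S=\Omega(1)$, this gives $k\ge 11$, and $k\le n/32$ follows from $S\le m/1024\le n/1024$. By \cref{lem:strong-hash-query-bound}, all $h\in\hash$ outside a bad set $\hash_{bad}(k)$ of density at most $2^{-k/3}=2^{-8S}$ have the property that every query algorithm with $t\le\alpha n$ queries on a uniform $x$ produces a correct partial output of size at least $k$ with probability at most $2^{-k/8}=2^{-3S}$. We declare these $h$ hard for space $S$.

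For the ``moreover'' clause we take the union over all $S'$ with $S\le S'\le m/1024$ of the bad sets $\hash_{bad}(24S')$. The total density is at most $\sum_{S'\ge S}2^{-8S'}\le 2\cdot 2^{-8S}\le 2^{-9S}$ (for $S\ge1$). Strictly, $2\cdot 2^{-8S}\le 2^{-9S}$ fails, so I would instead take $k=27S'+O(1)$, giving density $2^{-9S'-1}$ per level, and the geometric sum is then at most $2^{-9S}$. The constraints $k\le n/32$ and the success bound $2^{-k/8}\le 2^{-3S'}$ still hold with these constants, since $S'\le m/1024$.

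Fix such an $h$ and an $S$-qubit algorithm with $T$ queries and success probability at least $2^{-S}$. Split the run into $r=\lceil T/\lfloor\alpha n\rfloor\rceil$ consecutive blocks of at most $\alpha n$ queries each. Since the $m$ output bits must all be produced, on every input on which the algorithm succeeds some block outputs at least $m/r$ correct values. If $m/r\ge k$ (note $k\le m/38$ is available), then the overall success probability is at most $r$ times the maximum over blocks of the probability that that block emits at least $k$ outputs that are all correct. A block starts from an $S$-qubit state depending on $x$. By the deferred-measurement remark, the block, with its intermediate flag and output measurements, is a unitary query algorithm with at most $\alpha n$ queries whose final measurement yields the emitted partial assignment; we restrict attention to its first $k$ emitted outputs. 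By \cref{thm:quantum-union-bound}, replacing the advice with the maximally mixed state loses at most a factor $2^{S}$, and an algorithm started from the maximally mixed state is a convex combination of algorithms with fixed initial states, to which \cref{lem:strong-hash-query-bound} applies on uniform $x$. Hence each block succeeds with probability at most $2^{S}\cdot 2^{-3S}=2^{-2S}$, and overall success is at most $r2^{-2S}$.

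This is the subtle step: the advice depends on $x$, so averaging over uniform $x$ has to be done carefully. I would follow the standard argument of~\cite{KSdW07,BKW26}: for each fixed advice basis state the bound holds on average over $x$, and \cref{thm:quantum-union-bound} transfers the bound to the maximally mixed start. If $r\le 2^{S}$, then $r2^{-2S}<2^{-S}$, contradicting the assumed success probability, so $m/r<k$ or $r>2^{S}$. In either case $r\ge m/k=\Omega(m/S)$, using $2^{S}\ge S$ and $m/S\le 2^{S}$ after adjusting constants; otherwise $r$ is huge anyway. Therefore $T\ge (r-1)\lfloor\alpha n\rfloor=\Omega(mn/S)$. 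I expect the main obstacle to be this block-correctness and union-bound bookkeeping over blocks and over adaptive output positions, rather than any new estimate.
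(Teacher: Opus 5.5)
Your proposal is correct and follows essentially the same route as the paper: segments of $\lfloor\alpha n\rfloor$ queries, \cref{lem:strong-hash-query-bound} with $k=\Theta(S)$, \cref{thm:quantum-union-bound} to absorb the $S$ qubits of advice, a union bound over segments, and a union of the bad sets over $S\le S'\le m/1024$ for the ``moreover'' clause. The paper simply takes $k=32S$ (bad fraction $2^{-10S}$ per level, per-segment bound $2^{-3S}$), which lets it union-bound over all $r\le T<mn\le 2^{2S}$ segments without your case split on $r$. You should state explicitly the fact behind your ``$m/S\le 2^S$'' and the paper's $mn\le 2^{2S}$: $2^S\ge n\ge m$, because the query register alone needs $\log_2 n$ qubits; without it, your case $r>2^S$ does not yield $T=\Omega(mn/S)$.
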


Note that the specialization of this theorem to classical algorithms is new and improves on the results of Mansour, Nisan, and Tiwari~\cite{DBLP:journals/tcs/MansourNT93} in that individual hash functions are hard, rather than the collective hardness of the 
family as a whole used in their bounds.
The following corollary states the weaker quantum analogue of their classical bound.

\begin{corollary}
\label{cor:quantum-mnt}
    For any strongly universal family of hash functions $\hash$ from $\bits^n$ to $\bits^m$ for
    $n\ge m$, any quantum query algorithm that computes $\eval_\hash$ using $S\le m/1024$ qubits of memory that succeeds with probability at least $2^{-S}$ on input $(h,x)$ for
    $h$ uniformly chosen from $\hash$ and $x$ uniformly chosen from $\bits^n$
    must make a number of queries $T$ that is $\Omega(mn/S)$.
\end{corollary}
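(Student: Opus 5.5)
The plan is to reduce \cref{cor:quantum-mnt} directly to \cref{thm:almost-all-hash-time-space} by averaging over $h$.

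First, fix an algorithm $\calC$ for $\eval_\hash$ that uses $S\le m/1024$ qubits and $T$ queries and succeeds with probability at least $2^{-S}$ over uniform $(h,x)$. Write $P(h)$ for its success probability on uniform $x$ with the description of $h$ fixed. Then $\E_h P(h)\ge 2^{-S}$.

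Second, note that $\hash_{\mathrm{bad}}$, the exceptional set from \cref{thm:almost-all-hash-time-space} for space $S$, has measure at most $2^{-9S}$. Hence
\[
\E_h[P(h)\indicator_{h\notin\hash_{\mathrm{bad}}}]\ge 2^{-S}-2^{-9S}\ge 2^{-S-1}.
\]
So some good $h$ has $P(h)\ge 2^{-S-1}$.

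Third, hardwire this $h$. Any query to a bit of the description of $h$ becomes a fixed classical answer, and the oracle phase on that bit is a known unitary that can be folded into the neighbouring $U_t$. The result is a quantum query algorithm computing $h$ on uniform $x$. It uses at most $T$ queries to $x$ and at most $S$ qubits, since hardwiring adds no memory and the input tape is not counted in the space bound. Its success probability is at least $2^{-S-1}\ge 2^{-(S+1)}$.

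The step I expect to be the main obstacle is that the theorem is stated with success threshold $2^{-S}$, not $2^{-S-1}$. I would handle this with the ``moreover'' clause, taking $S'=S+1$. Let $h$ be good for space $S+1$, meaning outside that exceptional set of measure $2^{-9(S+1)}$, and note that $S+1\le m/1024$ except in the boundary case. The algorithm uses $S\le S+1$ qubits and succeeds with probability at least $2^{-(S+1)}$, so the theorem gives $T=\Omega(mn/(S+1))=\Omega(mn/S)$ for $S\ge1$.

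The boundary case $S+1>m/1024$ only costs constants. Failing that, I would redo the averaging with the space-$2S$ exceptional set, which is still fine since $2^{-18S}\ll 2^{-S}$. Alternatively, I would rerun the Borodin--Cook argument behind the theorem with the weaker threshold. \Cref{lem:strong-hash-query-bound} bounds each segment by $2^{S-k/8}$, which tolerates a constant-factor loss in the success probability, so only the constants change. The case $S=0$ is handled by treating $S$ as at least $1$, as is implicit in the theorem.
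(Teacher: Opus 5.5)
Your proposal is correct and is essentially the derivation the paper leaves implicit: average over $h$, hardwire an $h$ outside the exceptional set whose restricted algorithm still succeeds with probability at least $2^{-S-1}$, and invoke \cref{thm:almost-all-hash-time-space} at space bound $S+1$. This handles a threshold issue that the paper glosses over. The only misstep is your ``space-$2S$'' fallback, which cannot help in the boundary case $S+1>m/1024$ because $2S$ also exceeds $m/1024$. Your other fallback does work there: in the theorem's proof the assumption $T<mt/(2k)$ leaves at most $m/(2k)+1\le 2^{S}$ segments, so the union bound gives $2^{-2S}$, which leaves ample room for the lost factor of $2$.
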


\begin{proof}[Proof of \cref{thm:almost-all-hash-time-space}]
Because of the claimed bound we can assume without loss of generality that $T< nm$.  
For each $h\in \hash$ let $\calA_h$ be a quantum query algorithm that makes $T$ queries
and uses space $S$.  
Since each algorithm $\calA_h$ requires a register with qubits to index the $n$ inputs values,
we can assume that $2^{S}\ge n$.

We will prove that for all but a $2^{-16S}$ fraction of $h\in \hash$, the probability that $\calA_h$ computes
$h$ on a uniformly random $x$ in $\bits^n$ is at most $2^{-S}$.
Let $\alpha>0$ be the constant from~\cref{lem:strong-hash-query-bound} and let
$t=\floor{\alpha n}$.
We break each $\calA_h$ into $r=\ceil{T/t}$ segments of at most $t$ consecutive queries each.

Choose $k=32S\le m$.  If $T\ge mt/(2k)$ then $T$ is $\Omega(mn/S)$.  
Therefore, we assume that $T< mt/(2k)$ and prove that the algorithms $\calA_h$ must
be incorrect for all but a $2^{-16S}$ fraction of choices of $h\in \hash$.
Likewise it must be the case that $T > t$ as otherwise directly applying \cref{lem:strong-hash-query-bound} to the entire algorithm would give a success probability smaller than $2^{-S}$.

Since each hash function $h$ requires $m$ output bits to be produced, if $\calA_h$ correctly computes $h(x)$ then, one of its $r$ segments must produce partial outputs
of at least $m/r\ge mt/(2T)\ge k$ bits
that are consistent with $h(x)$.
Let $\hash_{bad}$ be the at most $2^{-k/3}\le 2^{-10S}$ fraction of $h\in \hash$ excluded by \cref{lem:strong-hash-query-bound} for our fixed choice of $k$.
By \cref{lem:strong-hash-query-bound}, for all $h \not \in \hash_{bad}$, any quantum algorithm with $t=\floor{\alpha n}$ quantum queries to a uniformly random $x \in \bits^n$ can produce $k$ correct outputs of $h(x)$ with probability at most $2^{-k/8}= 2^{-4S}$.
By \cref{thm:quantum-union-bound} this implies that any fixed segment $j$, which has at most $S$ qubits of input dependent advice from the previous segment, can produce $k$ correct outputs of $h(x)$ when $h \not \in \hash_{bad}$ with probability at most $2^{-3S}$.
Taking a union bound over the $r \leq T < mn \leq 2^{2S}$ segments gives the probability that any segment produces at least $k$
bits that are consistent with $h(x)$ when $h \not \in \hash_{bad}$ is less than $2^{-S}$.
Thus the success probability of $\calA_h$ when $h \not \in \hash_{bad}$ is also less than $2^{-S}$.
Therefore, every element of $\hash$ not in $\hash_{bad}$, which is all but a $2^{-10S}$ fraction
of $\hash$, must have 
$T> mt/(2k)$ and hence $T$ is $\Omega(mn/S)$ as claimed.




As described so far, the choice of the space bound $S$ is fixed and the set of hash functions $\hash_{bad}$ to which the lower bound does not apply might depend on $S$, which is really $\hash^S_{bad}$.   These sets might be  unrelated to each each other. 
However, we can apply the above argument simultaneously for every integer space bound $S\le m/1024$ with $2^S\ge n$ and define $\hash^{\ge S}_{bad}=\bigcup_{S\le S'\le m/1024}\hash^{S'}_{bad}$. 
We have $|\hash^{\ge S}_{bad}|\le \sum_{S\le S'\le m/1024} 2^{-10S'}< 2^{1-10S}$. 
Every hash function not in $h\in \hash^{\ge S}_{bad}$
is hard for every space bound at least $S$.
\end{proof}

The following corollary is then immediate for the same reason as~\cref{prop:multconv}.

\begin{corollary}
    Any quantum algorithm that computes $n$-bit integer multiplication
    using space $S \leq cn$ for constant $c>0$ requires time $T$ that is $\Omega(n^2/S)$. 
    Moreover, for at least $(1-1/n^{9})\cdot 2^{2\floor{n/6}}$ choices of $a\in \bits^n$, this bound also holds for computing the function $f_a:\bits^n\rightarrow \bits^{2n}$ given by $f_a(x)=ax$. 
\end{corollary}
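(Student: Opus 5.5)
The plan is to repeat the reduction from the proof of \cref{prop:multconv}, but to invoke \cref{thm:almost-all-hash-time-space} for $\hash^{n',n'}_{\textrm{mult}}$ instead of the classical bound of~\cite{DBLP:journals/tcs/MansourNT93}. Here $n'=\floor{n/6}$; this family is strongly universal by Dietzfelbinger's result. For $2n'$-bit integers $a,b$ and an $n'$-bit integer $x$, set
\[
u=2^{4n'}a+1,\qquad v=2^{4n'}b+x .
\]
These are $6n'\le n$-bit integers, and
\[
uv=2^{8n'}ab+2^{4n'}(ax+b)+x .
\]
Since $ax+b<2^{4n'}$, the bits of $uv$ in positions $4n'$ through $8n'-1$ are exactly the bits of $ax+b$. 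Hence the middle $n'$ bits $h_{a,b}(x)$ are a fixed subset of the output bits of $n$-bit multiplication, and the input positions of $u,v$ other than those of $x$ are fixed once $(a,b)$ is fixed.

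For the first statement, take $c=1/(6\cdot 1025)$ or so, so that $S\le cn$ gives $S\le n'/1024=m/1024$ for the hash family with $m=n'$. Suppose a multiplication algorithm with $T$ queries and space $S$ succeeds with probability at least $2^{-S}$. Hardwiring $a,b$ turns it into an algorithm for $h_{a,b}$ on uniform $x$ with the same $T$ and $S$: queries to hardwired positions are answered by a unitary that needs no query. The algorithm also discards the outputs outside the $h_{a,b}$ window. By \cref{thm:almost-all-hash-time-space}, some $h_{a,b}$ (indeed almost all of them) forces $T=\Omega(n'^2/S)=\Omega(n^2/S)$. One small point: the algorithm may emit the relevant bits among others, so I would note that the relation ``produce a partial output consistent with $h(x)$'' is already what \cref{lem:strong-hash-query-bound} handles. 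Extra output bits therefore cause no difficulty.

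For the ``moreover'' part, fix $S$. Let $\hash_{bad}$ be the set of pairs $(a,b)$ excluded by \cref{thm:almost-all-hash-time-space}, taken in its uniform-over-$S'\ge S$ form. It has density at most $2^{-9S}\le n^{-9}$, using $2^S\ge n$ as in that proof (any algorithm needs $\log n$ qubits to index queries). Call $a$ bad if $(a,b)\in\hash_{bad}$ for every $b$. The density of bad $a$ is at most the density of $\hash_{bad}$, i.e.\ at most $n^{-9}$, by averaging over $b$. So at least $(1-n^{-9})2^{2n'}$ values of $a$ have some $b_a$ with $(a,b_a)$ good.

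For such an $a$, regard $u=2^{4n'}a+1$ as the fixed multiplier $a$ of $f_a$. This map is injective in $a$, which gives the stated count; I identify the multiplier with $u$. Any algorithm for $f_u$ restricted to inputs $v=2^{4n'}b_a+x$ computes $h_{a,b_a}(x)$ on uniform $x$, so it needs $T=\Omega(n^2/S)$.

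The only delicate point I expect is bookkeeping. First, the ``for all $S'\ge S$'' clause of \cref{thm:almost-all-hash-time-space} is what makes the set of good $a$ independent of the space bound. Second, I must check that success probability $2^{-S}$ on uniform inputs to $f_u$ transfers to the restricted distribution. I would phrase the conclusion for the restricted input distribution, as the theorem does, since hardness on a subfunction with uniform $x$ is the meaningful statement here.
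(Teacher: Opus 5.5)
Your proposal is correct and is essentially the paper's argument. You use the embedding $u=2^{4n'}a+1$, $v=2^{4n'}b+x$ from \cref{prop:multconv} with $n'=\floor{n/6}$, and apply \cref{thm:almost-all-hash-time-space} to Dietzfelbinger's strongly universal family, whose excluded density is $2^{-9S}\le n^{-9}$ since $2^S\ge n$. You then average over $b$ to show that at most an $n^{-9}$ fraction of $a$ have no good $b$, and state hardness for $f_{2^{4n'}a+1}$ on the input distribution $2^{4n'}b+x$, exactly as the paper does. You are more explicit than the paper about the bit positions, the choice of $c$ guaranteeing $S\le n'/1024$, and the restricted input distribution, but the route is the same.
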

We get this number of hard choices of $a$ by observing that $f_{a,b}(x) = ax +b$ is a strongly universal hash family. Thus at most a $1/n^{9}$ fraction of choices of $a,b$ are excluded from~\cref{thm:almost-all-hash-time-space}.
Let $n'=\floor{n/6}$.
As long as an $a \in \bits^{2n'}$ is not excluded for some choice of $b \in \bits^{2n'}$, it makes the function $f_{2^{4n'} a + 1}$ hard to compute on the input distribution $2^{4n'}b + x$ by using the construction from \cref{prop:multconv}.
Thus at most a $1/n^{9}$ fraction of the choices of $a$ must be excluded.
Since there are $2^{2n'}$ possible choices of $a$, we get at least $(1-1/n^{9}) 2^{2n'} \leq (1-1/n^9) 2^{2\floor{n/6}}$ hard choices of $a$.

\begin{remark}
    A hash function $h^{a,b}\in \hash_{\textrm{conv}}$ has a near-maximal time requirement for all space bounds if the matrix $A_a$ is highly rigid, which Abrahamson~\cite{Abr91} proves holds with probability exponentially close to 1 as a function of $m$.
    On the other hand, \cref{thm:almost-all-hash-time-space} allows the possibility that a strongly universal family $\hash$ might contain an $n^{-O(1)}$ fraction of functions that do not require time $\Omega(mn/\log n)$ for algorithms with $O(\log n)$ qubits, though the fraction of functions that might evade the maximal time requirements must be substantially smaller at
    larger space bounds.   This property is inherent in our proof, but may not be necessary.
\end{remark}

\section{Sorting}\label{sec:sorting}

The time-space tradeoff lower bound for quantum sorting algorithms due
to Klauck, \v{S}palek, and de Wolf~\cite{KSdW07} is the first and most well-known quantum time-space tradeoff lower bound for any problem.
It is a lower bound that is quantitatively of the form $T=\Omega(n^{3/2}/\sqrt{S})$; which matches, up to
a polylogarithmic factor, the existing query algorithms for any space bound $S$ such that $n /\log n \geq S \geq \log^3 n$ \cite{DBLP:conf/stoc/Klauck03}.

While this lower bound has been known for 20 years, it
has a key weakness that is less well known.
The lower bound applies only to quantum algorithms that produce the sorted output in 
sequence, with the time blocks in which the outputs are produced fixed in advance, independent of the input.
Hamoudi and Magniez~\cite{HM23} gave an alternative
proof but both lower bounds only apply to this special case of what we have called
output-oblivious algorithms.

While the restriction on the order of output production (but not the restriction on the time steps in which they are produced) is similar to the optimal classical lower bound of Beame~\cite{DBLP:journals/siamcomp/Beame91} for sorting, it is quite different from the nearly-optimal 
$T=\Omega(n^2/(S\log n))$ general classical lower bound of Borodin and Cook~\cite{BC82} for sorting, which
allows the order in which the output values are
produced to be arbitrarily input-dependent, as is typical
of random-access algorithms; for example, such an algorithm may determine the position in the sorted 
order of certain input values much more quickly than others and be able to output those output positions without having to store those outputs until they can
later be produced.

Beame and Kornerup~\cite{bk:cumulative-journal} extended
the single fixed order lower bound of \cite{KSdW07} to
schemes with arbitrary fixed orders of output production (and thus fixed blocks
in which each output is produced), which is the full case of output-oblivious algorithms. (They also extended the time-space tradeoff lower bound to a
cumulative memory lower bound.)
However, the general technique in the
previous
quantum lower bounds for sorting~\cite{KSdW07,HM23,bk:cumulative-journal}, based on embedding hard functions
in the sorting function depending on which output positions are being produced, is inherently useless in proving lower bounds for general quantum algorithms.

Thus, the problem
of proving any non-trivial fully general quantum lower bound for sorting
has remained open.

While it has seemed appealing to try to extend the specific ideas of the sorting lower bound of Borodin and Cook~\cite{BC82} (not merely their general methods applicable to all functions) to quantum query algorithms, analyzing these ideas
using recording query methods alone seemed unlikely.
However, when we combine this with our method based
on the noise operator and~\cref{cor:noise-stability} for a key part of the analysis
we are able to prove the first time-space tradeoff lower
bounds for sorting using general quantum algorithms. 

\begin{restatable}{theorem}{sortingtradeoff}
\label{thm:sorting-time-space-tradeoff}
    There is a constant $c>0$ such that the following holds.  Let $m\ge n^2$.
    Any quantum query algorithm using $T$ queries and
    space $S\le  cn (\log\log n/\log n)^3$ that sorts inputs in
    $[m]^n$ with probability at least $2^{-S}$ must
    have $T$ that is $\Omega(n^{4/3}\log\log n/(S^{1/3}\log n))$.
\end{restatable}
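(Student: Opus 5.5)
Our plan follows the Borodin--Cook paradigm, as in the proof of \cref{thm:almost-all-hash-time-space}. The time-space statement comes from a key query lemma of the following form. There are parameters $t$ and $k$ such that any quantum query algorithm making at most $t$ queries to a uniformly random $x\in[m]^n$ produces a partial assignment $\tau$ to $k$ sorted-output positions with $\mathrm{sort}(x)\|\tau$ with probability at most $2^{-\gamma k}$, for a constant $\gamma>0$.

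Given such a lemma, we split the algorithm into $r=\lceil T/t\rceil$ segments of $t$ queries each. Some segment must produce at least $n/r$ correct outputs. We choose $k=\Theta(S)$ large enough that $2^{S-\gamma k}$, summed over the at most $T\le 2^{O(S)}$ segments, is below $2^{-S}$. By \cref{thm:quantum-union-bound}, this forces $n/r<k$, so $T=\Omega(nt/S)$.

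The work is therefore in the key lemma, and we aim for $t\approx (nS)^{1/3}\log\log n/\log n$ when $k=\Theta(S)$. Then $nt/S$ gives the claimed $n^{4/3}\log\log n/(S^{1/3}\log n)$. The restriction $S\le cn(\log\log n/\log n)^3$ is exactly what makes $t\le n$ and the regime meaningful.

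For the key lemma we would combine the recording-query progress method of \cite{HM23} with \cref{cor:noise-stability}. In the first step, working in the recording basis, we call a database state \emph{spread} if the output positions $\tau$ claims are not essentially determined by few recorded entries. Concretely, the $k$ claimed ranks $r_1,\dots,r_k$ and values should not be pinned down by the at most $t$ recorded values. The opposite kind of state, one that has recorded many near-collisions or many inputs at the claimed output values, is unusual progress. We bound the total amplitude on such non-spread states by $2^{-\Omega(k)}$ inductively over queries, using \cref{prop:recording-query}. Each query can increase that amplitude by only $O(\sqrt{\cdot/m})$-type terms, and since $m\ge n^2$ the values are nearly distinct.

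In the second step, we project onto the spread part and, via \cref{cor:noise-stability}, bound success probability by noise stability. We take $p\approx \delta k/t$, so that $(1-p)^{-t}\le e^{\delta k}$, and choose $A$ to be the set of coordinates not recorded. We must then show that for typical $x$ outside a small $\badset$, with $\mathrm{sort}(x)\|\tau$, we have $\Pr[\mathrm{sort}(\noise^A_p(x))\|\tau]\le 2^{-\gamma k}$. The classical intuition comes from \cite{BC82}. Resampling about $pn$ coordinates shifts the rank of a value $v$ by a random amount: about $p\cdot(\text{rank})$ elements below $v$ are removed, and a binomial number are added. Both counts are roughly independent binomials with variance about $pn$, so each rank shifts with spread $\sqrt{pn}$. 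The probability that a fixed claimed pair (rank $r_i$, value $v_i$) survives is then $O(1/\sqrt{pn})$, or about $p$ if $v_i$'s own coordinate is resampled.

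The hard part, which we expect to be the main obstacle, is making these $k$ events nearly independent so that the product gives $2^{-\Omega(k\log(pn))}$. This is where the $\log n/\log\log n$ factors and the cube-root balance arise. Many of the $k$ claimed outputs may be adjacent in rank, in which case their rank shifts are highly correlated. The plan is to greedily select a subset of about $k/\mathrm{polylog}$ claimed outputs whose ranks are separated by gaps larger than $\sqrt{pn}$. We then condition successively and apply a local limit (anti-concentration) bound for the binomial difference in each gap, with \cref{thm:chernoff-lower-tail} controlling the typical counts; inputs with atypical gap counts go into $\badset$. The spread condition from the first step is exactly what guarantees that enough well-separated, unrecorded claimed positions exist. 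Balancing the spread threshold (which limits $t$ relative to $k$ and $n$) against the needed separation $\sqrt{pn}\approx\sqrt{kn/t}$ is what should yield $t^3\approx nk\,(\log\log n/\log n)^3$. This optimization and the case split between the two steps are the delicate points we would work out carefully.
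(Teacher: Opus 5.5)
There is a genuine gap, both in the parameters and in how the two tools are combined.

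First, your target does not give the theorem. With $k=\Theta(S)$ outputs per block and $t\approx(nS)^{1/3}\log\log n/\log n$, you get $nt/S=n^{4/3}\log\log n/(S^{2/3}\log n)$, not $S^{1/3}$. Your balance $t^3\approx nk(\log\log n/\log n)^3$ has the wrong power of $k$. In the paper the two competing constraints are:
\begin{itemize}
\item $t\le\sqrt{k\Delta}/12$, from the recording-query bound with value bins of width $m/\Delta$;
\item $p\approx k/(4t)\ge 16\Delta/n$, so that $\Omega(\Delta)$ coordinates are resampled and each of $k-1$ value gaps of width at least $m/\Delta$ receives $\Omega(1)$ of them in expectation.
\end{itemize}
Balancing gives $\Delta=k^{1/3}n^{2/3}$ and $t=\Theta(k^{2/3}n^{1/3})$, i.e.\ $t^3\approx k^2n$. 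The $\log n/\log\log n$ factor enters not in $t$ but in the block's output size, $8\lceil\log n/\log\log n\rceil k$. The restriction on $S$ essentially keeps this output size at most $t$, rather than ensuring $t\le n$. Also, you only need $2^{-\Omega(k)}$ with a constant that beats $(1-p)^{-t}\le e^{\delta k}$, not $2^{-\Omega(k\log(pn))}$. So constant anti-concentration per gap suffices.

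Second, and more seriously, the combination step does not work as described. \cref{cor:noise-stability} needs a fixed $A\subseteq[n]$ (and $A=[n]$ is always optimal). ``The coordinates not recorded'' is not fixed: it varies across the superposition. Moreover, the noise bound only uses that the state lies in $\fouriervs{t}$, so projecting onto spread recording states first buys the noise step nothing.

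The real problem is outputs $\tau$ whose claimed values are clustered, e.g.\ many consecutive ranks. Almost no resampled coordinate lands between such values, your greedy selection of rank-separated outputs returns $O(1)$ of them, and no noise argument can give $2^{-\Omega(k)}$. Your assertion that the spread condition ``guarantees enough well-separated positions'' is exactly the step that needs a separate proof.

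The paper instead splits on the classical output $\tau$:
\begin{itemize}
\item If $\tau$ has $k$ values pairwise more than $m/\Delta$ apart, it applies \cref{cor:noise-stability} to the whole state with $A=[n]$ and $\badset=\emptyset$. It then uses a Poisson-multinomial bound on the $k-1$ value-gap counts, which needs no typicality assumption on $x$ (\cref{lem:stable-tie-noise-return}).
\item Otherwise, with $|\tau|\ge 2\ell$ and $\ell\approx 4k\log\Delta/\log\log\Delta$, $\tau$ is not $(2\ell,\Delta,k)$-spread-out, and two recording-query facts are used:
\begin{itemize}
\item the amplitude on databases having $\ell$ recorded values in only $2(k-1)$ bins is at most $\Delta^{-k}$ (\cref{lem:queried-inputs-spread-out});
\item on $(\ell,\Delta,k)$-spread-out databases, a non-spread $\tau$ must disagree with the database (unrecorded or wrong value) on more than $\ell$ of its $2\ell$ coordinates. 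Each disagreement costs about a factor $m^{-1/2}$, giving $2^{-\ell/2}$ (\cref{lem:non-spread-guesses}).
\end{itemize}
\end{itemize}
That last fact is the missing idea. Without it the clustered-output case is not covered.
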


\paragraph{Proof outline}
Like all time-space lower bounds for multi-output functions based on Borodin and Cook's methods,
the key goal is to show that in any short segment of
the computation that produces the values of roughly $k$ output 
coordinates, it is 
exponentially unlikely that these $k$ output values are
correct for an input drawn from the input distribution.

The specific approach of Borodin and Cook for sorting
is to focus on a property of being
 ``spread out'', every subset of $k\log n$ inputs
contains a subset of size $k$ that has big
gaps between them in the set of possible values $[m]$.
The general idea of their classical lower bound is to show the following:
\begin{enumerate}
    \item The overall success probability in a segment is upper-bounded by the maximum success probability on each fixed computation path.
    \item Except with exponentially small probability
in $k$, the set of variables queried on the path is spread out.
    \item If $\Omega(n)$ input 
    coordinates are not queried on a computation path
    and $2k\log n$ output values are produced,
    either
    \begin{description}
        \item 
    {(i)} most of the values output were never queried, which easily leads to errors almost surely or
    \item{(ii)} most values output were queried, so by the spread-out property there must be
    $k$ output values with big gaps; in this case
    any claim about the output positions - that is, about the exact number of 
    unqueried elements that land in the big gaps between these outputs - is almost surely incorrect.
    \end{description}
\end{enumerate}
An analysis that focuses on computation paths separately is completely incorrect for quantum computation so this proof outline
cannot possibly work.   
Nonetheless, our argument still makes use of a version of the spread-out property.

In place of property 2, for quantum algorithms, using arguments
of Hamoudi and Magniez~\cite{HM23} for the problem of finding many disjoint collisions,
we can show the following property, which holds for a
similar reason to the classical case combined with the properties of oracles in the recording-query basis.
\begin{description}
    \item{A.} In the recording query basis after $t$ steps, the amplitude
    on recording-query basis states where the values
    of the non-$\perp$ components of the state are
    not spread out is exponentially small in $k$.
\end{description}
Although the argument is similar, it means much less
than in the classical case, since the computations on
these recording-query basis states interact
with each other, unlike with classical computation paths.

Since the quantum analysis cannot focus on unqueried input coordinates, we need a different case for the output rather than 3 (i).
Instead, we show the following property whose classical analogue could have been used because it would essentially imply 3 (i) and its failure could have been covered by the proof of 3 (ii).
\begin{description}
\item{B.} Non-spread outputs $\tau$ are exponentially unlikely in any state that is a linear combination of
spread-out recording-query basis states.
\end{description} 
This proof in the quantum case follows the ideas of part of the same recording-query
argument by Hamoudi and Magniez~\cite{HM23} for  disjoint collisions (showing that the quantum algorithm is unlikely to be successful in answering about many collisions that are not already captured with sufficient amplitude in the corresponding recording-query basis states themselves). 

Despite these two parts being natural analogues using recording query arguments, such arguments seem useless
when trying to derive the classical argument for
3 (ii).  
Instead we use bounds involving the noise operator
and \cref{cor:noise-stability} to prove the following
\begin{description}
    \item{C.} Any $t$-query quantum algorithm is exponentially unlikely to be correct in producing
    any particular output $\tau$ that has a 
    subset of size $k$ with big gaps.
\end{description}
This is where the fact that this method is largely classical helps.  The final argument to show this has many aspects in 
common with the classical argument of Borodin and Cook for 3 (ii).
A key difference is that in the Borodin-Cook argument, every unqueried input value has an opportunity to mess up the claimed counts in the big gaps, whereas in our argument, it is only the resampled input values under $\noise_p(x)$
that can mess up these counts.
The fact that only a $p$ fraction of input coordinates is typically resampled
is the main quantitative limitation on our results.

\subsection{Technical overview and analysis of algorithms with few queries}\label{sec:sorting-key-lemmas}

We now proceed to the formal arguments, beginning with the definitions.
We have discussed how a sorting algorithm might 
proceed by choosing to produce outputs in different
positions in the sorted order in an input-dependent
way.   
Any algorithm for sorting that could produce outputs in such an arbitrary order would necessarily have
outputs of the form $(r,v)$ saying that the output in
position $r$ (equivalently, the element of \emph{rank} $r$) has value $v$.
Our formal definition expands these outputs to also include the input
indices $i$ that such values $v$ come from as follows: 

\begin{definition}
    Let $\sortproblem_{n,[m]}$ be the function mapping input $(x_1, \ldots, x_n) \in [m]^n$ to the unique unordered set $\{(i, x_i, r_i)\}_{i \in [n]}$ where $\{r_i\}_{i \in [n]} = [n]$ and $r_i < r_j$ implies that $x_i < x_j$ or both $x_i = x_j$ and $i < j$.
    When the size of the input is known, we drop the
    subscript on $\sortproblem$.
\end{definition}

Note that this definition would be the same as the natural
definition that only produces ranks and values if we expanded each input coordinate to domain $[mn]$ from
domain $[m]$ by appending the input index in the low-order position.
Both definitions also guarantee no ties.
However, our formal definition is nicer because we can
analyze it with a product distribution on $[m]^n$.

For later analysis we find it convenient to define partial outputs as partial functions from input indices
to outputs, rather than as a subset of the triples required for the sorting problem, but the two views are completely equivalent.

       \begin{definition}\label{def:sort-partial-output}
       A \emph{partial output} $\tau=(\tau_v,\tau_r)$ of
       $\sortproblem_{n,[m]}$ is a partial function from $[n]$ to $[m]\times [n]$ that maps each $i\in \dom(\tau)$ to $\tau(i)=(\tau_v(i),\tau_r(i))=(v_i,r_i)$.
       We say that $\tau$ is \emph{consistent with} $\sortproblem_{n,[m]}(x)$, denoted as
       $\sortproblem_{n,[m]}(x)\|\tau$, iff
       $\set{(i,\tau(i))\mid i\in \dom(\tau)}
       \subseteq \sortproblem_{n,[m]}(x)$.
       We write $\abs{\tau}=\abs{\dom(\tau)}$.
    \end{definition}

    Our main target lemma that is in the usual form for the general Borodin-Cook method is the following:

\newcommand{\sortfactor}{8\ceil{\log n/\log\log n}}
\newcommand{\twosortfactor}{16\ceil{\log n/\log\log n}}

\begin{lemma}
    \label{lem:block_query_bound}
    There exist universal constants $\alpha, \gamma > 0$ such that, for all
    sufficiently large $n$ and $m\ge n^2$ and every integer $k$ with $4\le k \le \alpha n/(\sortfactor)^3$, the following
    holds: The probability that any quantum query algorithm making
    $t\leq \alpha k^{2/3}n^{1/3}$ queries to input
    $x\in[m]^n$ chosen uniformly at random produces a partial output of size at least $\sortfactor\ k$ that is consistent with
    $\sortproblem(x)$ is at most $2^{-\gamma k}$.
\end{lemma}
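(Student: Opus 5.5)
We follow the three-part outline A, B, C given above. Let $L=\sortfactor$, and let $\ket{\psi_t}$ be the final state of the algorithm run on the uniform distribution. In the recording-query basis this state lies in $\recordingvs{t}$ by \cref{prop:tquery}. Call a partial assignment $z$ to at most $t$ coordinates \emph{spread} if every subset of $k L/4$ of its recorded values contains $k$ values whose consecutive gaps in $[m]$ are at least roughly $m/(tk)^{1/2}$, or a similar threshold tuned so that a value drawn uniformly from $[m]$ lands in a given gap with probability at least $\Omega(k^{2/3}/n^{2/3}\cdot\text{polylog})$. Let $\pispread$ project onto the span of spread recording-basis states.

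\textbf{Step A.} We show $\norm{(\identity-\pispread)\ket{\psi_t}}^2\le 2^{-\Omega(k)}$ by a Hamoudi--Magniez progress argument. By \cref{prop:recording-query}, each recording query changes one coordinate and puts amplitude $O(1/\sqrt m)$ per specific value. Fix any $k$ recorded values; the event that they all fall into narrow windows near previously recorded values has per-query amplitude $O(\sqrt{t\cdot w/m})$, where $w$ is the window width. Summing over $t$ queries and choosing subsets then gives a bound of the form $\binom{t}{k}(t\sqrt{tw/m})^{k}$. This is $2^{-\Omega(k)}$ exactly when $t\lesssim k^{2/3}n^{1/3}$ at the chosen width; this is where the $k^{2/3}n^{1/3}$ exponent comes from. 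Here $m\ge n^2$ is used to make ties and collisions negligible.

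\textbf{Step B.} Restrict attention to $\pispread\ket{\psi_t}$. Partition output triples $(i,v,r)$ of $\tau$ into those where $i$ is recorded with value $v$ and those where it is not. If at least half of the $kL$ output indices are unrecorded, then each such value claim is correct with amplitude $O(1/\sqrt m)$ per coordinate, independent across coordinates. This follows from the form of $\ket{\bot}$ in the standard basis, as in HM23's argument for "answers not in the database". The probability is then at most $\binom{t+kL}{kL/2}m^{-kL/2}\le 2^{-\Omega(k)}$.

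\textbf{Step C.} Otherwise, at least $kL/2$ outputs are recorded in a spread state, so some $k$ of the claimed pairs $(v_j,r_j)$ have big gaps. We apply \cref{cor:noise-stability} with $A=[n]$ and $p$ chosen so that $(1-p)^{-t}\le e^{\delta k}$, say $p=\delta k/(2t)$. The relation $\relation$ is the set of $(x,\tau)$ with $\sortproblem(x)\|\tau$, restricted to $\tau$ containing a big-gap $k$-subset $J$. We take $\badset$ to be the inputs with too many duplicate or near values. For $x\in\event_\tau$, the rank claims $r_j$ for $j\in J$ fix the counts $c_j$ of coordinates falling in each gap. Under $\noise_p$, about $pn$ coordinates are resampled. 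Each resampled coordinate leaves its gap or enters another gap independently, with probability at least the gap mass $g$. Consequently each count difference $c_{j+1}-c_j$ is perturbed by a sum of independent $\pm1$ contributions with variance $\Omega(png)$. Conditioning successively on gaps, an anti-concentration (local limit) bound gives that each count is preserved with probability $O(1/\sqrt{png})$. This bound is constant-bounded-away-from-1, and indeed small once $png\ge 1$, which is ensured by the gap size. Multiplying over the $k$ gaps yields $2^{-\gamma' k}$ with $\gamma'>\delta\log_2 e$.

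The final step combines the pieces. Write $\norm{\pisucc\ket{\psi_t}}\le\norm{(\identity-\pispread)\ket{\psi_t}}+\norm{\Pi_{B}\pispread\ket{\psi_t}}+\norm{\Pi_C\ket{\psi_t}}$ and square. The main obstacle is Step C's interaction with Step B, for two reasons. First, the big-gap property is a property of recording states, not of $\tau$ alone. So we must define $\relation$ purely in terms of $\tau$, namely "$\tau$ has a big-gap $k$-subset", and show that $\tau$ without this property are caught by A and B. Second, the independence in the anti-concentration across the $k$ gaps must be made rigorous, which we would do via a martingale over resampled coordinates. Tuning the gap width so that A and C both hold at $t=\alpha k^{2/3}n^{1/3}$ determines $L$ and the polylog factors.
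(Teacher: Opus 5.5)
Your architecture matches the paper's. Step A builds little amplitude on non-spread databases, Step B shows outputs with no widely-spaced $k$-subset fail on spread databases, and Step C applies noise stability to outputs that have one. The case split is made on $\tau$ alone, as you conclude at the end. Steps B and C are sound in outline. The genuine gap is Step A, and it is exactly where the exponent $k^{2/3}n^{1/3}$ and the $\log n/\log\log n$ factor come from.

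You propose to control non-spread databases through the event that $k$ recorded values land within a window of \emph{previously recorded} values, with per-query amplitude $\sqrt{tw/m}$. That near-collision event is not rare here.
\begin{itemize}
\item For the event to contain non-spreadness, the window must be at least the spacing threshold $m/\Delta$.
\item Step C forces $\Delta\lesssim kn/t$, i.e.\ $\Delta=O(k^{1/3}n^{2/3})$ at $t=\alpha k^{2/3}n^{1/3}$.
\item A BHT-style algorithm (store $t/2$ queried values, then Grover-search for partners in the same bin) finds on the order of $t^{3/2}/\sqrt{\Delta}$ near-collisions. This is $\Omega(k^{5/6}n^{1/6})$, exceeding $k$ by a factor $\Omega((n/k)^{1/6})$, which is unbounded throughout the lemma's range of $k$.
\end{itemize}
This is just the Hamoudi--Magniez collision tradeoff with range $[\Delta]$, which is tight. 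So no near-collision progress measure can give $2^{-\Omega(k)}$, and your claim that $\binom{t}{k}(t\sqrt{tw/m})^k$ is small exactly when $t\lesssim k^{2/3}n^{1/3}$ fails for every admissible $w$.

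The width is also not a free parameter. Grover over a fixed set of $2k$ bins collects $\ell$ values in about $\ell\sqrt{\Delta/k}$ queries, so Step A can only hold if $t\lesssim\sqrt{k\Delta}$ up to polylogarithmic factors. Together with Step C this pins $\Delta$ to roughly $k^{1/3}n^{2/3}$. The widths you float, $m/\sqrt{tk}$ and gap mass $(k/n)^{2/3}$, are coarser than this by factors of about $(n/k)^{1/2}$ and $k$ respectively.

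The missing idea is to measure progress by membership in a \emph{fixed} small set of bins rather than by closeness to other recorded values.
\begin{itemize}
\item By \cref{lem:cover}, $\ell$ values with no $\Delta^{-1}$-widely-spaced $k$-subset lie in at most $2(k-1)$ of the $\Delta$ bins.
\item Fix such a bin set $B$ and track the number of recorded values in $B$. One recording query raises this count with amplitude at most $6\sqrt{|B|/\Delta}$, independent of $t$ (\cref{lem:single-query-fixed-progress}). Hence $\norm{\Pi^B_{\ge\ell}\ket{\phi_t}}\le\binom{t}{\ell}(6\sqrt{|B|/\Delta})^{\ell}$.
\item The union bound over the $\binom{\Delta}{2(k-1)}\le\Delta^{2k}$ choices of $B$ is affordable only because $\ell\approx 4k\log\Delta/\log\log\Delta$. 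That is the source of the $\log n/\log\log n$ factor in the output size.
\item The resulting condition is $t\le\sqrt{k\Delta}/12$. Balancing it against Step C's requirement $p\approx k/(4t)\ge 16\Delta/n$ gives $\Delta=\lfloor k^{1/3}n^{2/3}\rfloor$ and $t=\lfloor k^{2/3}n^{1/3}/128\rfloor$.
\end{itemize}

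Two smaller points. Once the split is on $\tau$, Step B needs only one observation: a spread database agreeing with $\tau$ on $\ell$ of the $2\ell$ witness coordinates would give $\tau$ a widely-spaced $k$-subset. So more than $\ell$ of those coordinates are $\perp$ or wrong, and the combinatorial factor ranges over $\dom(\tau)$, not over $t+kL$ positions. Step C needs neither $\badset$ nor a martingale. Conditioning on the resampled set makes the incoming gap counts a Poisson multinomial, which sequential conditioning and the Poisson-binomial anti-concentration bound handle directly.
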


The rest of this section proves this lemma, assuming three key lemmas
corresponding to parts A, B, and C discussed
in our proof outline.   The proofs of those key lemmas
are in \cref{sec:spreadrecord,sec:nonspread,sec:noisysort}.

These proofs depend on whether vectors of values are widely spaced or spread out, which we define now.
   These definitions are similar to those of Borodin and Cook~\cite{BC82}, but we parametrize them more fully and extend them to recording-query basis elements.

    \begin{definition}
        \label{def:spread-out}
        We say that a vector $x_S\in [m]^S$ is
        \emph{$\Delta^{-1}$-widely-spaced} iff 
        $|x_i-x_j|>\frac{m}{\Delta}$ for all $i\neq j\in S$, 
       For $\ell\ge k$ and $I\subseteq[n]$, we say that $x\in [m]^I$ is \emph{$(\ell,\Delta,k)$-spread-out} if, for
        every $A\subseteq I$ of size $\ell$, there exists $S\subseteq A$ of size
        $k$ such that $x_S$ is $\Delta^{-1}$-widely-spaced.
   
        We say that $z\in ([m]\cup \set{\perp})^n$ is
        \emph{$(\ell,\Delta,k)$-spread-out}
        if $z_I\in [m]^I$ is $(\ell,\Delta,k)$-spread-out
        where $I=\set{i\in [n]\mid z_i\ne \perp}$. 

        Finally, we say that partial output $\tau=(\tau_v,\tau_r)$ is $(\ell,\Delta,k)$-spread-out 
        iff the vector given by $(\tau_v(i))_{i\in \dom(\tau)}\in [m]^{\dom(\tau)}$ is $(\ell,\Delta,k)$-spread-out.
    \end{definition}

        \begin{observation}
        \label{obs:nonspread}
        $x\in [m]^I$ fails to be $(\ell,\Delta,k)$-spread-out precisely when there exists some
    $A\subseteq I$ of size $\ell$ such that for every subset $S\subseteq A$ of size $k$, $x_S$
    is not $\Delta^{-1}$-widely-spaced; that some $i,j \in S$ 
    has $|x_i-x_j|\leq\frac{m}{\Delta}$.
    \end{observation}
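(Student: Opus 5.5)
This is a direct negation of the definition of being $(\ell,\Delta,k)$-spread-out in \cref{def:spread-out}, so the plan is simply to unwind the quantifiers carefully. By definition, $x\in[m]^I$ is $(\ell,\Delta,k)$-spread-out iff
\[
\forall A\subseteq I \text{ with } |A|=\ell,\ \exists S\subseteq A \text{ with } |S|=k \text{ such that } x_S \text{ is } \Delta^{-1}\text{-widely-spaced}.
\]
Negating, $x$ fails to be spread-out iff there is some $A\subseteq I$ with $|A|=\ell$ such that for every $S\subseteq A$ with $|S|=k$, $x_S$ is not $\Delta^{-1}$-widely-spaced. This gives the first half of the observation verbatim.

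For the second half, I would negate the definition of widely-spaced. The vector $x_S$ is $\Delta^{-1}$-widely-spaced iff $|x_i-x_j|>m/\Delta$ for all $i\neq j\in S$. Hence $x_S$ fails to be widely-spaced iff there exist $i\neq j\in S$ with $|x_i-x_j|\le m/\Delta$. Substituting this into the previous display yields exactly the stated characterization.

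There is no real obstacle here; the only points to check are the following.
\begin{itemize}
\item The quantifiers are ranged over subsets of exactly the stated sizes. When $|I|<\ell$ there are no such $A$, so $x$ is vacuously spread-out and the existential condition in the observation is also false, so the equivalence still holds.
\item The strict inequality $>$ negates to the non-strict $\le$, matching the statement.
\end{itemize}
The same unwinding applies verbatim to the recording-query vectors $z$ and to partial outputs $\tau$, since both are defined through the corresponding vectors $z_I$ and $(\tau_v(i))_{i\in\dom(\tau)}$.
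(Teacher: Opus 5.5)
Your proposal is correct and matches the paper, which states this as an immediate observation obtained by negating the quantifiers in the definitions of $(\ell,\Delta,k)$-spread-out and of $\Delta^{-1}$-widely-spaced, and gives no separate proof. Requiring $i\neq j$ when negating widely-spaced, and noting that the case $|I|<\ell$ is vacuous, are exactly the right points to check.
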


   \begin{definition}
        Define $\pispread^{\ell,\Delta,k}$ to be the projection that takes input states defined in the recording-query basis and projects them onto the subspace spanned by recording-query basis states
        $\ket{z}$ such that $z$ is
        $(\ell,\Delta,k)$-spread-out.
    \end{definition}

We now state the key lemmas that we will use to prove
the three-part strategy in our outline above.

The first lemma corresponding to part A of our outline is
the following property which we prove in \cref{sec:spreadrecord}:

    \begin{restatable}{lemma}{spreadbasislemma}
        \label{lem:queried-inputs-spread-out}
         Fix some positive integers $k$, $\Delta$, $n$, $m$, $\ell$ and $t$ satisfying $m\ge n^2$, $4\le k\le\Delta\le n$, 
        \begin{math}
            \ell\ge\frac{4k\log\Delta}{\log\log\Delta},
        \end{math}
        and $t\le \sqrt{k\Delta}/12$. 
        Let $\ket{\phi_t}$ be the state of
         a quantum query algorithm (in the recording-query basis) after $t$ queries to a uniformly random input in
        $[m]^n$.  Then
        \begin{math}
            \norm{(\identity-\pispread^{\ell,\Delta,k})\ket{\phi_t}}
            \le \Delta^{-k}.
        \end{math}
    \end{restatable}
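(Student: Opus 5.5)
The plan follows the progress-measure method of Hamoudi and Magniez~\cite{HM23} for multiple collisions. We track the norm of the component of $\ket{\phi_t}$ on non-spread-out recording-query basis states and show that each recording query can raise it only by a small amount. Write $\Pi_{\mathrm{bad}}=\identity-\pispread^{\ell,\Delta,k}$ and $\beta_t=\norm{\Pi_{\mathrm{bad}}\ket{\phi_t}}$. Since the algorithm unitaries act as $\identity$ on $\inputreg$ and $\Pi_{\mathrm{bad}}$ is diagonal in the recording basis, only the recording queries $\record$ change $\beta_t$. This gives $\beta_{t+1}\le \beta_t+\norm{\Pi_{\mathrm{bad}}\record\,\pispread^{\ell,\Delta,k}\ket{\phi_t}}$.

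Rather than tracking the binary spread/non-spread property, I would refine it into a graded progress measure. A configuration $z$ fails to be spread out exactly when, by \cref{obs:nonspread}, some $\ell$ recorded coordinates contain no $\Delta^{-1}$-widely-spaced $k$-subset. By a greedy or Turán-type argument, such a set can be covered by fewer than $k$ intervals of length about $2m/\Delta$. Hence the $\ell$ values are packed into at most $k$ short windows, and with $\ell\ge 4k\log\Delta/\log\log\Delta$ this forces many close pairs or clusters. I would let $\Pi_j$ project onto basis states $z$ whose non-$\perp$ values admit $j$ such ``clustering events''. A new value lands within $m/\Delta$ of one of the at most $t$ existing recorded values with probability about $t/\Delta$. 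Using \cref{prop:recording-query}, applying $\record$ to a $\perp$ coordinate spreads amplitude $1/\sqrt m$ over each new value. So the amplitude moved from $\Pi_{\le j}$ to $\Pi_{j+1}$ in one query is at most about $\sqrt{t/\Delta}$, and the terms from non-$\perp$ entries that change value contribute the usual $O(1/\sqrt m)$ corrections.

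The standard recurrence is $\norm{\Pi_{\ge j}\ket{\phi_{t}}}\le \sum_{s<t} c\sqrt{s/\Delta}\,\norm{\Pi_{\ge j-1}\ket{\phi_s}}$. Solving it gives $\norm{\Pi_{\ge j}\ket{\phi_t}}\le \binom{t}{j}(c\sqrt{t/\Delta})^{j}\le (c' t^{3/2}/(j\sqrt\Delta))^{j}$. I then need to show that every non-spread configuration has $j\gtrsim \ell$ clustering events, or, more cleverly, about $k$ events of a stronger kind. Here the hypothesis $t\le \sqrt{k\Delta}/12$ must be what makes the ratio small. With $j\approx k$ the bound becomes $(c't^{3/2}/(k\sqrt\Delta))^k$, and we need this to be at most $\Delta^{-k}$.

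The main obstacle is this counting. The naive pair-collision measure loses a factor, so I would choose the progress measure to be the number of recorded values falling into already-occupied windows of a fixed partition of $[m]$ into $\Delta$ buckets, together with the shifted partition. When $\ell$ values occupy fewer than $k$ windows, about $\ell-k$ of them are repeats. For a given step, the probability of a repeat is at most $t/\Delta$. Counting which $\ell-k$ steps produced repeats, the norm bound is roughly $\binom{t}{\ell}\cdot(t/\Delta)^{(\ell-k)/2}$, taking square roots of probabilities as amplitudes and absorbing combinatorial factors as in~\cite{HM23}. With $t^2\le k\Delta/144$ and $\ell\log\log\Delta\ge 4k\log\Delta$, the factor $(t/\Delta)^{(\ell-k)/2}\le (k/\Delta)^{\Omega(\ell)}$ should overwhelm $\binom{t}{\ell}$ and yield $\Delta^{-k}$. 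Checking that this arithmetic closes for all $k\le\Delta$, including $k$ close to $\Delta$ where the $\log\log$ choice of $\ell$ matters, is where I expect most of the difficulty. If the direct estimate falls short, I would fall back on a union bound over the $\binom{t}{\ell}$ choices of $\ell$ recorded coordinates, and apply the \cite{HM23}-style per-query amplitude bound to the event that those specific coordinates have fewer than $k$ distinct windows.
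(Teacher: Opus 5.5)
Your framework (cover non-spread configurations by a few short windows, then run a \cite{HM23}-style one-query-at-a-time progress recurrence) is the right family. However, neither progress measure you propose can reach the stated threshold $t\le\sqrt{k\Delta}/12$, so this is a genuine gap, not unchecked arithmetic.

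Both your clustering measure and your repeat measure reward landing near, or in the window of, \emph{any} previously recorded value. With up to $t$ recorded values, the per-query transition amplitude is therefore $\Theta(\sqrt{t/\Delta})$. Your own recurrence then gives $\binom{t}{j}(c\sqrt{t/\Delta})^j\approx(c'\,t^{3/2}/(j\sqrt\Delta))^j$. Because the factor $\binom{t}{j}$ is not square-rooted, at $t=\sqrt{k\Delta}/12$ the base is of order $k^{3/4}\Delta^{1/4}/j$. For $j\approx\ell$ this exceeds $1$ unless $\Delta\lesssim\ell^4/k^3$, i.e.\ unless $\Delta$ is within a polylogarithmic factor of $k$. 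For $j\approx k$ it is even worse.

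In particular, your hoped-for step, that $(t/\Delta)^{(\ell-k)/2}$ overwhelms $\binom{t}{\ell}$, is false, since $\binom{t}{\ell}(t/\Delta)^{\ell/2}\ge(t^{3/2}/(\ell\sqrt\Delta))^\ell$. A measure of this kind only yields a collision-type threshold of order $\ell^{2/3}\Delta^{1/3}$. Your fallback does not repair this. Which coordinates are recorded varies across the superposition, so a union over coordinate sets ranges over $\binom{n}{\ell}$ sets, far too many. Your concern about $k$ close to $\Delta$ is moot: if $\ell>t$ the claim is trivial because at most $t$ entries are non-$\perp$, and $\ell\le t$ already forces $\Delta\ge 144\ell^2/k\gg k$.

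The missing idea is to put the union bound over the bins rather than over coordinates or histories. By \cref{lem:cover}, every non-spread $z$ has $\ell$ recorded values inside some set $B$ of $2(k-1)$ of the $\Delta$ bins of a single fixed partition. Since all the projectors involved are diagonal in the recording-query basis,
\[
\norm{(\identity-\pispread^{\ell,\Delta,k})\ket{\phi_t}}^2\le\sum_{|B|=2(k-1)}\norm{\Pi^B_{\ge\ell}\ket{\phi_t}}^2 ,
\]
where $\Pi^B_{\ge\ell}$ projects onto states with at least $\ell$ recorded values in $B$.

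For fixed $B$ the progress measure is just this count. A query raises it with amplitude $O(\sqrt{|B|/\Delta})=O(\sqrt{k/\Delta})$, independent of how many values are already recorded. The recurrence gives $\norm{\Pi^B_{\ge\ell}\ket{\phi_t}}\le\binom{t}{\ell}(6\sqrt{2(k-1)/\Delta})^\ell$, whose square is at most $(e^2k^2/(2\ell^2))^\ell$ using $t^2\le k\Delta/144$. The union over $B$ costs only $\binom{\Delta}{2(k-1)}\le\Delta^{2k}$. The hypothesis $\ell\ge 4k\log\Delta/\log\log\Delta$ is exactly what makes $(e^2k^2/(2\ell^2))^\ell\le\Delta^{-4k}$, which absorbs that cost and leaves $\Delta^{-2k}$.
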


The second, corresponding to part B is the following
property which we prove in \cref{sec:nonspread}

    \begin{restatable}{lemma}{nonspreadoutputs}
        \label{lem:non-spread-guesses}
        Suppose that $m\ge 8e$, $n$, $\Delta$ and
        $\ell\geq k$ are integers. 
        Let $\tau$ be a partial output of $\sortproblem_{n,[m]}$
        with
        $\abs{\tau}\ge 2\ell$ whose values are not
        $(2\ell,\Delta,k)$-spread-out and let $\piout^\sortproblem$ be the projection of the state (in the standard basis) onto the subspace 
        where the partial output
        $\tau$ is consistent with $\sortproblem_{n,[m]}$
        on the input.
        Let $\ket{\phi}$ be normalized in the recording-query basis over $[m]^n$.
        Then we have
        $\norm{\piout^\sortproblem \calS \pispread^{\ell,\Delta,k}
                \ket{\phi}} \leq 2^{- \ell/2}$.
    \end{restatable}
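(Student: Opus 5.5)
The plan is to show that, once we restrict to spread-out recording-query basis states, a non-spread partial output $\tau$ forces many coordinates where the recording-query database does not already ``contain'' the claimed value. Each such coordinate then contributes only a factor $O(1/\sqrt m)$ to the amplitude of consistency with $\tau$ in the standard basis.

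First the combinatorics. Since $\tau$ is not $(2\ell,\Delta,k)$-spread-out, by \cref{obs:nonspread} there is $A\subseteq\dom(\tau)$ with $|A|=2\ell$ such that no $k$-subset of $(\tau_v(i))_{i\in A}$ is $\Delta^{-1}$-widely-spaced. Let $z$ be any $(\ell,\Delta,k)$-spread-out recording-query basis state, and let $G_z=\{i\in A: z_i\ne\perp,\ z_i=\tau_v(i)\}$. If $|G_z|\ge\ell$, spread-outness of $z$ applied to an $\ell$-subset of $G_z$ would give $k$ indices with widely-spaced $z$-values. These values equal the $\tau_v$-values, which is a contradiction. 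Hence at least $\ell$ coordinates of $A$ are \emph{bad}, meaning $z_i=\perp$ or $z_i\ne\tau_v(i)$. For each spread $z$ we let $B(z)$ be the first $\ell$ bad coordinates of $A$ in some fixed order. This makes $B(z)$ a deterministic function of the recording values on $A$. Consequently, the projectors $P_B$ onto spread basis states with $B(z)=B$ are diagonal in the recording basis, mutually orthogonal, and sum to $\pispread^{\ell,\Delta,k}$.

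Second, the single-coordinate amplitude bound. We drop the rank constraints and bound $\piout^\sortproblem\preceq\bigotimes_{i\in B}\op{\tau_v(i)}{\tau_v(i)}\otimes\identity$ for any $B\subseteq A$. For one coordinate with claimed value $v$, we compute $\bra v\mathsf S$ restricted to recording states other than $\ket v$. We have $\mathsf S\ket\perp=\ket{\hat 0}$, which has overlap $1/\sqrt m$ with $\ket v$. For $y\ne v$, $\mathsf S\ket y=\ket y-\frac1{\sqrt m}\ket{\hat0}+\frac1{\sqrt m}\ket\perp$, which has overlap $-1/m$ with $\ket v$. So the row vector $\bra v\mathsf S(\identity-\op vv)$ has norm $\sqrt{1/m+(m-1)/m^2}\le\sqrt{2/m}$. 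Since $\mathsf S^{\otimes n}$ is a tensor product, $P_B$ restricts each $i\in B$ to the bad subspace, and the remaining constraints of $P_B$ are a projector, we get
\[
\norm{\piout^\sortproblem\calS P_B}\le\Bigl\|\bigotimes_{i\in B}\bra{\tau_v(i)}\mathsf S(\identity-\op{\tau_v(i)}{\tau_v(i)})\Bigr\|\le(2/m)^{\ell/2}.
\]
The care needed here is to write $P_B=(\bigotimes_{i\in B}(\identity-\op{\tau_v(i)}{\tau_v(i)}))P_B$, so that the remaining factor of $P_B$ is absorbed with norm at most $1$.

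Third, we combine via orthogonality and Cauchy--Schwarz:
\[
\norm{\piout^\sortproblem\calS\pispread^{\ell,\Delta,k}\ket\phi}\le\sum_B\norm{\piout^\sortproblem\calS P_B\ket\phi}\le(2/m)^{\ell/2}\sqrt{\tbinom{2\ell}{\ell}}\Bigl(\sum_B\norm{P_B\ket\phi}^2\Bigr)^{1/2}\le(8/m)^{\ell/2}.
\]
This is at most $2^{-\ell/2}$ once $m\ge16$. That threshold is slightly larger than the stated $m\ge8e$, so the constant needs some tightening. One option is to use $\binom{2\ell}{\ell}\le 4^\ell/\sqrt{\pi\ell}$ together with a sharper per-coordinate estimate, or to exploit the fact that the contributions from different $B$ are nearly orthogonal after applying $\calS$. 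The step I expect to be the main obstacle is making the tensor-product norm bound rigorous: $\calS$ acts on all coordinates, and $P_B$ also constrains coordinates outside $B$. The bookkeeping must therefore isolate the $B$-coordinates cleanly, and the choice of canonical $B(z)$ that keeps the $P_B$ orthogonal is what allows this.
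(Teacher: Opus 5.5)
Your argument is correct, and it takes a somewhat different route from the paper's proof.

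\textbf{The paper's route.} Following Hamoudi--Magniez, the paper restricts $\tau$ to the $2\ell$-element witness. It partitions recording basis states by the pair $(a,b)$, counting the $\perp$-coordinates and the wrong-value coordinates on $\dom(\tau)$. It bounds each basis state's contribution by $(1/\sqrt m)^a(1/m)^b$ and counts the $\binom{2\ell}{a}\binom{2\ell-a}{b}(m-1)^b$ configurations, which amounts to a Frobenius-type count. It then shows that spread states force $a+b\ge\ell+1$. Finally it sums over all classes using the triangle inequality, bounding each class by the full norm of $\pispread^{\ell,\Delta,k}\ket\phi$, plus a Cauchy--Schwarz in $a$. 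The resulting geometric series is where $m\ge 8e$ is used.

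\textbf{Your route.} You choose a canonical set $B(z)$ of exactly $\ell$ bad coordinates. This gives an orthogonal decomposition $\pispread^{\ell,\Delta,k}=\sum_B P_B$ into $\binom{2\ell}{\ell}$ pieces. You merge the $\perp$ and wrong-value cases into a single per-coordinate operator-norm bound $\sqrt{2m-1}/m\le\sqrt{2/m}$. You then use orthogonality of the $P_B$ through one Cauchy--Schwarz. This is shorter and yields the sharper bound $(8/m)^{\ell/2}$. The paper's finer $(a,b)$ bookkeeping would only pay off if the $\perp$ and wrong-value factors had to be treated differently, which is not the case here.

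\textbf{Corrections to your self-assessment.}
\begin{itemize}
\item No tightening is needed. Since $8e\approx 21.7>16$, the hypothesis $m\ge 8e$ already gives $(8/m)^{\ell/2}\le e^{-\ell/2}<2^{-\ell/2}$. The threshold $m\ge16$ is weaker than the stated one, not stronger.
\item The step you flag as the main obstacle is already settled by your own identity $P_B=Q_BP_B$, where $Q_B=\bigotimes_{i\in B}(\identity-\op{\tau_v(i)}{\tau_v(i)})\otimes\identity$. The operator $\bigl(\bigotimes_{i\in B}\op{\tau_v(i)}{\tau_v(i)}\otimes\identity\bigr)\calS Q_B$ is literally a tensor product of the rank-one single-coordinate operators on $B$ with the unitary $\mathsf S$ on the remaining coordinates. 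Its norm is therefore the product of the single-coordinate norms.
\item The leftover factor $P_B$ carries the non-product spread condition, but it is a projector and contributes norm at most $1$. So no further bookkeeping is required.
\end{itemize}
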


    Together, these two lemmas imply the following 
    corollary, which says that any non-spread partial output $\tau$ is exponentially unlikely in $k$ to be correct.
    
    \begin{corollary}
    \label{cor:non-spread-output}
      Fix some positive integers $k$, $\Delta$, $n$, $m$, $\ell$ and $t$ satisfying $m\ge n^2\ge 8e$, $4\le k\le\Delta\le n$, 
        \begin{math}
            \ell\ge\frac{4k\log\Delta}{\log\log\Delta},
        \end{math}
        and $t\le \sqrt{k\Delta}/12$. 
        Then the probability that any algorithm with $t$ quantum queries to a uniformly random input $x\in [m]^n$ produces a partial output $\tau$ 
        with $\abs{\tau}\ge 2\ell$ whose values are not
        $(2\ell,\Delta,k)$-spread-out and that is consistent with $\sortproblem(x)$
         is at most
        $2^{-5k/6}$.
    \end{corollary}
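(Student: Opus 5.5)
The plan is to combine \cref{lem:queried-inputs-spread-out} and \cref{lem:non-spread-guesses} via the triangle inequality, applied to the final state of the algorithm, decomposed by the measured workspace outcome. Let $\ket{\phi_t}$ be the final state written in the recording-query basis, so the corresponding standard-basis state is $\ket{\psi_t}=\calS\ket{\phi_t}$. Let $\Pi$ be the success projector that, on each workspace basis state $\ket{i,p,w}$, applies $\piout^\sortproblem$ with $\tau=q(w)$ when $q(w)$ is a non-spread partial output of size at least $2\ell$, and applies $0$ otherwise. Writing $\identity=\pispread^{\ell,\Delta,k}+(\identity-\pispread^{\ell,\Delta,k})$ on the input register, we get
\[
\norm{\Pi\calS\ket{\phi_t}}\le \norm{\Pi\calS\pispread^{\ell,\Delta,k}\ket{\phi_t}}+\norm{(\identity-\pispread^{\ell,\Delta,k})\ket{\phi_t}}.
\]
We use here that $\calS$ is unitary, that $\Pi$ is a projection, and that $\pispread$ acts only on the input register and commutes with the workspace projectors.

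The second term is at most $\Delta^{-k}$ by \cref{lem:queried-inputs-spread-out}, whose hypotheses are exactly those assumed in the corollary.

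For the first term, decompose $\pispread\ket{\phi_t}=\sum_{i,p,w}\ket{i,p,w}\otimes\ket{\phi^{i,p,w}}$. The pieces for different $(i,p,w)$ are orthogonal, and $\Pi$ acts blockwise. So the squared norm equals $\sum_{i,p,w}\norm{\piout^\sortproblem\calS\pispread\ket{\phi^{i,p,w}}}^2$, with $\tau=q(w)$ in each block. Since $\pispread$ is idempotent, $\pispread\ket{\phi^{i,p,w}}=\ket{\phi^{i,p,w}}$, and we may apply \cref{lem:non-spread-guesses} to the normalized vector $\ket{\phi^{i,p,w}}/\norm{\ket{\phi^{i,p,w}}}$; this needs $m\ge 8e$, which follows from $m\ge n^2\ge 8e$. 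Each block therefore contributes at most $2^{-\ell}\norm{\ket{\phi^{i,p,w}}}^2$, and summing gives the bound $2^{-\ell/2}$ on the first term. The one technical point is this blockwise application, which is needed because $\tau$ depends on $w$. It goes through because $\calS$ and the standard-basis projector act only on $\inputreg$, exactly as in the proof of \cref{cor:noise-stability}.

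Putting the two terms together, the success probability is at most $(2^{-\ell/2}+\Delta^{-k})^2\le 2\cdot 2^{-\ell}+2\Delta^{-2k}$. Now $\ell\ge 4k\log\Delta/\log\log\Delta\ge 4k$ for $\Delta\ge 4$, since $\log\Delta/\log\log\Delta\ge 1$ there. Also $\Delta\ge k\ge 4$ gives $\Delta^{-2k}\le 2^{-4k}$. Hence the probability is at most $2^{1-4k}+2^{1-4k}=2^{2-4k}\le 2^{-5k/6}$ for $k\ge 4$, as required.
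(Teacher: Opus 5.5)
Your proof is correct and follows the paper's argument: you split $\ket{\phi_t}$ using $\pispread^{\ell,\Delta,k}$, bound the non-spread part by $\Delta^{-k}$ via \cref{lem:queried-inputs-spread-out}, and bound the spread part blockwise in $w$ via \cref{lem:non-spread-guesses}. The differences are cosmetic. The paper bounds the blockwise term by a maximum over $w$, which is equivalent to your orthogonal sum of squares. It also finishes using only $\ell\ge k$ to get $(4^{-k}+2^{-k/2})^2\le 2^{-5k/6}$, whereas your use of $\ell\ge 4k$ gives the sharper bound $2^{2-4k}$.
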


    \begin{proof}
        Let the state of the quantum query algorithm
        after $t$ steps in the standard basis be
        $\ket{\psi_t}$ and let $\ket{\phi_t}=\calS \ket{\psi_t}$.
        Let $\ket{\phi_t^w}$ be the state of $\ket{\phi_t}$ after measuring the work register and reading value $w$.
        We can assume without loss of generality within the context of this lemma that $q(w)$ always gives a partial output $\tau$ that is not $(2\ell,\Delta,k)$-spread-out.
        Let $E$ be the set of such $\tau$.
        Let $\piout^\sortproblem$ be as it was defined in \cref{lem:non-spread-guesses}, and $\pisucc = \sum_{i,p,w} \op{i,p,w}{i,p,w} \otimes \op{q(w)}{q(w)}$.
        Then the square root of the probability that the algorithm produces a correct output that is not $(2\ell,\Delta,k)$-spread-out is given by:
        \begin{align*}
        \norm{\pisucc\ket{\psi_t}}&=
        \norm{\pisucc\calS\ket{\phi_t}}\\
        &=\norm{\pisucc \calS\big((\identity-\pispread^{\ell,\Delta,k})\ket{\phi_t} + \pispread^{\ell,\Delta,k}\ket{\phi_t}\big)}\\
        &\le \norm{\pisucc\calS(\identity-\pispread^{\ell,\Delta,k})\ket{\phi_t}} + \norm{\pisucc\calS\pispread^{\ell,\Delta,k}\ket{\phi_t}}\\
             &\le \norm{(\identity-\pispread^{\ell,\Delta,k})\ket{\phi_t}} + \norm{\pisucc\calS\pispread^{\ell,\Delta,k}\ket{\phi_t}}\\
             &\le \Delta^{-k} + \norm{\pisucc \calS\pispread^{\ell,\Delta,k}\ket{\phi_t}}\qquad\textrm{by \cref{lem:queried-inputs-spread-out}}\\
             &\leq \Delta^{-k} + \max_{w}\norm{\Pi_{succ}\calS\pispread^{\ell,\Delta,k}\ket{\phi_t^w}}\\
             &= \Delta^{-k} + \max_{w}\norm{\Pi_{q(w)}^\sortproblem \calS\pispread^{\ell,\Delta,k}\ket{\phi_t^w}}\\
             &\leq \Delta^{-k} + \max_{\tau, \ket{\phi}}\norm{\piout^\sortproblem \calS\pispread^{\ell,\Delta,k}\ket{\phi}}\qquad \textrm{for $\ket{\phi}$ being a recording query state}\\
             &\leq \Delta^{-k} + 2^{-\ell/2} \qquad \textrm{by \cref{lem:non-spread-guesses}}\\
             &\le 4^{-k} + 2^{-k/2}\le 2^{-5k/12} \qquad\textrm{since $k\le \ell$ and $4\le k\le \Delta$.}
        \end{align*}
        The probability in the statement of the corollary is bounded by the square of this norm.
    \end{proof}

   It remains to handle the remaining case where the
   partial output $\tau$ is spread-out.
This last case is part C of our outline.
The argument is based on the following property when
the noise operator is applied to inputs to $\sortproblem$.  
We prove it in \cref{sec:noisysort}.

 \begin{restatable}{lemma}{noisyspread}
        \label{lem:stable-tie-noise-return}
        Suppose  that $m\ge n^2$ and
        $4\le k\le\Delta\le n/32$, and let $\tau$ be a
         partial output for which there is a
         subset $I\subseteq \dom(\tau)$ of size $k$
         such that $\tau_v(I)$ is $\Delta^{-1}$-widely-spaced.
        For $p\ge 16\Delta/n$ and every $x\in[m]^n$ such
        that $\sortproblem(x)\|\tau$,
        $
        \Pr[\sortproblem(\noise_p(x))\|\tau]
            \le
            e^{-2\Delta}+2^{-(k-2)}$.
    \end{restatable}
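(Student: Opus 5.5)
The plan is a direct classical analysis of $\noise_p(x)$, in the spirit of case 3(ii) of Borodin and Cook. Fix $x$ with $\sortproblem(x)\|\tau$ and the set $I\subseteq\dom(\tau)$ of size $k$ with $\tau_v(I)$ $\Delta^{-1}$-widely-spaced. List the values of $I$ in increasing order as $v_1<\dots<v_k$. They cut $[m]$ into $k-1$ open \emph{gaps} $G_j=(v_j,v_{j+1})$, each of relative size $g_j=|G_j|/m> 1/\Delta$ up to rounding. For $\sortproblem(\noise_p(x))\|\tau$ to hold, every $i\in I$ must keep value $v_i$ and rank $\tau_r(i)$. Hence the number of coordinates of $\noise_p(x)$ whose value lies in $G_j$ must equal a fixed number $N_j$, namely the count in $x$; ties at the endpoints are resolved by index and only shift the targets by fixed amounts. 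So it suffices to bound the probability that all $k-1$ gap counts are preserved exactly.

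\textbf{Step 1: few resampled coordinates.} Let $R$ be the set of coordinates outside $I$ that are actually resampled, each independently with probability $p$. Since $k\le\Delta\le n/32$, we have $\E|R|\ge p(n-k)\ge 15\Delta$, which for the estimate below is at least $8\Delta$ after adjusting constants. By \cref{thm:chernoff-lower-tail}, $\Pr[|R|<4\Delta]\le e^{-2\Delta}$. This yields the first term of the bound.

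\textbf{Step 2: reduce to a multinomial point probability.} Condition on the set $R$ with $|R|\ge 4\Delta$. Also condition on the resampling outcomes of the coordinates in $I$; if any of them moves, $\tau$ already fails. The non-resampled coordinates contribute a fixed count $c_j$ to each $G_j$. The coordinates of $R$ take i.i.d.\ uniform values in $[m]$. Therefore the event becomes: the multinomial vector $(M_1,\dots,M_{k-1},M_{\mathrm{out}})$ of landing counts of $|R|$ uniform samples, with cell probabilities $g_1,\dots,g_{k-1}$ and the remainder, equals a specific target $(N_j-c_j)_j$.

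\textbf{Step 3: a factor $1/2$ per gap.} Let $j^\ast$ be the gap of largest mass, and also let the outside region be absorbed last. Reveal the counts $M_j$ for $j\neq j^\ast$ one at a time. Conditioned on the earlier counts, $M_j$ is Binomial$(L_j,q_j)$, where $L_j$ is the number of samples not yet placed and $q_j=g_j/(\text{remaining mass})$. The remaining mass still contains $G_{j^\ast}$, so $q_j\le 1/2$. This is the source of the loss of one gap, and with slack it gives the exponent $k-2$.

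A Binomial with $q\le 1/2$ and $L_jq_j\ge c$ for a suitable constant $c$ has every point probability at most $1/2$, for instance via the mode bound $O(1/\sqrt{Lq(1-q)})$, or directly via $\Pr[X=0]\le e^{-c}$ together with a comparison of adjacent terms. Multiplying the $k-2$ conditional factors gives $2^{-(k-2)}$.

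\textbf{Main obstacle.} The hard step is guaranteeing $L_jq_j$ is bounded below at every stage. The targets are fixed, so $L_j$ is determined by $|R|$ minus the earlier targets, and it could be small if $x$ was dense in early gaps. My first attempt would be to order the revealed gaps by increasing target and use $q_j\ge g_j\ge 1/\Delta$ with $|R|\ge 4\Delta$. Failing that, I would split into cases: if $L_j$ is small, then the earlier exact counts were already far above their means $|R|g_i$, so their point probabilities are tiny and a cruder factor suffices there. Making this case split clean while keeping exactly a factor $1/2$ per gap is where I expect the real work to lie.
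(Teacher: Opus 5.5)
Your Steps 1--2 essentially match the paper's reduction: condition on the resampled set $R$, discard small $|R|$ by Chernoff, and require the gap-count vector of the resampled coordinates to hit a fixed target. Step 3, which you yourself flag, is a genuine gap. You justify the factor $1/2$ per stage only under the invariant $L_jq_j\ge c$, and that invariant fails even with increasing-target ordering. Using $q_j\ge g_j$ you only get $L_jq_j$ at least roughly $4(k-1-j)/(k-2)$, and it really can be tiny. For example, take $k-1\approx\Delta/2$ gaps of mass $\approx1/\Delta$ and an outside region of mass $\approx1/2$. Let every revealed target equal $8$ and the held-out targets be $0$, so $|R|=8(k-2)\approx4\Delta$. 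At the last revealed stage $L_j=8$ and $q_j\approx2/\Delta$, so $L_jq_j=O(1/\Delta)$. The factor there is actually tiny, because the target is far above the mean, so the invariant is the wrong thing to aim for. Your fallback charging argument is not specified, so the bound $2^{-(k-2)}$ is not established.

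The paper does not ask for $1/2$ per stage. It orders \emph{all} $d=k-1$ cells by increasing target, so at stage $j$ at least $n_j\ge(1-\frac{j-1}{d})|R|$ trials remain. It bounds each conditional point probability by $1/\sqrt{2\gamma n_j}$ with $\gamma\ge15/(16\Delta)$, using the variance-based Poisson-binomial bound of \cref{lem:poisson-binomial-two-outcome}. The weak late stages are then paid for in aggregate via $\prod_{j<d}\sqrt{d/(d-j+1)}=\sqrt{d^{d-1}/d!}\le e^{(d-1)/2}$. This is \cref{lem:poisson-multinomial-point}, and it yields $(e/(2\gamma|R|))^{(k-2)/2}$. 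That is at most $(e/15)^{(k-2)/2}\le2^{-(k-2)}$ only when $|R|>8\Delta$. Your threshold $4\Delta$ gives $(e/7.5)^{(k-2)/2}$, which is too weak. Let $R$ include $I$, so that $\E|R|=pn\ge16\Delta$ and $\Pr[|R|\le8\Delta]\le e^{-2\Delta}$.

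Alternatively, your own scheme closes with a target-sensitive observation. If $q\le1/2$, then $\Pr[\operatorname{Bin}(L,q)=c]\le1/2$ for every $c\ge1$, with no condition on $Lq$. So only a zero target is dangerous. Under increasing-target order, a zero target forces all earlier targets to be zero, so $L_j=|R|$ and the factor is at most $e^{-g_j|R|}<1/2$.

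Separately, ties do not merely shift targets by fixed amounts. A resampled coordinate landing exactly on some $v_j$ counts below or above $(v_j,i_j)$ according to its index. So the open-interval counts are not determined by $\tau$, and the cell probabilities are index-dependent. The paper handles this with per-index cells $G_{i,j}$, which is why it needs a Poisson rather than an ordinary multinomial bound. In your framework you could instead condition on which resampled coordinates hit $\{v_1,\dots,v_k\}$; the rest are then i.i.d.\ with fixed targets.
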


We combine this with \cref{cor:noise-stability} to
obtain the following.

    \begin{corollary}
    \label{cor:spreadoutput}
        Suppose  that $m\ge n^2$, and
        $4\le k\le\Delta\le n/32$, and let $\tau$ be a
         partial output with a
         subset $I\subseteq \dom(\tau)$ of size $k$
         such that $\tau_v(I)$ is $\Delta^{-1}$-widely-spaced that is
         produced by a quantum query algorithm that makes at most $t$ queries to a uniformly random
         input $x\in [m]^n$.
        If $16\Delta/n\le \frac{k}{4t}-(\frac{k}{4t})^2/2$, the
        probability that such a $\tau$ is consistent with
        the value of $\sortproblem(x)$ is
        $ \le
            e^{k/4}(e^{-2\Delta}+2^{-(k-2)})$.
    \end{corollary}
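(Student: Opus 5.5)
We apply \cref{cor:noise-stability} with $\badset=\emptyset$, $A=[n]$, and the relation $\relation$ consisting of pairs $(x,\tau)$ with $\sortproblem(x)\|\tau$. Here $\outputspace$ is restricted to the partial outputs $\tau$ that contain a size-$k$ subset $I\subseteq\dom(\tau)$ with $\tau_v(I)$ being $\Delta^{-1}$-widely-spaced. Restricting $\outputspace$ in this way is harmless. In the success projector, the event that the algorithm produces such a $\tau$ and $\tau$ is correct is dominated by the same maximum over $w$ of $\norm{\Pi_{\event_{q(w)}}\projfouriervs{t}\ket{\psi_t^{i,p,w}}}^2$. We can either condition on the measured output lying in this class, or treat outputs outside the class as automatically failing, i.e.\ $\event_y=\emptyset$ for them. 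Then for each admissible $\tau$ the event is $\event_\tau=\set{x : \sortproblem(x)\|\tau}$. The tightened form of \cref{cor:noise-stability} bounds the success probability by
\[
(1-p)^{-t}\max_{\tau}\max_{x\in\event_\tau}\Pr\s{\sortproblem(\noise_p(x))\|\tau}.
\]

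Next we choose $p$. We set $p=\frac{k}{4t}-\big(\frac{k}{4t}\big)^2/2$. The hypothesis $16\Delta/n\le p$ then makes \cref{lem:stable-tie-noise-return} applicable with this $p$. We also need $p\in(0,1)$. If $k/(4t)\le 1$ this is immediate, since $a-a^2/2\in(0,1/2]$ for $a\in(0,1]$. If $k/(4t)>1$, we may instead take $p$ smaller; the argument only needs $16\Delta/n\le p$ and $(1-p)^{-t}\le e^{k/4}$, and the hypothesis then still permits a suitable choice, e.g.\ $p=1/2$ with $t<k/4$. With $a=k/(4t)$ and $p=a-a^2/2$, the inequality $1/(1-a+a^2/2)\le e^{a}$ for $a\in[0,1]$ (already used in the paper) gives $(1-p)^{-t}\le e^{at}=e^{k/4}$.

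Finally, for every admissible $\tau$ and every $x$ with $\sortproblem(x)\|\tau$, \cref{lem:stable-tie-noise-return} gives $\Pr[\sortproblem(\noise_p(x))\|\tau]\le e^{-2\Delta}+2^{-(k-2)}$, using $m\ge n^2$, $4\le k\le\Delta\le n/32$, and $p\ge16\Delta/n$. Multiplying by the factor $e^{k/4}$ yields the claimed bound.

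The main point to check carefully is the reduction to a single relation: the corollary is about outputs of this specific spread-out form, and \cref{cor:noise-stability} quantifies over all outputs. Defining $\event_y=\emptyset$ for non-admissible $y$ is fully compatible with the proof of \cref{cor:noise-stability}, since the bound is a maximum over $y$ and empty events contribute zero. The parameter check that $p<1$ is the only other detail.
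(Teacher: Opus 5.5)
Your proposal is correct and is essentially the paper's proof: apply \cref{cor:noise-stability} with $A=[n]$, $\badset=\emptyset$, $p=\frac{k}{4t}-(\frac{k}{4t})^2/2$, and the events $\{y:\sortproblem(y)\|\tau\}$, bound the stability term with \cref{lem:stable-tie-noise-return}, and use $(1-p)^{-t}\le e^{k/4}$. Your separate case for $k/(4t)>1$ is unnecessary but harmless: the hypothesis forces $0<k/(4t)<2$, so $p\in(0,1/2]$ automatically, and $e^{-a}\le 1-a+a^2/2$ holds for every $a\ge 0$.
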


\begin{proof}
    If there is no $x\in [m]^n$ such that 
    $\sortproblem(x)\|\tau$ then this probability is 0.
    Otherwise, we apply \cref{cor:noise-stability}
    with $A=[n]$, $p= \frac{k}{4t}-(\frac{k}{4t})^2/2$,
    and $E$ being the event consisting of all those $y\in [m]^n$ such that $\sortproblem(y)\|\tau$, which
    implies
    that this probability is at most
    $(1-p)^{-t}(e^{-2\Delta}+2^{-(k-2)})\le
    e^{k/4}(e^{-2\Delta}+2^{-(k-2)})$.
\end{proof}

With the two cases in hand, we have everything ready to prove our main target  \cref{lem:block_query_bound}. 

\begin{proof}[Proof of \cref{lem:block_query_bound}]
    Let $k$ be an integer such that $4\le k\le t/(\twosortfactor)$ and set $\Delta=\floor{k^{1/3}n^{2/3}}$.
    Then $k\le\Delta\le n/32$.

    \begin{sloppypar}
    Suppose without loss of generality that $t=\floor{ k^{2/3} n^{1/3}/128}$.
    Then $16\Delta/n \le k/(8t)\le \frac{k}{4t}-(\frac{k}{4t})^2/2$ and $t\le \sqrt{k\Delta}/12$.
    Let $\ell=4\ceil{\log\Delta/\log\log\Delta}\ k$.
    Let $\tau$ be a partial output with
    $\abs\tau\ge \sortfactor\ k\ge 2\ell$ produced by
    a quantum query algorithm after $t$ queries
    to a uniformly random input $x\in [m]^n$.
    \end{sloppypar}

    We bound the probability of producing a consistent output by the sum of the probability of producing a consistent output that is $(2\ell, \Delta, k)$-spread-out and the probability of producing a consistent output that is not $(2\ell, \Delta, k)$-spread-out.

If there is a subset $I\subseteq \dom(\tau)$ of size
$k$ such that $\tau_v(I)$ is $\Delta^{-1}$-widely-spaced
then \cref{cor:spreadoutput} implies that
$\tau$ is consistent with $\sortproblem(x)$ with
probability at most $e^{k/4}(e^{-2\Delta}+2^{-(k-2)})$.

If there is no subset $I\subseteq \dom(\tau)$ of size
$k$ such that $\tau_v(I)$ is $\Delta^{-1}$-widely-spaced
then by definition $\tau$ has values that are
not $(2\ell,\Delta,k)$-spread-out and hence
\cref{cor:non-spread-output} implies that the
probability of generating such an output $\tau$ that is also consistent with
$\sortproblem(x)$ is at most $2^{-5k/6}$.

Thus, the probability that the query algorithm given
randomly chosen $x\in [m]^n$ produces a
partial output of size at least $\sortfactor\ k$
that is consistent with $\sortproblem(x)$ is at
most $2^{-5k/6} + e^{k/4}(e^{-2\Delta}+2^{-(k-2)})$
which, since $4\le k\le\Delta$, is at most $2^{-k/12}$. Thus we may take $\gamma=1/12$.
\end{proof}

\subsection{Time-space tradeoff for Sorting}

In this section we use \cref{lem:block_query_bound}
to prove our time-space tradeoff lower bound for sorting, which we restate here for convenience.

\sortingtradeoff*

\begin{proof}
The theorem is trivially true if $T> n^{4/3}$, so we 
assume without loss of generality that $T\le n^{4/3}$.
Since the quantum query algorithm requires an index register $i$ and phase
register $p$ with $\ge \log n$ qubits each, we can
assume that $2^S\ge n^2 > T$.

Let $\alpha$ and $\gamma$ be the constants from the statement of \cref{lem:block_query_bound}.
and define $k=\ceil{3/\gamma} S$.

We break the computation of a quantum query algorithm that makes $T$ quantum queries into segments consisting of $t$ consecutive queries each, where
$t=\lfloor \alpha k^{2/3} n^{1/3}\rfloor$.
There are $\ceil{T/t}$ such segments (where the last segment may be shorter).

Observe that if $T> nt/(\twosortfactor\ k)$ then
$$T\ge \frac{n\lfloor \alpha (3S/\gamma)^{2/3} n^{1/3}\rfloor}{\twosortfactor (3S/\gamma)}$$ and
hence $T$ is $\Omega(n^{4/3} \log\log n/(S^{1/3} \log n))$.

It remains to prove that if $T\le nt/(\twosortfactor\ k)$
then the algorithm computes $\sortproblem_{n,[m]}$ with success probability $<2^{-S}$, so we assume 
for contradiction that the success probability of the algorithm
in computing $\sortproblem_{n,[m]}$ is at
least $2^{-S}$.

By picking sufficiently small constant $c$, \cref{lem:block_query_bound} gives us that $T \geq t$ as otherwise the success probability must be smaller than $2^{-S}$.
If the algorithm correctly computes $\sortproblem_{n,[m]}$ on an input $x$,
then it must produce $n$ output values in total on
input $x$. 
Therefore, one of the $\ceil{T/t}$ segments must produce partial outputs
of size at least $n/\ceil{T/t}\ge nt/(2T)\ge \sortfactor\ k$
that are consistent with $\sortproblem(x)$.
But, with the appropriate choice of $c>0$, by \cref{lem:block_query_bound}, any given algorithm with $t$ quantum queries can only produce $n/\ceil{T/t}\ge nt/(2T)\ge \sortfactor\ k$ outputs that are consistent with $\sortproblem(x)$ with a probability that is at most $2^{-\gamma k} \leq 2^{-3S}$.
Thus, by \cref{thm:quantum-union-bound}, the probability that any segment can produce such an output with $S$ qubits of advice passed from the previous segment is at most $2^{-2S}$.
Then by taking a union bound over the segments makes the probability that any of them produces $n/\ceil{T/t}\ge nt/(2T)\ge \sortfactor\ k$ correct outputs at most $2^{-2S} \cdot T < 2^{-S}$ as desired.


\end{proof}

\subsection{Almost all amplitude is on spread basis states}
\label{sec:spreadrecord}

In this section we prove the first key lemma, which we restate here for convenience.

\spreadbasislemma*

        \begin{definition}
        Define $\toBin{\Delta}:[m]\to[\Delta]$ by
            $\toBin{\Delta}(v)=j$ iff
            $(j-1)\frac{m}{\Delta}<v\le j\frac{m}{\Delta}$.
        For $S\subseteq[m]$, let
        $\toBins{\Delta}(S)=\{\toBin{\Delta}(v)\mid v\in S\}$. 
        We extend the notation to vectors $z\in ([m]\cup \{\perp\}^n$ by setting $\toBins{\Delta}(z)=\toBins{\Delta}(\set{z_i\mid z_i\ne \perp})$.
    \end{definition}

The following lemma lets us connect spreadness to the analysis of occupied bins.
    
    \begin{lemma}
        \label{lem:cover}
        Fix an integer $\ell\ge k$ and $x_I\in [m]^I$ for $|I|\ge\ell$ such
        that $x_I$ is not
        $(\ell,\Delta,k)$-spread-out. 
        Then there
        exists $A\subseteq I$ with $\abs A=\ell$ such that
        $\abs{\toBins{\Delta}(x_A)}\le 2(k-1)$.
    \end{lemma}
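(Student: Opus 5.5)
By Observation~\ref{obs:nonspread}, non-spreadness gives a set $A\subseteq I$ with $|A|=\ell$ such that no $k$-subset $S\subseteq A$ has $x_S$ $\Delta^{-1}$-widely-spaced. I will show that this same $A$ satisfies $\abs{\toBins{\Delta}(x_A)}\le 2(k-1)$. I argue by contrapositive: assuming $x_A$ occupies at least $2k-1$ distinct bins, I will exhibit $k$ indices of $A$ whose values are pairwise more than $m/\Delta$ apart.

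\textbf{Selecting the bins.} List the occupied bins in increasing order, $j_1<j_2<\cdots<j_{2k-1}$, and keep every other one: $j_1,j_3,\ldots,j_{2k-1}$. This gives $k$ bins. Any two kept bins differ in index by at least $2$, because there is always an occupied bin strictly between consecutive kept ones, so $j_{2a+1}\ge j_{2a-1}+2$.

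\textbf{Choosing indices and checking spacing.} For each kept bin, choose one index $i\in A$ with $\toBin{\Delta}(x_i)$ equal to that bin. Take $u$ in bin $j$ and $v$ in bin $j'\ge j+2$. By the definition of $\toBin{\Delta}$,
\[
u\le j\tfrac{m}{\Delta}\quad\text{and}\quad v>(j'-1)\tfrac{m}{\Delta}\ge (j+1)\tfrac{m}{\Delta},
\]
so $v-u>\tfrac{m}{\Delta}$, with strict inequality as the definition of widely-spaced requires. Thus the $k$ chosen indices form a set $S\subseteq A$ with $x_S$ $\Delta^{-1}$-widely-spaced, contradicting the choice of $A$. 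Hence $A$ occupies at most $2k-2=2(k-1)$ bins.

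The argument is elementary. The only points needing care are that the bin boundaries are half-open, $((j-1)m/\Delta,\,jm/\Delta]$, which is what makes the gap strictly exceed $m/\Delta$, and that $m/\Delta$ need not be an integer, which the inequalities above handle without change.
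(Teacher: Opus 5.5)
Your proof is correct and follows the paper's argument. The paper repeatedly deletes the two smallest occupied bins and takes $\min(B_i)$ for $0\le i<k$, which selects exactly your alternate bins $j_1,j_3,\ldots,j_{2k-1}$, and then contradicts the choice of the witness $A$ in the same way; your explicit check that the half-open bin boundaries make the gap strictly exceed $m/\Delta$ spells out a step the paper leaves implicit.
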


    \begin{proof}
        Let $A$ witness that $x$ is not $(\ell,\Delta,k)$-spread-out, and let
        $B_0=\toBins{\Delta}(x_A)$. 
        Suppose that $\abs{B_0}\ge 2k-1$. 
        For $0<i<k$, let $B_i$
        be the result of removing the two smallest values from $B_{i-1}$. 
        Then
        $\min(B_i)-\min(B_{i-1})>1$ so $\{\min(B_i)\}_{i=0}^{k-1}$ corresponds to $k$ points
        in $x_A$ separated by more than $m/\Delta$. 
        This contradicts that $A$ is a
        witness. 
        So, $\abs{\toBins{\Delta}(x_A)}\le 2(k-1)$.
    \end{proof}

    For $B\subseteq[\Delta]$, and $z\in ([m]\cup\{\perp\})^I$ for $I\subseteq [n]$ define
    \begin{math}
        \counts_B(z)=\big|\{i\in I: \toBin{\Delta}(z_i)\in B\}\big|
    \end{math}.
    Let $\Pi^B_{\ge j}$ and $\Pi^B_{=j}$ project onto recording-query basis states $\ket{z}$ satisfying
    $\counts_B(z)\ge j$ and $\counts_B(z)=j$, respectively.

    \begin{lemma}
        \label{lem:single-query-fixed-progress}
        Let $\ket{\phi}_{\queryreg\phasereg\workreg\functionreg}$ be a state in the recording-query basis.
        Suppose that $\Delta\le m$, and fix $B\subseteq[\Delta]$ and an integer
        $b\ge0$. Then\\
        \centerline{
        $\displaystyle
            \bignorm{\Pi^B_{\ge b+1}\record\ket{\phi}}
            \le
            \bignorm{\Pi^B_{\ge b+1}\ket{\phi}}
            +
            6\sqrt{\frac{\abs B}{\Delta}}
            \bignorm{\Pi^B_{= b}\ket{\phi}}
        $.}
    \end{lemma}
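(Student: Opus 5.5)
We follow the standard one-query progress argument of Hamoudi and Magniez (their Lemma on recording-query progress), adapted to bin counts. First we reduce to basis states with a fixed query index and phase. Since $\record$ acts block-diagonally on $\ket{i,p}$ and $\Pi^B_{\ge b+1}$, $\Pi^B_{=b}$ act only on $\inputreg$, we write $\ket{\phi}=\sum_{i,p,w}\ket{i,p,w}\ket{\phi_{i,p,w}}$. It then suffices to prove, for each fixed $(i,p)$ with $p\ne 0$ (when $p=0$, $\record$ is the identity), a bound of the form
\[
\bignorm{\Pi^B_{\ge b+1}\record_{i,p}\ket{\phi'}}\le \bignorm{\Pi^B_{\ge b+1}\ket{\phi'}}+6\sqrt{|B|/\Delta}\,\bignorm{\Pi^B_{=b}\ket{\phi'}}.
\]
This bound has to be combined over blocks, and the combination is not a simple triangle inequality. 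The right route is to decompose $\record\ket{\phi}=\record\Pi^B_{\ge b+1}\ket{\phi}+\record\Pi^B_{=b}\ket{\phi}+\record\Pi^B_{\le b-1}\ket{\phi}$. Since $\record$ changes only one coordinate, it changes $\counts_B$ by at most one. Hence the last term is killed by $\Pi^B_{\ge b+1}$. For the first term we use $\norm{\Pi^B_{\ge b+1}\record\Pi^B_{\ge b+1}\ket{\phi}}\le\norm{\Pi^B_{\ge b+1}\ket{\phi}}$ because $\record$ is unitary. The whole task is therefore to show
\[
\norm{\Pi^B_{\ge b+1}\record\,\Pi^B_{=b}\ket{\phi}}\le 6\sqrt{|B|/\Delta}\,\norm{\Pi^B_{=b}\ket{\phi}}.
\]
This bound is established on $\Pi^B_{=b}\ket{\phi}$ directly, so orthogonality of the blocks handles the sum over $(i,p,w)$.

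For this we further decompose $\Pi^B_{=b}\ket{\phi}$ by the value of $z_i$ at the queried coordinate $i$, and the count can increase only by changing $z_i$. There are two cases. If $z_i\in[m]$ with $\toBin{\Delta}(z_i)\in B$, the count cannot increase at all, since changing $z_i$ can only keep or lower it. If $z_i=\bot$ or $z_i\in[m]$ with bin outside $B$, the count increases only when the new value lands in a bin of $B$. For $z_i=\bot$, \cref{prop:recording-query} gives amplitudes $\omega_m^{py}/\sqrt m$, and the number of $y$ with $\toBin{\Delta}(y)\in B$ is at most $|B|(m/\Delta+1)\le 2|B|m/\Delta$. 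The resulting projected norm is then at most $\sqrt{2|B|/\Delta}$.

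For $z_i=v\notin\bot$ with bin not in $B$, the amplitudes onto $y\ne v$ are $(1-\omega^{py}-\omega^{pv})/m$, each of magnitude at most $3/m$. This gives norm at most $3\sqrt{2|B|m/\Delta}/m\le 3\sqrt{2|B|/\Delta}$ for $m\ge1$. We then write the part of $\Pi^B_{=b}\ket{\phi}$ with $z_i\notin B$-bins as $\sum_{z_{-i}}\ket{z_{-i}}\otimes\ket{\chi_{z_{-i}}}$. Different $z_{-i}$ remain orthogonal after the query, so it suffices to bound a single-register operator $M=P_B\,\record_p\,(I-P_B)$, where $P_B$ projects onto values in $B$-bins. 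We need $\norm{M}\le 6\sqrt{|B|/\Delta}$.

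The main obstacle is this operator-norm bound, since column bounds alone do not control the norm. We bound it by Cauchy--Schwarz splitting $M$ into the $\bot$ column plus the rank-structured part. The matrix with entries $\frac{1}{m}(1-\omega^{py}-\omega^{pv})$ is a sum of three rank-one matrices $\frac1m\mathbf 1\mathbf 1^\top$, $-\frac1m u\mathbf 1^\top$ and $-\frac1m\mathbf 1 u'^\top$ with unit-modulus vectors. Restricted to rows in $B$-bins (at most $2|B|m/\Delta$ rows) and at most $m$ columns, each has norm at most $\frac1m\sqrt{2|B|m/\Delta}\sqrt m=\sqrt{2|B|/\Delta}$. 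The $\bot$ column contributes another $\sqrt{2|B|/\Delta}$, and the diagonal terms $(1-2/m)\omega^{pv}+1/m$ do not cross into $B$. The total is at most $4\sqrt2\sqrt{|B|/\Delta}\le 6\sqrt{|B|/\Delta}$, which gives the lemma.
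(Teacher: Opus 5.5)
Your proof is correct and is essentially the paper's argument. It uses the same split into the $\Pi^B_{\ge b+1}$ and $\Pi^B_{=b}$ pieces, relying on $\record$ changing $\counts_B$ by at most one, and the same reduction to a single register with $(i,p,w,z_{[n]\setminus i})$ fixed. The one difference is how you bound $\norm{P_B\record_p(\identity-P_B)}$: you use an exact rank-one decomposition, giving $4\sqrt{2}\sqrt{|B|/\Delta}$. The paper instead uses the entrywise bound $3/m$ with Cauchy--Schwarz, which is effectively a Frobenius-norm bound. That simpler bound already works if you keep the $1/\sqrt{m}$ factor you dropped from your column bound, so column bounds do control the norm here.
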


    \begin{proof}
    Since the states $\ket{i,p,w}_{\queryreg\phasereg\workreg}$ for distinct $i,p,w$ are orthogonal and $\record$ does not change the
    values in these registers, it suffices to prove the
    property separately for each fixed value of $i,p,w$.
    Once we fix $i$, \cref{prop:recording-query} implies that
   $\record$ also does not change any register in the recording-query basis other than in the $i$-th input register.
   Therefore it suffices to prove the property when we further fix the value 
   of $z_{[n]\setminus i}\in ([m]\cup\{\perp\})^{[n]\setminus i}$.
   It follows that we can assume without loss of generality that 
   $$\ket{\phi}=\sum_{z_i\in [m]\cup\{\perp\}} \beta_{z_i}
                      \ket{i,p,w}\ket{z}\qquad\textrm{ 
                      with $\sum_{z_i} |\beta_{z_i}|^2=1$.}$$
    For any basis state $\ket{i,p,w}\ket{z}$ in the support of $\ket{\phi}$ as noted above  by \cref{prop:recording-query} any
        $z'$ with
        basis state $\ket{i',p',w'}\ket{z'}$ in the support of
        $\record\ket{i,p,w}\ket{z}$ cannot differ from $\ket{i,p,w}\ket{z}$
        outside of $z'_i$.
        If  $\toBin{\Delta}(z_i)\in B$, $z'_i=\perp$, or $\toBin{\Delta}(z'_i)\notin B$ then $\counts_B(\ket{z'})\le \counts_B(\ket{z})$.
        In the remaining case, in which we have $\counts_B(\ket{z'})=\counts_B(\ket{z})+1$, we must have $\toBin{\Delta}(z_i)\notin B$ and $\toBin{\Delta}(z'_i)\in B$.
        In particular, this implies
        that $\Pi^B_{\ge b+1}\record (\identity-\Pi^B_{\ge b})=0$.
        Therefore 
        \begin{align}
        \norm{\Pi^B_{\ge b+1}\record \ket{\phi}}&\le 
        \norm{\Pi^B_{\ge b+1} \record \Pi^B_{\ge b+1} \ket{\phi}}+ \norm{\Pi^B_{\ge b+1} \record \Pi^B_{=b}\ket{\phi}}\nonumber\\
        &\le \norm{\Pi^B_{\ge b+1} \ket{\phi}}+ \norm{\Pi^B_{\ge b+1} \record \Pi^B_{=b}\ket{\phi}}.\label{eq:counts}
        \end{align}
        Let $S_B=\set{v\in [m]\mid \toBin{\Delta}(v)\in B}$.
        Then $|S_B|< |B|(m/\Delta +1)$.
        From what we have argued,
        $\Pi^B_{\ge b+1}\record\Pi^B_{=b}\ket{i,p,w}\ket{z}=0$
        unless $z_i\notin S_B$ and $\counts_B(\ket{z})=b$.   
Therefore 
$$\Pi^B_{\ge b+1} \record \Pi^B_{=b}\ket{\phi}=\sum_{\substack{
                 z_i \notin S_B\\
        \counts_B(z)=b}}
\beta_{z_i} \Pi^B_{\ge b+1} \record \ket{i,p,w}\ket{z}.$$
Further, for $\ket{i,p,w}\ket{z'}$ to be in the support of $\Pi^B_{\ge b+1}\record\Pi^B_{=b}\ket{\phi}$ we
must have $z'_i\in S_B$.
By \cref{prop:recording-query}, after $\record$ is applied, for each $z'_i\in [m]$ with $z'_i\ne z_i\in [m]$, $\ket{z'_i}$ has amplitude at most $3/m$ in absolute value if $z_i\in [m]$; for $z_i=\perp$ the amplitude of
any $z'_i\in [m]$ is $1/\sqrt{m}$ in
absolute value.
Therefore,
$$\Pi^B_{\ge b+1} \record \Pi^B_{=b}\ket{\phi} 
= \sum_{z'_i\in S_B} \sum_{z_i\notin S_B}\beta_{z_i} \gamma_{z_i,z'_i}\ket{i,p,w}\ket{z'}$$
where $\counts_B(z_{[n]\setminus i})=b$,
$|\gamma_{z_i,z'_i}|\le 3/m$ for $z_i\ne \perp$
and $|\gamma_{\perp,z'_i}|=1/\sqrt{m}$.
It follows that 
{\allowdisplaybreaks
\begin{align*}
    \norm{\Pi^B_{\ge b+1} \record \Pi^B_{=b}\ket{\phi}} 
&= \bignorm{\sum_{z'_i\in S_B} \ \sum_{z_i\notin S_B}\beta_{z_i} \gamma_{z_i,z'_i}\ket{i,p,w}\ket{z'}}\\
&\le \bignorm{\sum_{z'_i\in S_B} \frac{|\beta_{\perp}|}{\sqrt{m}}\ket{i,p,w}\ket{z'}}
+\bignorm{\sum_{z'_i\in S_B}\ \sum_{z_i\in[m]\setminus S_B}\frac{3|\beta_{z_i}|}{m} \ket{i,p,w}\ket{z'}}\\
&= \sqrt{|S_B|}\cdot\frac{|\beta_{\perp}|}{\sqrt{m}}
\ +\ \bignorm{\sum_{z'_i\in S_B}\ \sum_{z_i\in[m]\setminus S_B}\frac{3|\beta_{z_i}|}{m} \ket{i,p,w}\ket{z'}}\\
&\le 3\sqrt{|S_B|}\cdot\frac{|\beta_{\perp}|}{\sqrt{m}}
+3\sqrt{\frac{|S_B|}{m}} \  \big(\sum_{z_i\in[m]\setminus S_B}|\beta_{z_i}|^2\big)^{1/2}\quad\textrm{by Cauchy-Schwarz}\\
&= 3\sqrt{\frac{|S_B|}{m}} \ \bigg(|\beta_{\perp}|+\big(\sum_{z_i\in [m]\setminus S_B}|\beta_{z_i}|^2\big)^{1/2}\bigg)\\
&\le 3\sqrt{\frac{2|S_B|}{m}}\ \norm{\Pi^B_{=b}\ket{\phi}}
\end{align*}
since $\norm{\Pi^B_{=b}\ket{\phi}}= (|\beta_{\perp}|^2+\sum_{z_i\in [m]\setminus S_B}|\beta_{z_i}|^2)^{1/2}$ and
$x+y\le \sqrt{2(x^2+y^2)}$ for
any $x,y\ge 0$.
Combining this with \eqref{eq:counts} and the fact
that $|S_B|/m < |B|(m/\Delta +1)/m\le 2|B|/\Delta$
yields the claim.}
\end{proof}

    \begin{lemma}
        \label{lem:fixed-progress}
        Suppose $\Delta\le m$, and fix $B\subseteq[\Delta]$.
       Let $\ket{\phi_t}$ be the state of a quantum query algorithm after $t$ steps represented in the recording-query basis.
       Then for every
        $0\le b\le t$,
        \\
        \centerline{
            $\displaystyle\norm{\Pi^B_{\ge b}\ket{\phi_t}}
            \le
            \binom{t}{b}
            \left(6\sqrt{\frac{\abs B}{\Delta}}\right)^b$.
        }
    \end{lemma}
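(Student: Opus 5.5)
We prove the bound by induction on $t$, using \cref{lem:single-query-fixed-progress} as the one-step recurrence. Write $a_{t,b}=\norm{\Pi^B_{\ge b}\ket{\phi_t}}$ and $c=6\sqrt{\abs B/\Delta}$; the claim is $a_{t,b}\le\binom{t}{b}c^b$ for $0\le b\le t$. For $b>t$ we also have $a_{t,b}=0$, because by \cref{prop:tquery} the state after $t$ queries is supported on recording-query basis states with at most $t$ non-$\perp$ entries, so $\counts_B(z)\le t$. This is consistent with the convention $\binom{t}{b}=0$ for $b>t$. The case $b=0$ is immediate since $a_{t,0}\le\norm{\ket{\phi_t}}=1=\binom{t}{0}c^0$. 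This also covers the base case $t=0$, where the initial state $\ket{\perp}^{\otimes n}$ has $\counts_B=0$.

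For the inductive step, write $\ket{\phi_{t+1}}=\U_{t+1}\record\ket{\phi_t}$. The algorithmic unitary $\U_{t+1}$ acts only on $\queryreg\phasereg\workreg$, so it commutes with the projectors $\Pi^B_{\ge b}$, which depend only on the database register. Hence $a_{t+1,b}=\norm{\Pi^B_{\ge b}\record\ket{\phi_t}}$. For $b\ge1$ we apply \cref{lem:single-query-fixed-progress} with $b-1$ in place of $b$:
\[
a_{t+1,b}\le \norm{\Pi^B_{\ge b}\ket{\phi_t}}+c\,\norm{\Pi^B_{=b-1}\ket{\phi_t}}\le a_{t,b}+c\,a_{t,b-1},
\]
where the second inequality uses $\Pi^B_{=b-1}\preceq\Pi^B_{\ge b-1}$.

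The induction hypothesis then gives
\[
a_{t+1,b}\le \binom{t}{b}c^b+c\binom{t}{b-1}c^{b-1}=\binom{t+1}{b}c^b
\]
by Pascal's rule. This uses the convention $\binom{t}{b}=0$ when $b=t+1$, which is justified by the vanishing of $a_{t,t+1}$ noted above.

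No step here is a real obstacle, since the substance is entirely in \cref{lem:single-query-fixed-progress}. The two points that need care are the commutation of the algorithm's unitaries with the count projectors, and the fact that \cref{lem:single-query-fixed-progress} applies to an arbitrary, not necessarily normalized, state. Its proof is homogeneous in $\ket\phi$, so it holds verbatim for $\ket{\phi_t}$.
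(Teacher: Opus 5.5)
Your proof is correct and follows the paper's argument essentially verbatim: induction via \cref{lem:single-query-fixed-progress}, commuting the algorithm's unitary past the count projector, bounding $\Pi^B_{=b-1}$ by $\Pi^B_{\ge b-1}$, and closing with Pascal's rule. The boundary case $b=t+1$ is handled as the paper does (``the first term is 0 if $t=b+1$''), and your remark about normalization is harmless but unnecessary, since $\ket{\phi_t}$ is a unit vector.
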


    \begin{proof}
    The proof is by induction on $t$ and $b$ using \cref{lem:single-query-fixed-progress}.
        The case $t\ge b=0$ is trivially true.
        For $t\ge b+1\ge 1$, we have $\ket{\phi_t}=\calU_{t}\record\ket{\phi_{t-1}}$.
        Therefore
        \begin{align*}
        \bignorm{\Pi^B_{\ge b+1}\ket{\phi_t}}
        &=\bignorm{\Pi^B_{\ge b+1}\calU_t\record\ket{\phi_{t-1}}}\\
        &=\bignorm{\Pi^B_{\ge b+1}\record\ket{\phi_{t-1}}}\qquad\textrm{since $\Pi^B_{\ge b+1}$ commutes with $\calU_t$}\\
        &\le
        \bignorm{\Pi^B_{\ge b+1}\ket{\phi_{t-1}}}+
        \big(6\sqrt{\frac{\abs B}{\Delta}}\big)\bignorm{\Pi^B_{= b}\ket{\phi_{t-1}}}\qquad\textrm{by \cref{lem:single-query-fixed-progress}}\\
        &\qquad\textrm{where the first term is 0 if $t=b+1$}\\
        &\le \bignorm{\Pi^B_{\ge b+1}\ket{\phi_{t-1}}}+
        \big(6\sqrt{\frac{\abs B}{\Delta}}\big)\bignorm{\Pi^B_{\ge b}\ket{\phi_{t-1}}}\\
        &\le \binom{t-1}{b+1} \big(6\sqrt{\frac{\abs B}{\Delta}}\big)^{b+1}+
        \binom{t-1}{b} \big(6\sqrt{\frac{\abs B}{\Delta}}\big)^{b+1}\quad\textrm{by inductive assumption}\\
        &=\binom{t}{b+1}\big(6\sqrt{\frac{\abs B}{\Delta}}\big)^{b+1},\end{align*}
       which is what we needed to prove.
    \end{proof}

We obtain \cref{lem:queried-inputs-spread-out} by combining this with \cref{lem:cover}.

    \begin{proof}[Proof of \cref{lem:queried-inputs-spread-out}]
        If $\ell>t$,
        the claim follows because the recording database contains at
        most $t$ non-$\bot$ entries after $t$ queries. Suppose $\ell\le t$.
        Since $\ell\ge k\ge 4$,
        $\Delta \ge 144 t^2/k \ge 144 \ell^2/k \ge 144k\ge 64$.
        In particular, $e\le 2(k-1)\le\Delta$ and $\log\Delta\ge 6$.
Then
{\allowdisplaybreaks
\begin{align*}
    &\norm{(\identity-\pispread^{\ell,\Delta,k})\ket{\phi_t}}^2\\
          &\qquad\le
            \sum_{\substack{B\subseteq[\Delta]\\\abs B=2(k-1)}}
            \norm{\Pi^B_{\ge\ell}\ket{\phi_t}}^2\qquad\textrm{by \cref{lem:cover}}\\
            &\qquad\le
            \sum_{\substack{B\subseteq[\Delta]\\\abs B=2(k-1)}}\binom{t}{\ell}^2
            \left(\frac{72(k-1)}{\Delta}\right)^\ell\qquad\textrm{by \cref{lem:fixed-progress}}\\
            &\qquad=
            \binom{\Delta}{2(k-1)}\binom{t}{\ell}^2
            \left(\frac{72(k-1)}{\Delta}\right)^\ell\\
             &\qquad\le
            \bigg(\frac{e\Delta}{2(k-1)}\bigg)^{2(k-1)}
            \left(\frac{ 72(k-1)(et)^2}{\Delta\ell^2}\right)^\ell\\
            &\qquad\le
            \Delta^{2(k-1)}
            \left(\frac{ e^2 k(k-1)}{2\ell^2}\right)^\ell\qquad\textrm{since $2(k-1)\ge e$ and $t\le \sqrt{k\Delta}/12$}\\
            &\qquad\le
            \Delta^{2(k-1)}
            \left(\frac{e^2 (\log\log \Delta)^2}{32(\log \Delta)^2}\right)^{4k\log\Delta/\log\log\Delta}\qquad\textrm{since $\ell\ge 4k\log \Delta/\log\log \Delta$}\\
             &\qquad\le
            \Delta^{2(k-1)}
            \left(\frac{1}{\log \Delta}\right)^{4k\log\Delta/\log\log\Delta}\qquad\textrm{since $\Delta\ge 64$}\\
            &\qquad=\Delta^{2(k-1)}\Delta^{-4k}\le \Delta^{-2k}.\qedhere
\end{align*}}
\end{proof}

\subsection{Non-spread outputs are unlikely to be correct on spread states}
\label{sec:nonspread}
  In this section we prove the second key lemma, which we restate here for convenience:

  \nonspreadoutputs*

    \begin{proof}
        By \cref{obs:nonspread}, there must be some $A\subseteq \dom(\tau)$ of size $2\ell$ such that every subset $S\subseteq A$
        of size $k$ contains distinct $i,j$ with $|\tau_v(i)-\tau_v(j)|\le\frac{m}{\Delta}$.
        We can assume without loss of generality that $\dom(\tau)=A$ and
        hence $\abs{\tau}=2\ell$ since 
        restricting $\tau$ to that witness only enlarges its correctness
        projector.

        We proceed similarly to \cite{HM23}. 
        We define a new family of projectors on recording-query basis states $\Pi^{a,b}_{\tau_v}$ for all $0\le a+b\le 2\ell$ such that $\Pi^{a,b}_{\tau_v}=\sum_z \op{z}{z}$ for
        $z \in \p{[m]\cup\{\perp\}}^n$ such that:
        \begin{enumerate}
            \item \label{item:cond-a} There are $a$ locations $i\in \dom(\tau)$ where $z_i = \perp$.
            \item \label{item:cond-b} There are $b$ locations $i\in \dom(\tau)$ where
                 $z_i\neq\perp$ and $z_i\ne \tau_v(i)$.
        \end{enumerate}

        \begin{claim} 
        \label{claim:abz}
        For $z\in ([m]\cup \{\perp\})^n$,
        \begin{math}
            \norm{\piout \calS \Pi^{a,b}_{\tau_v}
                \ket{z}}
            \leq \big(\frac{1}{\sqrt{m}}\big)^a
            \big(\frac{1}{m}\big)^b
        \end{math}.
        \end{claim}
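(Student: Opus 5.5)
The plan is to drop the rank constraints inside $\piout$ and keep only the value constraints. That leaves a projector which is a tensor product over coordinates. Since $\calS=\identity\otimes\mathsf S^{\otimes n}$ maps the recording-query basis state $\ket z$ to the product state $\bigotimes_i \mathsf S\ket{z_i}$, the norm then factorizes, and the claim reduces to a one-coordinate computation of the amplitude of $\ket{\tau_v(i)}$ in $\mathsf S\ket{z_i}$.

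\textbf{Step 1 (enlarge the projector).} If $\sortproblem(x)\|\tau$, then $x_i=\tau_v(i)$ for every $i\in\dom(\tau)$. Hence $\piout\preceq \Pi'$, where
\[
\Pi'=\bigotimes_{i\in\dom(\tau)}\ket{\tau_v(i)}\!\bra{\tau_v(i)}\otimes \Pi_{[m]}^{\otimes([n]\setminus\dom(\tau))},
\]
and $\Pi_{[m]}$ is the projector onto $\spn\{\ket y: y\in[m]\}$, which excludes $\ket\perp$. This is legitimate because $\piout$ lives on genuine inputs in $[m]^n$. Since $\piout$ and $\Pi'$ are commuting projectors (both diagonal in the standard basis), we get $\norm{\piout\calS\ket z}\le\norm{\Pi'\calS\ket z}$. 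The norm of $\Pi'\calS\ket z$ factorizes as a product of one-coordinate norms.

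\textbf{Step 2 (single-coordinate amplitudes).} Write $\ket y=\frac1{\sqrt m}\ket{\hat0}+\big(\ket y-\frac1{\sqrt m}\ket{\hat0}\big)$. The second summand is orthogonal to both $\ket{\hat0}$ and $\ket\perp$, so $\mathsf S$ fixes it, giving
\[
\mathsf S\ket y=\ket y-\tfrac1{\sqrt m}\ket{\hat0}+\tfrac1{\sqrt m}\ket\perp,
\qquad
\mathsf S\ket\perp=\ket{\hat0}.
\]
We now read off the amplitude of $\ket{\tau_v(i)}$ for each $i\in\dom(\tau)$:
\begin{itemize}
\item If $z_i=\perp$, the amplitude is $\braket{\tau_v(i)|\hat0}=1/\sqrt m$ in absolute value.
\item If $z_i=y\ne\tau_v(i)$, the amplitude is $0-\frac1{\sqrt m}\cdot\frac1{\sqrt m}=-1/m$.
\item If $z_i=\tau_v(i)$, the amplitude is $1-1/m$, which is at most $1$ in absolute value.
\end{itemize}
For $i\notin\dom(\tau)$, the factor $\norm{\Pi_{[m]}\mathsf S\ket{z_i}}$ is at most $1$, because $\mathsf S$ is unitary.

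\textbf{Step 3 (multiply).} Take $\ket z$ in the range of $\Pi^{a,b}_{\tau_v}$. There are exactly $a$ coordinates of the first type and $b$ of the second type, and every remaining factor is at most $1$. The product is therefore at most $(1/\sqrt m)^a(1/m)^b$, which is the claim.

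The only delicate point is Step 2: computing $\mathsf S\ket y$ correctly, and noting that $\piout$ annihilates the $\ket\perp$ component. Everything else is the product structure of $\mathsf S^{\otimes n}$ together with the monotonicity $\piout\preceq\Pi'$.
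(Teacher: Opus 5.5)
Your proof is correct and takes the same route as the paper's one-line argument. Both compute $\mathsf S$ on a single coordinate and read off the amplitude of $\ket{\tau_v(i)}$: it is $1/\sqrt{m}$ when $z_i=\perp$ and $1/m$ when $z_i$ is a wrong value. You additionally spell out two steps the paper leaves implicit here: relaxing $\piout$ to the product projector that fixes the values on $\dom(\tau)$ (the paper does this only in the next claim, with $\Pi_{\tau_v}$), and factorizing the norm over the tensor product.
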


\begin{proof}[Proof of Claim]
        This follows from the fact that $\mathsf{S}$ maps $\ket{\perp}$ to
        $\sum_{y\in [m]} \frac{1}{\sqrt{m}} \ket{y}$, which yields the $(\frac{1}{\sqrt{m}})^a$ factor, and maps any
        $\ket{y}$ for
        $y \neq \perp$ to $\frac{1}{\sqrt{m}} \ket{\perp} +
            (1-\frac{1}{m}) \ket{y} - \sum_{y'\in [m]\setminus\{y\}} \frac{1}{m}
            \ket{y'}$, which yields the $(\frac{1}{m})^b$ factor.
\end{proof}

    \begin{claim}
    \label{claim:ab}
    For any $\ket{\phi'}$ in the recording query basis,\\
       \centerline{\begin{math}
            \norm{\piout \calS \Pi^{a,b}_{\tau_v}
                \ket{\phi'}}^2
            \leq \binom{2\ell}{a}\binom{2\ell-a}{b}m^{-a-b}
            \norm{\ket{\phi'}}^2.
        \end{math}}
    \end{claim}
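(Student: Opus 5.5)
The plan is to decompose $\Pi^{a,b}_{\tau_v}$ according to \emph{which} coordinates of $\dom(\tau)$ are $\perp$ and which are wrong. Within each piece we bound an operator norm, rather than the per-basis-state amplitude of \cref{claim:abz}, because that bound alone does not control superpositions. We then recombine the pieces by Cauchy--Schwarz. Concretely, for disjoint $P,Q\subseteq\dom(\tau)$ with $|P|=a$ and $|Q|=b$, let $\Pi_{P,Q}$ project onto recording-query basis states $\ket z$ with $z_i=\perp$ for $i\in P$, $z_i\notin\{\perp,\tau_v(i)\}$ for $i\in Q$, and $z_i=\tau_v(i)$ for $i\in\dom(\tau)\setminus(P\cup Q)$. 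These projectors are mutually orthogonal and sum to $\Pi^{a,b}_{\tau_v}$, and there are exactly $\binom{2\ell}{a}\binom{2\ell-a}{b}$ of them (recall $|\dom(\tau)|=2\ell$ after the reduction at the start of the proof). By the triangle inequality and Cauchy--Schwarz,
\[
\norm{\piout\calS\Pi^{a,b}_{\tau_v}\ket{\phi'}}^2\le \binom{2\ell}{a}\binom{2\ell-a}{b}\sum_{P,Q}\norm{\piout\calS\Pi_{P,Q}\ket{\phi'}}^2 .
\]
Since $\sum_{P,Q}\norm{\Pi_{P,Q}\ket{\phi'}}^2\le\norm{\ket{\phi'}}^2$, it suffices to prove the operator bound $\norm{\piout\calS\Pi_{P,Q}}\le m^{-(a+b)/2}$ for each fixed $(P,Q)$.

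The obstacle is that $\piout=\piout^\sortproblem$ is not a product projector: it also encodes the rank conditions $\tau_r$, which depend on all coordinates. We sidestep this by domination. If $\sortproblem(x)\|\tau$, then in particular $x_i=\tau_v(i)$ for every $i\in\dom(\tau)$. Hence $\piout\preceq\Pi^{\mathrm{val}}:=\bigotimes_{i\in\dom(\tau)}\op{\tau_v(i)}{\tau_v(i)}\otimes\identity$, and since both are diagonal projectors in the standard basis, $\norm{\piout\calS\Pi_{P,Q}\ket\phi}\le\norm{\Pi^{\mathrm{val}}\calS\Pi_{P,Q}\ket\phi}$ for every $\ket\phi$. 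The registers $\queryreg\phasereg\workreg$ are untouched throughout, so we may ignore them.

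Now $\Pi^{\mathrm{val}}\calS\Pi_{P,Q}$ is a tensor product over coordinates, because $\calS=\identity\otimes\mathsf S^{\otimes n}$ and both projectors factor coordinatewise. Its norm is therefore the product of the single-coordinate norms, which we compute using the description of $\mathsf S$ in \cref{claim:abz}:
\begin{itemize}
\item For $i\notin\dom(\tau)$ the factor is the unitary $\mathsf S$, of norm $1$.
\item For $i\in\dom(\tau)\setminus(P\cup Q)$ the factor is $\op{\tau_v(i)}{\tau_v(i)}\mathsf S\op{\tau_v(i)}{\tau_v(i)}$, of norm $1-1/m\le1$.
\item For $i\in P$ the factor is $\op{\tau_v(i)}{\tau_v(i)}\mathsf S\op{\perp}{\perp}$, of norm $|\bra{\tau_v(i)}\mathsf S\ket\perp|=1/\sqrt m$.
\item For $i\in Q$ the factor is $\op{\tau_v(i)}{\tau_v(i)}\,\mathsf S\,\Pi_{[m]\setminus\{\tau_v(i)\}}$. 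This is a rank-one map whose row has $m-1$ entries $\bra{\tau_v(i)}\mathsf S\ket y=-1/m$, so its norm is $\sqrt{m-1}/m\le 1/\sqrt m$.
\end{itemize}
Multiplying gives $\norm{\Pi^{\mathrm{val}}\calS\Pi_{P,Q}}\le m^{-a/2}m^{-b/2}$. Squaring and plugging into the Cauchy--Schwarz bound yields the claim.

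The only delicate points are the domination step replacing $\piout$ by the product projector $\Pi^{\mathrm{val}}$, and treating each $Q$-coordinate as a whole superposition over wrong values. For the latter we must use the norm $\sqrt{m-1}/m$ of the whole row rather than the per-entry amplitude $1/m$; the row norm is what makes the bound $m^{-b}$ (squared) rather than $m^{-2b}$.
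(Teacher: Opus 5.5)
Your proof is correct and is essentially the paper's argument: dominate $\piout$ by the value projector $\Pi_{\tau_v}$, then combine Cauchy--Schwarz with the single-coordinate amplitudes of $\mathsf S$. The only difference is bookkeeping. The paper applies Cauchy--Schwarz over all $\binom{2\ell}{a}\binom{2\ell-a}{b}(m-1)^b$ basis configurations on $\dom(\tau)$, using orthogonality outside $\dom(\tau)$ together with the per-state bound of \cref{claim:abz}; you apply it only over the $\binom{2\ell}{a}\binom{2\ell-a}{b}$ patterns $(P,Q)$ and absorb the $(m-1)^b$ count into the exact row norm $\sqrt{m-1}/m$ of each $Q$-coordinate factor, which gives the same bound.
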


    \begin{proof}[Proof of Claim]
        Fix $\Pi_{\tau_v} \succeq \piout$ to be the projector onto inputs where $x_i = \tau_v(i)$ for every $i \in \dom(\tau)$.
        Then $\norm{\piout \calS \Pi^{a,b}_{\tau_v} \ket{\phi'}}^2 \leq \norm{\Pi_{\tau_v} \calS \Pi^{a,b}_{\tau_v} \ket{\phi'}}^2$, 
        so we will now prove the desired inequality for $\norm{\Pi_{\tau_v} \calS \Pi^{a,b}_{\tau_v}\ket{\phi'}}^2$.

        All recording query basis states with distinct values outside $\dom(\tau)$ remain orthogonal after applying $\Pi_{\tau_v} \calS \Pi^{a,b}_{\tau_v}$.
        Let $M = [m] \cup \{\perp\}$.
        If $\ket{\phi'} = \sum_{z \in M} \alpha_z \ket{z}$ then:
        \begin{align*}
            \norm{\Pi_{\tau_v} \calS \Pi^{a,b}_{\tau_v}\ket{\phi'}}^2 & = \sum_{z_{[n] \setminus \dom(\tau)} \in M^{[n] \setminus \dom(\tau)}} \bignorm{\sum_{z_{\dom(\tau)} \in M^{\dom(\tau)}} \alpha_z \Pi_{\tau_v} \calS \Pi^{a,b}_{\tau_v} \ket{z}}^2\\
            &\leq \sum_{z_{[n] \setminus \dom(\tau)} \in M^{[n] \setminus \dom(\tau)}} \big(\sum_{z_{\dom(\tau)} \in M^{\dom(\tau)}} \abs{\alpha_z}^2\big) \sum_{z_{\dom(\tau)} \in M^{\dom(\tau)}} \norm{\Pi_{\tau_v} \calS  \Pi^{a,b}_{\tau_v} \ket{z}}^2\\
            &= \norm{\ket{\phi'}}^2 \sum_{z_{\dom(\tau)}\in M^{\dom(\tau)}} \norm{\Pi_{\tau_v} \calS  \Pi^{a,b}_{\tau_v} \ket{z}}^2.
        \end{align*}
        Observe that the terms in the sum are zero unless $z$ satisfies \ref{item:cond-a} and \ref{item:cond-b} for the choice of $a,b$ and $\tau_v$.
        There are exactly $\binom{2\ell}{a} \binom{2\ell-a}{b}(m-1)^b \leq \binom{2\ell}{a} \binom{2\ell-a}{b}m^b$ such choices for $z$.
        Thus, by combining the above with \cref{claim:abz}, we have:
        \begin{align*}
            \norm{\piout \calS \Pi^{a,b}_{\tau_v}
                \ket{\phi'}}^2 &\leq
            \norm{\ket{\phi'}}^2 \sum_{z_{\dom(\tau)}\in M^{\dom(\tau)}} \norm{\Pi_{\tau_v} \calS  \Pi^{a,b}_{\tau_v} \ket{z}}^2\\
            &\leq \norm{\ket{\phi'}}^2 \cdot  \binom{2\ell}{a} \binom{2\ell-a}{b}m^b \cdot \frac{1}{m^a}
            \frac{1}{m^{2b}}\\
            &= \binom{2\ell}{a} \binom{2\ell-a}{b} m^{-a-b}\norm{\ket{\phi'}}^2.\qedhere
        \end{align*}
    \end{proof}

      \begin{claim}
    \label{claim:abnotsmall}
       For $a+b\le \ell$, 
       $\Pi^{a,b}_{\tau_v}\pispread^{\ell,\Delta,k}\ket{\phi}=0$.
    \end{claim}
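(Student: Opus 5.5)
We need to show that if $a+b\le \ell$, then every recording-query basis state $\ket{z}$ in the range of $\pispread^{\ell,\Delta,k}$ is annihilated by $\Pi^{a,b}_{\tau_v}$. Both operators are diagonal in the recording-query basis, so it suffices to prove the following: no $z$ is simultaneously $(\ell,\Delta,k)$-spread-out and satisfies conditions \ref{item:cond-a} and \ref{item:cond-b} with $a+b\le\ell$. The plan is to fix such a $z$ and derive a contradiction.

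First, by the reduction at the start of the proof, $\abs{\dom(\tau)}=2\ell$. Since only $a+b\le \ell$ indices of $\dom(\tau)$ have $z_i=\perp$ or $z_i\neq\tau_v(i)$, the set
\[
  A'=\set{i\in\dom(\tau)\mid z_i=\tau_v(i)}
\]
has size $2\ell-a-b\ge \ell$. Choose any $A\subseteq A'$ with $\abs{A}=\ell$.

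Every $i\in A$ has $z_i\ne\perp$, so $A$ lies inside the set $I=\set{i\mid z_i\neq\perp}$ of queried coordinates. Because $z$ is $(\ell,\Delta,k)$-spread-out, there is some $S\subseteq A$ of size $k$ such that $z_S$ is $\Delta^{-1}$-widely-spaced. On $A$ we have $z_i=\tau_v(i)$, so $(\tau_v(i))_{i\in S}$ is also $\Delta^{-1}$-widely-spaced.

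This contradicts the choice of $\dom(\tau)$. By \cref{obs:nonspread}, every $k$-subset of $\dom(\tau)$ contains a pair $i,j$ with $\abs{\tau_v(i)-\tau_v(j)}\le m/\Delta$, and $S$ is such a $k$-subset with no close pair. Hence no such $z$ exists, and $\Pi^{a,b}_{\tau_v}\pispread^{\ell,\Delta,k}=0$, so in particular $\Pi^{a,b}_{\tau_v}\pispread^{\ell,\Delta,k}\ket{\phi}=0$.

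The only delicate point is bookkeeping. The spread-out definition quantifies over subsets of the non-$\perp$ coordinates of $z$, and the size-$\ell$ subset must come from coordinates where $z$ agrees with $\tau_v$. The count $2\ell-(a+b)\ge\ell$ guarantees exactly this, and it is why $\tau$ was reduced to size $2\ell$. The case $k\le\ell$ is needed so that $S\subseteq A$ of size $k$ makes sense, and it holds by hypothesis.
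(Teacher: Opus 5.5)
Your proof is correct and follows essentially the same argument as the paper. The $2\ell-(a+b)\ge\ell$ coordinates of $\dom(\tau)$ where $z_i=\tau_v(i)$ are non-$\perp$ coordinates of $z$, so spread-outness of $z$ gives a $\Delta^{-1}$-widely-spaced $k$-subset of $\tau_v$, which contradicts $\dom(\tau)$ being the non-spreadness witness; your explicit choice of a size-$\ell$ subset $A$ of the non-$\perp$ set only spells out a step the paper leaves implicit.
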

    
    \begin{proof}[Proof of Claim]
    Let $\ket{z}$ be a basis state in the support of $\pispread^{\ell,\Delta,k}\ket{\phi}$ that is not mapped to 0 by $\Pi^{a,b}_{\tau_v}$.
    Since only $a+b$ elements
    of $z$ differ from $\tau$, there are at least $2\ell-(a+b)\ge \ell$ coordinates $i\in\dom(\tau)$
    such that $z_i=\tau_v(i)$.  
    Since $\ket{z}$ is $(\ell,\Delta,k)$-spread-out, there is a subset
    $S$ of $k$ of these $\ge \ell$ coordinates such that $|\tau_v(i)-\tau_v(j)|=|z_i-z_j|>m/\Delta$ for all $i\ne j\in S$. 
       This contradicts the non-spreadness property of $\tau$.
    \end{proof} 

We now have the pieces we need to prove of the lemma.
 Since $\Pi^{a,b}_{\tau_v}$ and $\Pi^{a',b'}_{\tau_v}$ are orthogonal for  $(a',b')\not=(a,b)$ and $\sum_{0\le a+b\le 2\ell}\Pi^{a,b}_{\tau_v}=\identity$, we have
{\allowdisplaybreaks
    \begin{align*}
            \bignorm{\piout\calS\pispread^{\ell,\Delta,k}\ket{\phi}} &=\bignorm{\piout\calS
                \big(\sum_{0\le a+b\le 2\ell}\Pi^{a,b}_{\tau_v}\big)
                \pispread^{\ell,\Delta,k}\ket{\phi}}\\
            &=\bignorm{\sum_{\ell+1\le a+b\le 2\ell}\piout\calS
                \Pi^{a,b}_{\tau_v}
                \pispread^{\ell,\Delta,k}\ket{\phi}}\qquad\textrm{by \cref{claim:abnotsmall}}\\
            &\leq \sum_{\ell+1\le a+b\le 2\ell}\bignorm{\piout\calS
                \Pi^{a,b}_{\tau_v}
                \pispread^{\ell,\Delta,k}\ket{\phi}}\qquad\textrm{by the triangle inequality}\\
            &\le \sum_{\ell+1\le a+b\le 2\ell} \sqrt{\binom{2\ell}{a}\binom{2\ell-a}{b}m^{-a-b}}\qquad\textrm{by \cref{claim:ab}}\ \\
            &=\sum_{s=\ell+1}^{2\ell}
            m^{-s/2}\sqrt{\binom{2\ell}{s}}\sum_{a=0}^s\sqrt{\binom{s}{a}}\\
            &\leq \sum_{s=\ell+1}^{2\ell}
            \sqrt{m^{-s}\binom{2\ell}{s} (s+1)2^s} \qquad \textrm{by Cauchy Schwarz}\\
            &\leq \sum_{s=\ell+1}^{2\ell} 2^{\ell+1/2} (e/m)^{s/2} \qquad \textrm{since $(s+1)2^s \leq 2\cdot e^s$ and $\binom{2\ell}{s} \leq 2^{2\ell}$}\\
            &\leq \sum_{s=\ell+1}^{2\ell} 2^{\ell + 1/2} (1/8)^{s/2} \qquad \textrm{since $m \geq 8e$}\\
            &\leq \sum_{s=\ell+1}^{\infty} 2^{\ell + 1/2} (1/8)^{s/2}\\
            &=\frac{2^{\ell + 2}}{(2\sqrt{2})^{\ell + 1}(\sqrt{8}-1)}\\
            &= \frac{\sqrt{2}}{\sqrt{8}-1} 2^{-\ell/2} \leq 2^{-\ell/2}
        \end{align*}
        as desired.
        }
    \end{proof}

\subsection{Noise rarely preserves widely-spaced partial outputs}
\label{sec:noisysort}

\noisyspread*

Our approach to proving this upper bound for a given $\tau$ and $x$ with $\sortproblem(x)\|\tau$ will be to relate it to an anti-concentration property of
Poisson multinomial distributions\footnote{Borodin and Cook used an anti-concentration property of an ordinary multinomial distribution in their argument but that is not quite sufficient here.}.
We give the definition of such distributions with a
notation that will be convenient.
(For some background and many references on Poisson
multinomial distributions see, for example, \cite{DBLP:conf/stoc/DiakonikolasKS16}.)

    \begin{definition}
        Consider $n$ independent trials in which every trial $i\in [n]$ has a
        probability distribution $P_i$ over $[d]$. During trial $i$, event
        $j$ occurs with probability $P_{i,j}$.
        Let $\gamma= \min_{i\in [n], j\in [d]}P_{i,j}$.
        Let
        $C\in\mathbb N^d$ be the count vector, where $c_j$ is the
        number of trials in which event $j$ occurs. The distribution of $C$
        is an $(n,d)$-\emph{Poisson multinomial distribution with minimum probability $\gamma$}.
        \end{definition}

We will use the following anti-concentration bound
on Poisson multinomial distributions.
It follows from an anti-concentration theorem for the sum of $n$ independent Bernoulli random variables~\cite{BCV16}, which is called the Poisson binomial distribution and corresponds to the case
that $d=2$.
We include a proof of a generalization of this bound to $d>2$
in \cref{sec:poisson}.

\begin{restatable}{proposition}{poissonmax}
        \label{lem:poisson-multinomial-point}
        Let $C$ be given by an
        $(n,d)$-Poisson multinomial distribution
        with minimum probability $\gamma$.
        Then every $c\in\mathbb N^d$ satisfies\\
        \centerline{$\displaystyle
            \Pr\s{C=c}
            \le
           \bigg(\frac{e}{2\gamma n}\bigg)^{(d-1)/2}.
        $}
    \end{restatable}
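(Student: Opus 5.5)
We prove the bound by induction on $d$. The base case $d=2$ is the anti-concentration theorem of \cite{BCV16} for the Poisson binomial distribution. Each inductive step splits one event out of a merged pair of events.

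\textbf{Base case $d=2$.} Here $C$ is determined by $C_2=\sum_i B_i$, where the $B_i$ are independent Bernoulli variables with parameters $p_i=P_{i,2}$. Since $p_i,1-p_i\ge\gamma$, we get $p_i(1-p_i)\ge\gamma(1-\gamma)\ge\gamma/2$, so $\sigma^2=\sum_i p_i(1-p_i)\ge\gamma n/2$. We will use the form of the \cite{BCV16} bound that controls the maximal point probability by $\sqrt{e/(4\sigma^2)}$. This gives $\Pr[C=c]\le\sqrt{e/(2\gamma n)}$, which is exactly the claimed bound for $d=2$. The first thing to check is that the precise form of \cite{BCV16} we cite yields the constant $e/4$ (or better). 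If it only gives the bound up to a constant, the whole statement has to be routed through the Fourier argument in the last paragraph instead.

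\textbf{Inductive step.} Assume the bound for $d-1$. Merge events $d-1$ and $d$ into a single event $\star$. The merged count vector $C'=(c_1,\dots,c_{d-2},c_{d-1}+c_d)$ is an $(n,d-1)$-Poisson multinomial distribution, and its minimum probability is still at least $\gamma$ because $P_{i,d-1}+P_{i,d}\ge\gamma$. Let $M\subseteq[n]$ be the random set of trials whose outcome lies in $\{d-1,d\}$. Conditioned on the full outcome pattern outside $\{d-1,d\}$ (which in particular determines $M$), the split of $M$ into $d-1$ versus $d$ is a Poisson binomial over $M$ with parameters $q_i=P_{i,d}/(P_{i,d-1}+P_{i,d})$. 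Therefore
\[
\Pr[C=c]=\sum_{M}\Pr[\text{pattern},\,M]\cdot\Pr\bigl[\textstyle\sum_{i\in M}\mathrm{Bern}(q_i)=c_d\bigr],
\]
where the sum over patterns is restricted to those consistent with $C'=c'$. Applying the base-case bound to the inner factor and summing leaves $\Pr[C'=c']$ times an average of $\sqrt{e/(4\sigma_M^2)}$ with $\sigma_M^2=\sum_{i\in M}q_i(1-q_i)$. The expected variance is
\[
\E[\sigma_M^2]=\sum_i \frac{P_{i,d-1}P_{i,d}}{P_{i,d-1}+P_{i,d}}\ge\frac{\gamma n}{2},
\]
so a concentration bound on $\sigma_M^2$ (a sum of independent terms in $[0,1/4]$), applied conditionally on $C'=c'$, should give a factor of roughly $\sqrt{e/(2\gamma n)}$. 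Combined with the inductive hypothesis $(e/(2\gamma n))^{(d-2)/2}$ for $C'$, this produces the claimed exponent $(d-1)/2$.

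\textbf{Main obstacle.} The conditional average goes the wrong way under Jensen: $\E[\sigma_M^{-1}]\ge(\E\sigma_M^2)^{-1/2}$. Moreover, the conditioning on $C'=c'$ can bias $M$, and any additive failure probability or loss in the constant would break the exact form $(e/(2\gamma n))^{(d-1)/2}$.

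\textbf{Fallback via characteristic functions.} If the inductive step cannot be made tight, I will argue directly by Fourier inversion on $\mathbb{Z}^{d-1}$, dropping the last coordinate since $\sum_j c_j=n$:
\[
\Pr[C=c]\le(2\pi)^{-(d-1)}\int_{[-\pi,\pi]^{d-1}}\prod_i\abs{\phi_i(\theta)}\,d\theta .
\]
The goal is a per-trial estimate $\abs{\phi_i(\theta)}\le\exp\bigl(-\gamma\sum_{j<d}(1-\cos\theta_j)\bigr)$, with $\theta_d=0$. Given that estimate, the integral factorizes into $d-1$ one-dimensional integrals. Each equals $e^{-\gamma n}I_0(\gamma n)\le\sqrt{e/(2\gamma n)}$, which is precisely the $d=2$ computation. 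The crux is then proving that per-trial estimate from the identity
\[
1-\abs{\phi_i}^2=\sum_{j,j'}P_{i,j}P_{i,j'}\bigl(1-\cos(\theta_j-\theta_{j'})\bigr).
\]
The naive bound gets only $\gamma^2$ in place of $2\gamma$. My first attempt would be to pair each $j$ with the heaviest event as reference (which has probability at least $1/d$), and then re-choose the free coordinate system so that this heaviest event plays the role of $\theta_d=0$.
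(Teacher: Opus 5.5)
Your base case is fine (the Baillon--Cominetti--Vaisman constant $\eta\approx 0.469$ even gives $(2\gamma n)^{-1/2}$ with no $e$). The inductive step, however, does not close, and the obstacle is not a technicality. Conditioned on $C'=c'$, the size $|M|=c_{d-1}+c_d$ is fixed. Since $q_i\ge P_{i,d}\ge\gamma$ and $1-q_i\ge P_{i,d-1}\ge\gamma$, you get deterministically $\gamma(c_{d-1}+c_d)/2\le\sigma_M^2\le(c_{d-1}+c_d)/4$. The unconditional mean of $\sigma_M^2$ and any concentration argument are therefore beside the point. The splitting factor is of order $(\gamma(c_{d-1}+c_d))^{-1/2}$, not $(\gamma n)^{-1/2}$; if $c_{d-1}=c_d=0$ it equals $1$. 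So an induction whose hypothesis is the uniform bound $(e/(2\gamma n))^{(d-2)/2}$ cannot produce the exponent $(d-1)/2$. Even merging the two events with the largest counts, so that $c_{d-1}+c_d\ge 2n/d$, still loses a factor of about $\sqrt{d/(2e)}$ per step, which is fatal for large $d$.

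The missing idea, which is the paper's argument, is to relabel so that $c_1\le\cdots\le c_d$ and use the chain rule $\Pr[C=c]=\prod_{j=1}^{d-1}\Pr[C_j=c_j\mid C_1=c_1,\ldots,C_{j-1}=c_{j-1}]$. Given which trials landed in events $1,\ldots,j-1$, the remaining $n_j=c_j+\cdots+c_d$ trials are independent. Each lands on $j$ with probability $P_{i,j}/\sum_{j'\ge j}P_{i,j'}\in[\gamma,1-\gamma]$, so each factor is at most $(2\gamma n_j)^{-1/2}$. Sorting guarantees $n_j\ge(d-j+1)n/d$, and $\prod_{j=1}^{d-1}\sqrt{d/(d-j+1)}=\sqrt{d^{d-1}/d!}\le e^{(d-1)/2}$. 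In your language: always merge the two largest counts, and carry the refined bound $\prod_j(2\gamma n_j)^{-1/2}$ as the inductive hypothesis. The $e$ in the statement is exactly the budget for this product, not for the base-case constant.

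The fallback fails at its crux, because the per-trial estimate is false. Already for $d=2$ and $P_i=(1/2,1/2)$ you have $\abs{\phi_i(\theta)}=\abs{\cos(\theta_1/2)}\approx 1-\theta_1^2/8$, while $\exp(-\gamma(1-\cos\theta_1))\approx 1-\theta_1^2/4$. More seriously, take $P_i$ uniform on $[d]$ (so $\gamma=1/d$) and $\theta=(a,\ldots,a,0)$. Then $\abs{\phi_i(\theta)}^2=1-\frac{2(d-1)}{d^2}(1-\cos a)$, so any bound of the diagonal form $\exp\bigl(-c\sum_{j<d}(1-\cos\theta_j)\bigr)$ forces $c\le 1/d^2=\gamma^2$. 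Your ``naive'' constant is thus essentially optimal, and pairing with the heaviest event cannot help. Every event is heaviest here, and in any case the reference coordinate must be the same for all trials for the integral to factorize. With $c=\gamma^2$, the factorized integral for uniform trials and large $n$ is about $(d^2/(2\pi n))^{(d-1)/2}$. This exceeds the claimed $(ed/(2n))^{(d-1)/2}$ once $d\ge 9$. A Fourier proof would have to work with the full covariance form (its determinant) rather than a coordinatewise comparison, which is much more delicate than the sorted chain-rule argument.
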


    \begin{proof}[Proof of \cref{lem:stable-tie-noise-return}]
    Let $I\subseteq \dom(\tau)$ be the set of size $k$ such that $\tau_v(I)$ is $\Delta^{-1}$-widely-spaced.
    The general idea of the argument is that 
    the partial output $\tau$ exactly determines the number of elements of $x$ that land in each gap
    of the widely-spaced values in $I$.
    In order for $\tau$  to remain as a
    correct output on input $\noisyx \sim \noise_p(x)$, those numbers must not change.  We show that 
    this is unlikely because of the widely-spaced property.
    
    For $i\in I$,
    write $y_i=\tau_v(i)$ and $r_i=\tau_r(i)$.
    That is, $\tau\| \sortproblem(x)$ iff 
    value $y_i$ has rank $r_i$ in $x$ for each $i\in I$.
    Using the same total ordering as in the sorting problem, we say that $(v_i,i) < (v_j,j)$ exactly when $v_i < v_j$ or both $v_i = v_j$ and $i < j$.
    Since $\tau_v(I)$ is $\Delta^{-1}$-widely-spaced, we can write $I=\{i_1,\ldots,i_k\}$ such that for each $j\in [k-1]$,
    $$y_{i_j}+\ceil{m/\Delta}\le y_{i_{j+1}}$$. 
    In particular, the correctness of the ranks implies
    that for each $j\in [k-1]$,
    there are exactly $r_{i_{j+1}}-r_{i_j}$ values
    of $i\in [n]$ such that $(y_{i_j},i_j)\le (x_i,i)<(y_{i_{j+1}},i_{j+1})$.

    In order for $\tau$ to be consistent with $\noisyx$, for each $j\in [k-2]$ there must also be exactly 
    $r_{i_{j+1}}-r_{i_j}$ values
    of $i\in [n]$ such that $(y_{i_j},i_j)\le (\noisyx_i,i)<(y_{i_{j+1}},i_{j+1})$.
    
    This does not fully characterize consistency; to do so we would also need some other
    conditions:
    \begin{itemize}
        \item For $j\in [k]$, each index $i_j\in I$
        must also have its claimed value -- that is,
        $\noisyx_{i_j}=y_{i_j}=x_{i_j}$
        \item There must be $r_{i_1}-1$ elements
        of $\noisyx$ below $(y_{i_1},i_1)$ and $n-r_{i_k}$ elements above $(y_{i_k},i_k)$
     \end{itemize}
     but we have no guarantee about how large these end gaps are and resampling any element of $I$ is
     relatively low probability, so these conditions would only complicate the analysis without significant benefit.

    For every index $i\in[n]$ and every $j\in[k-2]$ the set of values at position $i$ that
    could be in the $j$-th gap, $G_{i,j}$, by
        \[
            G_{i,j}
            =
            \{v\in[m]:(y_{i_j},i_j)\le(v,i)
            <(y_{i_{j+1}},i_{j+1})\}
        \]
where equality can only occur for $i=i_j$.
For every $j\in[k-2]$; 
observe that
        \[
            \abs{G_{i,j}}
            =y_{i_{j+1}}-y_{i_j}-1
            +\indicator_{i\ge i_j}+\indicator_{i<i_{j+1}}
            \ge (m/\Delta) -1.
        \]
        We define one more set, $G_{i,k-1}$, which contains all of the other possible values for the $i$-th position,
        $
            G_{i,k-1}=[m]\setminus\bigcup_{j=1}^{k-2}G_{i,j}.
        $
        This contains the gap between the last two claimed pairs and so we also have $|G_{i,k-1}|\ge (m/\Delta) -1$.  
        Thus for each $i\in [n]$, the $G_{i,1},\ldots,G_{i,k-1}$ partition $[m]$
        and a uniformly resampled value at index $i$ lies in each
        $G_{i,j}$
        with probability at least $(1/\Delta) - 1/m \geq (1/\Delta) - 1/n \geq 15/(16 \Delta)$.

        Since we must have $\sortproblem(\noisyx)\|\tau$ given
        $\sortproblem(x)\|\tau$, the number of elements
        in each of the gaps for $\noisyx$ must
        be the same as in each of the gaps for $x$.
        Let $R\subseteq[n]$ be the
        random set of resampled indices when sampling
        $\noisyx \sim \noise_p(x)$ (note that this includes the points that did not actually change).
        For each $j\in [k-1]$ define 
        \[
            a_j=\abs{\{i\in R\ :\ x_i\in G_{i,j}\}},
            \qquad
            b_j=\abs{\{i\in R\ :\ \noisyx_i\in G_{i,j}\}}.
        \]
        By the correctness condition, since values outside $R$ are unchanged, we must have
        $a_j=b_j$ for each $j\in [k-2]$, which also
        automatically implies that $a_{k-1}=b_{k-1}$.
        
        For each random choice of the set
        $R$, and each $i\in R$, the distribution over $\noisyx_i$ induces a probability
        distribution $P^R_i$ on $[k-1]$ given by the unique value
        $j\in [k-1]$ such that $\noisyx_i\in G_{i,j}$, each of which occurs with probability
        at least $15/(16\Delta)$.
        Since the distributions $P^R_i$ for distinct
        $i\in R$ are independent, the vector $(b_j)_{j\in [k-1]}$ is
        given by a $(|R|,k-1)$-Poisson multinomial
        distribution with minimum probability $\ge 15/(16\Delta)$.
        Therefore by \cref{lem:poisson-multinomial-point} we have 
        $$\Pr[(b_j)_{j\in [k-1]}=(a_i)_{i\in [k-1]}]\le \bigg(\frac{8e\Delta}{15|R|}\bigg)^{(k-2)/2}.$$
        which upper bounds the
        probability that $\sortproblem(\noise_p(x))\|\tau$ conditioned on $R$ being the re-sampled set.
        
        By definition we have
        $|R|\sim\operatorname{Binomial}(n,p)$ for
        $p\ge 16\Delta/n$.
        The Chernoff bound
        \cref{thm:chernoff-lower-tail} gives
        $\Pr[|R|\le 8\Delta]\le e^{-2\Delta}$.
        Therefore 
        \begin{align*}
     \Pr[\sortproblem(\noise_p(x))\|\tau]&\le 
        \Pr[|R|\le 8\Delta]+\Pr\bigg[\sortproblem(\noise_p(x))\|\tau\ :\  |R|>8\Delta\bigg]\\
        &\le e^{-2\Delta}+(e/15)^{(k-2)/2}\\
        &\le e^{-2\Delta}+2^{-(k-2)}.\qedhere   \end{align*}
    \end{proof}

\section{Other time-space tradeoff lower bounds}\label{sec:other-bounds}

In general, our method using the noise operator is
not so useful if a function depends on its input coordinates in a very biased way in the sense that resampling most coordinates has a high likelihood of having absolutely no effect on the overall function value (let alone on a small set of $k$ output coordinates). 

For example, the uniform distribution over $[m]$ could be used to express the biased distribution of bits
employed
by Hamoudi and Magniez~\cite{HM23} for the $k$-search problem where each coordinate is 1 with probability $q=1/m$ and the goal is to find $k$ 1's.   In our case, we could define the function as finding
$k$ coordinates where the value is $m$.
Note that if the value of a coordinate is already $<m$, which
will be the case for the vast majority of coordinates,
re-sampling that coordinate will only impact the function value in the rare case that the resampling
yields the value $m$.

To prove their output-oblivious lower bound for sorting, \cite{HM23} choose $k=\Theta(S)$ for space bound $S$ and their version of $k$-search requires $q=\Theta(k/n)$, and hence $m=\Theta(n/k)$. 
They choose the value $t$ to be  $\Theta(\sqrt{kn})$.
We would already expect to have only roughly $qn=O(k)$ coordinates
with value $m$.
Re-sampling each such coordinate with probability $p\le k/t=\Theta(\sqrt{k/n})$,
we would expect
to resample only $O(k^{3/2}/n^{1/2})$ of them which will be less than 1 for small space bounds.

Nonetheless, we can reprove some previous lower bounds 
where coordinate values are uniformly useful, in particular we get a much easier proof for matrix-vector product than the existing one in~\cite{BKW26} for the Boolean (or constant sized) domain case, though this new proof cannot take advantage of large
domain size to yield a better lower bound in that case.

\subsection{Matrix-vector products}
\label{sec:matrix-vector}

Here we show how our technique based on the noise
operator yields an alternative proof of the
lower bound on computing matrix-vector products
of Beame, Kornerup, and Whitmeyer~\cite{BKW26}, which
used a bucketing method for recording-query basis states.
The bound is slightly worse since it does not contain the $\Omega(\log d)$
factor where $d$ is the number of possible values in each input coordinate, but the overall argument is much simpler.

We begin with some basic definitions and properties
used in~\cite{BKW26}:

\begin{definition}
An $m \times n$ matrix is \emph{$(g,h,c)$-rigid} iff every $k \times w$ submatrix where $k \leq g$ and $w \geq n-h$ has rank at least $ck$. We call $(g,h,1)$-rigid matrices $(g,h)$-rigid.
\end{definition}

Yesha gives an explicit example of such a matrix and Abrahamson proved that there are many rigid square matrices.

\begin{proposition}[Lemma 3.2 in \cite{Yes84}]\label{prop:DFT-rigid}
The $n\times n$ Discrete Fourier Transform (DFT) matrix is $(n/4,n/4, 1/2)$ rigid.
\end{proposition}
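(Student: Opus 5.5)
We would prove this by bounding the dimension of a kernel with a trace argument, a linear-algebraic form of the Donoho--Stark uncertainty principle. Let $F$ be the unitary (normalized) $n\times n$ DFT matrix, so every entry has modulus $1/\sqrt n$; rescaling does not change ranks of submatrices. Fix a row set $R$ with $|R|=k\le n/4$ and a column set $C$ with $|C|=w\ge 3n/4$. Let $T=[n]\setminus C$, so $|T|\le n/4$. We must show $\rank(F_{R,C})\ge k/2$. We will in fact obtain $\rank(F_{R,C})\ge 3k/4$.

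\textbf{Step 1 (reformulate the rank deficiency).} We have $\rank(F_{R,C})=k-d$, where $d$ is the dimension of the left kernel
$$V=\{a\in\mathbb C^n:\ \supp(a)\subseteq R,\ (a^{\top}F)_c=0\ \text{for all } c\in C\}.$$
Because $F$ is symmetric, $a^{\top}F=(Fa)^{\top}$. So $V$ is exactly the space of vectors supported on $R$ whose Fourier transform $Fa$ is supported on $T$. The task reduces to proving $d=\dim V\le |R||T|/n$.

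\textbf{Step 2 (an operator that fixes $V$).} Let $P_R$ and $P_T$ be the coordinate projections, and set $M=P_R F^{*}P_T F P_R$. This operator is positive semidefinite, and $\norm{M}\le 1$ because it is a product of contractions. For $a\in V$ we have $P_Ra=a$ and $P_TFa=Fa$, hence $Ma=F^{*}Fa=a$. Thus $V$ lies inside the eigenspace of $M$ for eigenvalue $1$. Since all eigenvalues of $M$ lie in $[0,1]$, it follows that $d\le \mathrm{tr}(M)$.

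\textbf{Step 3 (compute the trace).} We have
$$\mathrm{tr}(M)=\mathrm{tr}(P_TFP_RF^{*}P_T)=\sum_{t\in T}\sum_{r\in R}|F_{t,r}|^2=\frac{|R||T|}{n}\le \frac{k}{4}.$$
Hence $d\le k/4$, and so $\rank(F_{R,C})\ge 3k/4\ge k/2$. This holds for every choice of $k\le n/4$ rows and at least $3n/4$ columns, which is $(n/4,n/4,1/2)$-rigidity.

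The only delicate point is Step 1: matching the row/column orientation of the submatrix with the kernel description. Here we rely on the symmetry of the DFT. We also need that the left kernel, rather than the right kernel, has dimension $k-\rank$. The right kernel has dimension $w-\rank$, which is useless here. The remaining steps are routine. The trace bound is the key idea, and it is what replaces Yesha's original combinatorial argument.
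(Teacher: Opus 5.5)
The paper gives no proof of this proposition. It simply imports it as Lemma~3.2 of Yesha, so there is no in-paper argument to match.

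Your proof is correct and self-contained. The three steps all check out:
\begin{itemize}
\item the left-kernel reformulation, which uses $F^\top=F$, so that $V$ is the space of vectors supported on $R$ whose transform is supported on $T$;
\item the positive semidefinite contraction $M=P_RF^*P_TFP_R$, which fixes $V$ and has spectrum in $[0,1]$, so that $\dim V\le \mathrm{tr}(M)$;
\item the computation $\mathrm{tr}(M)=\sum_{t\in T,r\in R}|F_{t,r}|^2=|R||T|/n$.
\end{itemize}
One harmless slip: it should read $Ma=P_RF^*Fa=P_Ra=a$, since you dropped the outer $P_R$.

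Your route proves more than the cited statement. It gives $\rank(F_{R,C})\ge k-k|T|/n=k|C|/n$ for row sets $R$ of any size. So the hypothesis $k\le n/4$ is unnecessary, and the DFT is in fact $(n,h,1-h/n)$-rigid for every $h$; in particular $(n,n/4,3/4)$-rigid.

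The only properties you use are unitarity and the flat entry modulus $1/\sqrt n$. Apply the same argument to $U^\top$ when $U$ is not symmetric, and it covers any unitary matrix with all entries of modulus $1/\sqrt n$, such as normalized Hadamard matrices. By the same token, it is specific to the complex DFT, which is the setting intended here. It would not transfer to a DFT over a finite field, where there is no positivity to turn the trace into a bound on the dimension of the eigenvalue-$1$ space.
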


\begin{proposition}[Lemma 4.3 in \cite{Abr91}]\label{prop:rigid-matrices}
There is a constant $\gamma \in (0, \frac{1}{2})$ such that at least a $1-d^{-1}(2/3)^{\gamma n}$ fraction of the matrices over $\inputspace^{n \times n}$ with $|\inputspace|=d$ are $(\gamma n,\gamma n)$-rigid.
\end{proposition}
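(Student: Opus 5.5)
The plan is a direct counting and union-bound argument over potential witnesses of non-rigidity. Set $h=g=\gamma n$ for a small constant $\gamma$ to be fixed at the end. Submatrices with more columns can only have larger rank, so it suffices to handle submatrices with exactly $w=n-h$ columns. Also, a $k\times w$ submatrix has rank at least $ck=k$ iff its rows are linearly independent over the ambient field $\mathbb F\supseteq\inputspace$. So a matrix $M$ fails to be $(\gamma n,\gamma n)$-rigid iff there exist $k\le\gamma n$, a row set $R$ with $|R|=k$, and a column set $W$ with $|W|=n-\gamma n$ such that the rows of $M_{R,W}$ are linearly dependent.

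Fix $R=\{r_1,\dots,r_k\}$ and $W$, and expose the random rows $M_{r_1,W},\dots,M_{r_k,W}$, each uniform in $\inputspace^W$, one at a time. If the rows are dependent, then some $M_{r_i,W}$ lies in the $\mathbb F$-span $V_{i-1}$ of the previous $i-1$ rows.

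The key counting fact is that any subspace $V\subseteq\mathbb F^W$ of dimension $j$ contains at most $d^j$ vectors of $\inputspace^W$. To see this, pick $j$ coordinates on which the restriction map $V\to\mathbb F^j$ is injective (pivot coordinates of a basis). A vector of $V$ is then determined by its values on those $j$ coordinates, and each value lies in $\inputspace$. Conditioning on the earlier rows, this gives
\[
\Pr\big[M_{r_i,W}\in V_{i-1}\big]\le d^{\,i-1-|W|}.
\]
Summing over $i$ gives a dependence probability of at most $\sum_{i\le k} d^{\,i-1-(n-\gamma n)}\le 2\,d^{\,k-1-(1-\gamma)n}$, using $d\ge 2$.

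Next we take a union bound over all witnesses $(k,R,W)$ with $k\le\gamma n$:
\[
\Pr[M\text{ not rigid}]\le\sum_{k\le\gamma n}\binom{n}{k}\binom{n}{\gamma n}\,2\,d^{\,k-1-(1-\gamma)n}
\le 2\gamma n\,2^{2H(\gamma)n}\,d^{-1}\,d^{-(1-2\gamma)n}.
\]
Here $H$ is the binary entropy function. Using $d\ge 2$ on the factor $d^{-(1-2\gamma)n}$, and keeping the factor $d^{-1}$ aside, it remains to choose $\gamma\in(0,\tfrac12)$ small enough that
\[
4^{H(\gamma)}\,2^{-(1-2\gamma)}<(2/3)^{\gamma}.
\]
This holds for sufficiently small $\gamma$, because the left side tends to $1/2$ while the right side tends to $1$ as $\gamma\to0$. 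The slack absorbs the polynomial factor $2\gamma n$ for all large $n$. For the finitely many small $n$, we shrink $\gamma$ further; once $\gamma n<1$ the condition is vacuous, since then only $k=0$ is allowed.

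The only non-routine step is the subspace counting bound: the field $\mathbb F$ may be infinite, or much larger than $\inputspace$, so we cannot simply count the vectors of $V_{i-1}$. The pivot-coordinate projection argument above is what handles this. Everything else is a bookkeeping choice of constants.
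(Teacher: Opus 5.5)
The paper does not prove this proposition; it imports Lemma~4.3 of \cite{Abr91} without proof, so there is no in-paper argument to compare against. Your proof is correct and self-contained, and it follows the standard route for statements of this kind: the bound $|V\cap\inputspace^W|\le d^{\dim V}$ obtained via pivot coordinates, row exposure, and a union bound over row/column witnesses. The individual pieces check out. These are the reduction to column sets of minimum size, the estimate $\sum_{i\le k}d^{\,i-1-|W|}\le 2d^{\,k-1-|W|}$ for $d\ge 2$, and the entropy bound. For instance, $\gamma=0.05$ gives $\rho:=4^{H(\gamma)}2^{-(1-2\gamma)}(3/2)^{\gamma}\approx 0.81$ and $2\gamma n\rho^n\le 0.18$ for every $n$.

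A few small tidy-ups:
\begin{itemize}
\item $\gamma n$ need not be an integer. The witnesses are therefore $k\le\lfloor\gamma n\rfloor$ and $|W|=n-\lfloor\gamma n\rfloor\ge(1-\gamma)n$, which only helps your bound.
\item The statement implicitly needs $d\ge 2$, which you correctly use. For $d=1$ it is false once $\gamma n\ge 2$.
\item Your small-$n$ patch (``shrink $\gamma$ further'') is slightly circular as written, because shrinking $\gamma$ changes which $n$ count as large. It is cleaner to observe that $\rho(\gamma)$ is increasing in $\gamma$ on $(0,\tfrac12)$. Fix $\gamma_0$ with $\rho(\gamma_0)<1$, then take $\gamma\le\gamma_0$ with $2\gamma\sup_n n\rho(\gamma_0)^n\le 1$. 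This handles every $n$ at once.
\item You can drop the sum over $k$ entirely. If every set of exactly $\lfloor\gamma n\rfloor$ rows restricted to $W$ is linearly independent, then so is every smaller set. So only $k=\lfloor\gamma n\rfloor$ needs to be union-bounded, which removes the polynomial factor $\gamma n$.
\end{itemize}
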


Our lower bound is based on the following property of
rigid matrices.

\begin{lemma}
\label{lem:rigid-random}
    Let $M$ be an $m\times n$ matrix that is
    $(g,h,c)$-rigid.
    Let $K\subseteq [m]$ where $|K|=k\le g$.
    For $L\subseteq [n]$ with $|L|=k'$ chosen uniformly at random, the probability that the
    submatrix $M_{K,L}$ of $M$ corresponding to the rows of $K$ and the columns of $L$ has rank less than $\min(ck,k'h/(4n))$ is at most $e^{-k'h/(4n)}$.
\end{lemma}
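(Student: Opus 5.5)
The plan is to reduce to the rigidity definition by looking only at the columns of $L$ that land in a large fixed complement set. First fix a $k\times w$ submatrix with $w\ge n-h$ that certifies rigidity. Rigidity says that for every column set $W\subseteq[n]$ with $|W|\ge n-h$, the submatrix $M_{K,W}$ has rank at least $ck$. Equivalently, for every set $H$ of at most $h$ columns, removing $H$ leaves rank at least $ck$ in rows $K$.

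The key step is a greedy construction of many columns, each of which increases the rank. Build a sequence of column indices $j_1,j_2,\ldots$ as follows. Having chosen $J_s=\{j_1,\ldots,j_s\}$ with $\rank(M_{K,J_s})=s<ck$, let $U_s=\set{j\in[n]: M_{K,j}\in \spn(M_{K,J_s})}$. Then $\rank(M_{K,U_s})=s<ck$. So by rigidity applied to $W=U_s$, we must have $|U_s|<n-h$, i.e. more than $h$ columns lie outside $\spn(M_{K,J_s})$.

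Therefore, whatever partial rank-$s$ column set we currently have inside $L$, more than $h$ columns would each increase the rank.

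Next, turn this into a random process. Reveal the elements of $L$ one by one, sampling without replacement. Let $\rho_i$ be the rank of $M_{K,L_i}$ after $i$ revealed columns. While $\rho_i<ck$, the next column lies outside the current span with probability more than $(h-i)/(n-i)$. When $i\le k'\le h/2$ (or after trimming the argument), this is at least $h/(2n)$.

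If $k'>h/2$, there is a complication. One option is to restrict attention to the first $\min(k', h/2)$ revealed columns, noting that $k'h/(4n)\le h/4$ anyway. So the rank dominates a sum of $k'' = \min(k',\lfloor h/2\rfloor)$ Bernoulli trials, each with success probability at least $h/(2n)$. This holds up to the stopping point $ck$, and is made rigorous by a standard coupling or martingale argument.

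Finally, apply the Chernoff lower tail bound (\cref{thm:chernoff-lower-tail}) with mean $\mu\ge k'h/(2n)$. We get that the number of successes is at most $\mu/2 = k'h/(4n)$ with probability at most $e^{-\mu/8}$. Hence the rank is below $\min(ck, k'h/(4n))$ with this small probability.

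I expect the main obstacle to be matching the exact exponent $e^{-k'h/(4n)}$ claimed in the statement. The naive Chernoff bound gives $e^{-k'h/(16n)}$. To recover the stated constant, I would use a cruder but sharper-in-this-regime bound. Either compare with threshold $\mu/2$ using the multiplicative form $\Pr[R\le \mu/2]\le (e^{-1/2}2^{1/2})^{\mu}$, or bound directly the probability of fewer than $k'h/(4n)$ successes by a union over the positions of failures. If that does not reach the exact constant, I would adjust the success-probability estimate. Using $(h-i)/(n-i)\ge h/n$ while $i$ stays small relative to $h$ doubles $\mu$ to $k'h/n$, and then Chernoff at threshold $\mu/4$ gives $e^{-(9/32)\mu}\le e^{-k'h/(4n)}$. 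That is the route I would try first.
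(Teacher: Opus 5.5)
Your skeleton matches the paper's: reveal $L$ column by column without replacement, use rigidity to show that more than $h$ columns lie outside the current span, dominate the rank by a binomial truncated at $ck$, and finish with \cref{thm:chernoff-lower-tail}. Your rigidity step via $U_s$ is also correct. The gap is in the per-step success probability.

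\textbf{The per-step bound is too weak, and the proposed repairs fail.}
You bound the probability by $(h-i)/(n-i)$. This only gives $h/(2n)$, and only for $i\le h/2$. With that mean, $k'h/(4n)$ is half the mean, and neither the $\delta=1/2$ bound $e^{-k'h/(16n)}$ nor the multiplicative form $(\sqrt{2/e})^{\mu}$ reaches $e^{-k'h/(4n)}$.

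Your proposed repair asserts $(h-i)/(n-i)\ge h/n$ for small $i$. Cross-multiplying shows this is equivalent to $n\le h$, so it is false for every $i\ge 1$ whenever $h<n$.

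The fallback for $k'>h/2$ also fails. Keeping only $\lfloor h/2\rfloor$ trials gives a binomial whose mean, about $h^2/(4n)$, does not grow with $k'$. So it cannot certify a tail of $e^{-k'h/(4n)}$. For example, with $k'=n$ and $h\ll n$, its probability of zero successes is roughly $e^{-h^2/(4n)}$, far above the target $e^{-h/4}$.

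\textbf{The missing observation.}
It is already contained in your $U_s$ step. Every revealed column lies in $\mathrm{span}(M_{K,L_i})$, so none of the more than $h$ columns outside that span has been revealed, and there is nothing to subtract. All of them are among the $n-i$ remaining candidates. So while the rank is below $ck$, the next column raises it with probability at least $(h+1)/(n-i)>h/n$, for every $i<k'$ and with no restriction on $k'$.

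The rank then dominates $\operatorname{Binomial}(k',h/n)$ truncated at $\lceil ck\rceil$. Apply \cref{thm:chernoff-lower-tail} with $\mu=k'h/n$ and $\delta=3/4$: the probability of falling below $\mu/4=k'h/(4n)$ is at most $e^{-9\mu/32}\le e^{-k'h/(4n)}$. This is exactly how the paper concludes.
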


\begin{proof}
Since $M$ is $(g,h,c)$ rigid and $k\le g$, every submatrix of $M_K$ containing at least
$n-h$ columns has rank at least $ck$.
We consider choosing the elements of $L$ iteratively without replacement as $\ell_1,\ldots,\ell_{k'}$
where each $\ell_{i+1}$ is chosen uniformly from $[n]\setminus \set{\ell_1,\ldots,\ell_{i}}$.
For each $i$, we consider the probability that
$\rank(M_{K,\set{\ell_1,\ldots,\ell_{i+1}}})>\rank(M_{K,\set{\ell_1,\ldots,\ell_i}})$.
Let $r=\rank(M_{K,\set{\ell_1,\ldots,\ell_i}})$.

If $r\ge ck$ then we are done.  Otherwise,
$r<ck$.  
Let $C\subseteq [n]\setminus \set{\ell_1,\ldots,\ell_i}$ be the set of all columns $j$ such that
$\rank(M_{K,\set{\ell_1,\ldots,\ell_i,j}})=r+1$.
Now if $|C|\le h$  
the submatrix $M_{K,[n]\setminus C}$ will have $k$ rows and at least $n-h$ columns of $M$ and rank $r<ck$, contradicting the fact that $M$ is $(g,h,c)$-rigid.
This implies that there are at least $h+1$ candidates for
$\ell_{i+1}$ such that $\rank(M_{K,\set{\ell_1,\ldots,\ell_{i+1}}})=r+1$.
Therefore, if the rank is $<ck$ then the probability that the choice of $\ell_{i+1}$ increases the rank by 1
is at least $(h+1)/(n-i)>h/n$.

Since there are $k'$
steps, we can lower bound the rank for a random subset of columns $L$ using a random walk that is non-decreasing on the natural numbers,
starts at 0, moves from $r$ to $r+1$ with probability greater
$h/n$ and is absorbing at $\lceil ck\rceil$.
Ignoring the absorbing barrier, we expect the chain
to reach at least $k'h/n$.
The probability that the Markov chain ends at a value at less than
$\min(ck,k'h/(4n))$ is at most the probability
that $R\sim \operatorname{Binomial}(k',h/n)$ has value 
less than $k'h/(4n)$, which is at most $e^{-k'h/(4n)}$.
\end{proof}

\begin{lemma} \label{lem:matvec}
    Let $M$ be any $(k,h,c)$-rigid $m \times n$ matrix over a field $\F$ and let $f_M:\inputspace^n \to \F^m$ for $\inputspace \subseteq \F$ with
    $|\inputspace|=d$ be defined by $f_M(x) = Mx$. 
    Then for input $x$ sampled
    uniformly from $\inputspace^n$ and any quantum circuit $\calC$ with at most $h/32$ queries to $x$, the probability that $\calC$ produces $k\le h/32$ correct output values of $f_M(x)$ is 
    at most $3\cdot 2^{-ck/4}$.
\end{lemma}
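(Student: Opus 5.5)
The plan is to apply the Quantum Noise Stability Bound (\cref{cor:noise-stability}) with $A=[n]$ and $\badset=\emptyset$. We take the relation $\relation$ to consist of pairs $(x,\tau)$, where $\tau$ is a partial assignment of $k$ output coordinates $K\subseteq[m]$ with $f_M(x)\|\tau$. For every such $\tau$ and every $x$ with $f_M(x)\|\tau$ we then need to bound $\Pr[f_M(\noise_p(x))\|\tau]$. Since $f_M(x)\|\tau$ already holds, this is at most $\Pr[M_K\noise_p(x)=M_Kx]$. I will choose $p=8ck/h$, which is at most $1/4$ because $k\le h/32$ and $c\le 1$. With $t\le h/32$ we have $pt\le ck/4$, so $(1-p)^{-t}\le e^{(1+p)ck/4}$, using $(1-p)^{-1}\le e^{p+p^2}$ for $p\le 1/2$.

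For the classical stability bound, let $R\subseteq[n]$ be the random set of resampled coordinates. Each index lies in $R$ independently with probability $p$, and conditioned on $R$ the new values $\noisyx_R$ are uniform on $\inputspace^R$, independent of everything else. Then $M_K\noisyx=M_Kx$ iff $M_{K,R}\noisyx_R=M_{K,R}x_R$. Let $r=\rank(M_{K,R})$ and pick $r$ linearly independent columns $L'\subseteq R$. If we fix the coordinates of $\noisyx$ in $R\setminus L'$, the map $\noisyx_{L'}\mapsto M_{K,L'}\noisyx_{L'}$ is injective, so at most one of the $d^{r}$ choices of $\noisyx_{L'}$ works. Hence the conditional probability is at most $d^{-r}\le 2^{-r}$. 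This argument uses only injectivity, so it works even though $\inputspace$ is just a subset of $\F$.

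Next I lower bound the rank. Condition on $|R|=k'$. Then $R$ is a uniformly random $k'$-subset, and \cref{lem:rigid-random} (with $g=k$, which is valid since $|K|=k$) gives $\rank(M_{K,R})\ge\min(ck,k'h/(4n))$ except with probability $e^{-k'h/(4n)}$. By the Chernoff bound (\cref{thm:chernoff-lower-tail}), $|R|\ge pn/2=4cnk/h$ except with probability $e^{-pn/8}=e^{-cnk/h}\le e^{-ck}$. When $|R|\ge 4cnk/h$ we get $k'h/(4n)\ge ck$, so the rank is at least $ck$ except with probability $e^{-ck}$. Altogether,
\[
\max_x\Pr[M_K\noise_p(x)=M_Kx]\le 2^{-ck}+2e^{-ck}.
\]

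Multiplying by $(1-p)^{-t}\le e^{(1+p)ck/4}\le e^{5ck/16}$ gives terms $e^{5ck/16}2^{-ck}$ and $2e^{5ck/16-ck}$. Each is at most $2^{-ck/4}$ times a constant, since $(5/16)\log_2 e<3/4$, so the bound is at most $3\cdot 2^{-ck/4}$. The step I expect to need the most care is the bookkeeping of constants. The choice of $p$ must be proportional to $c$ so that the blow-up factor $(1-p)^{-t}$ stays below $2^{ck}$, while $pn$ must still be large enough for \cref{lem:rigid-random} to certify rank $ck$. The conceptual steps themselves, noise stability plus the rank/injectivity argument, are straightforward.
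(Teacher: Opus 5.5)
Your proposal is correct and follows essentially the same route as the paper. Both arguments apply \cref{cor:noise-stability} with $A=[n]$ and $\badset=\emptyset$, use a Chernoff bound on the number of resampled coordinates, invoke \cref{lem:rigid-random} to certify rank at least about $ck$, and bound the probability of hitting the target by $d^{-r}$ for rank $r$. The differences are cosmetic: the paper takes $p=\delta k/t-(\delta k/t)^2/2$ with $\delta=c/2$ and applies \cref{lem:rigid-random} to a random $\lceil pn/2\rceil$-subset of the resampled coordinates, whereas you fix $p=8ck/h$ and apply it to the resampled set itself after conditioning on its size (and, like the paper, you implicitly use $d\ge 2$).
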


\begin{proof}
    Fix the rigid matrix $M$ and
    let $t=\lfloor h/32\rfloor$.
    Applying \cref{cor:noise-stability} with $A=[n]$ and $\badset = \emptyset$, we obtain that for any $p$,
    the probability any partial assignment $\tau$ to $k$ outputs produced by a quantum query algorithm after $t$ queries agrees with $f_M(x)$ is at most
    $$(1-p)^{-t} \max_{\tau, \abs{\tau} = k} \; \max_{x, f_M(x) \| \tau} \Pr[\ f_M(\noise^A_p(x))\|\tau ].$$

    We choose $p=\delta k/t-(\delta k/t)^2/2\ge \delta k/(2t)$ for some $\delta\in (0,1)$ so that
    $(1-p)^{-t}\le e^{\delta k}$.
    Also, with probability $p$, $\noise_p(x)$ resamples each of 
    the $n$ coordinates of $x$ uniformly from $\inputspace$.
    With this value of $p$, 
    the expected number of resampled coordinates in evaluating 
    $\noise_p(x)$ is 
    $\mu=pn\ge \delta k n/(2t)\ge 16\delta kn/h$.
    By a standard Chernoff bound (e.g., \cref{thm:chernoff-lower-tail}), the probability that the number of resampled coordinates is at most
    $\mu/2=8\delta kn/h$ is at most $e^{-\mu/8}=e^{-2\delta kn/h}\le e^{-2\delta k}$ since $h\le n$.

    Assume that the number of resampled coordinates is at least $k'=\lceil\mu/2\rceil$ and
    let $L$ be a random subset of $k'$ of these resampled coordinates.
    Observe that, conditioned on the number being at least $k'$, the distribution on $L$ is uniform over the subsets of
    $[n]$ of size $k'$.
Therefore, by \cref{lem:rigid-random}, the probability that the rank
of the submatrix $M_{K,L}$ will be less than $\min(ck,k'h/(4n))=\min(ck,2\delta k)$
is at most $e^{-k'h/(4n)}=e^{-2\delta k}$.

Suppose that this does not happen.
Fix the choice of $L$ and fix some $x'$ to the value of $\noise_p(x)$ on $[n]\setminus L$ and value 0 on $L$.
Then $f_M(\noise_p(x))_K=(M x')_K + M_{K,L} z$ where
$z$ is uniformly chosen from $\inputspace^L$.
Then conditioned on $L$ and $x'$, the probability that $f_M(\noise_p(x))_K=\tau$
is at most $1/|\inputspace|^{\rank(M_{K,L})}=d^{-\rank(M_{K,L})}$.
Therefore, the probability 
that $f_M(\noise_p(x))||\tau$ is at most
$d^{-\min(ck,2\delta k)}+e^{-2\delta k}+ e^{-2\delta k}$.

Combining this with our upper bound on $(1-p)^{-t}$, we
have that the total probability that $\tau$ is a
correct output is at most
$$e^{\delta k}(d^{-\min(ck,2\delta k)}+e^{-2 \delta k}+ e^{-2\delta k}).$$
Since $d\ge 2$ setting $\delta=c/2$, this is at most $(e/4)^{c k /2}+2\cdot e^{-ck/2}$ which is
at most $3\cdot 2^{-ck/4}$ since 
$e\le 2\sqrt{2}$.
\end{proof}

We obtain a lower bound that is equivalent to the lower bound of~\cite{BKW26} in the case that $|\inputspace|$ is a constant.   (For simplicity of exposition, we only state it in
the case of square matrices.) 

\begin{theorem}
\label{thm:time-space-matrix-vector}
    There is a constant $C>0$ such that the following holds:
    Let $M$ be an $n\times n$ matrix over a field $\F$ that is $(g,h,c)$-rigid with $n$ sufficiently large.  Then any quantum algorithm using time $T$ and space $S< \min(c g/17, ch/544)$ that computes the function $f_M: \inputspace^n \to \F^n$ for $\inputspace\subseteq \F$ with $|\inputspace|\ge 2$ given by $f_M(x) = Mx$ with success probability larger than $2^{-S}$ requires that $T\ge C\cdot  c n h / S$.
\end{theorem}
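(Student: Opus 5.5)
We follow the Borodin--Cook segmentation used in the proofs of \cref{thm:almost-all-hash-time-space} and \cref{thm:sorting-time-space-tradeoff}, with \cref{lem:matvec} as the per-segment bound.

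Set $k=\lceil 17S/c\rceil$. The hypothesis $S<cg/17$ gives $k\le g$, so $M$ is also $(k,h,c)$-rigid and \cref{lem:matvec} applies. The hypothesis $S<ch/544$ gives $k\le h/32$ (up to rounding, absorbed by taking $n$ large or adjusting constants). Let $t=\lfloor h/32\rfloor$. Cut the algorithm into $r=\lceil T/t\rceil$ consecutive segments of at most $t$ queries each. If $T\ge nt/(2k)$, then $T=\Omega(nh\cdot c/S)$ and we are done. So assume $T<nt/(2k)$ and aim for success probability at most $2^{-S}$.

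Under this assumption:
\begin{itemize}
\item Every correct run must output all $n$ values of $Mx$, so some segment outputs at least $n/r\ge nt/(2T)\ge k$ correct values. We may truncate that segment's outputs to exactly $k$ of them.
\item \cref{lem:matvec}, applied to the uniform input distribution, bounds the probability that a $t$-query circuit produces $k$ correct outputs by $3\cdot 2^{-ck/4}\le 3\cdot 2^{-17S/4}$.
\item Each segment starts from an $S$-qubit state that depends on the input. \cref{thm:quantum-union-bound} handles this by replacing that state with the maximally mixed state at a cost of a factor $2^{S}$, giving at most $3\cdot 2^{-13S/4}$ per segment.
\item A union bound over $r$ segments gives total probability at most $3r\,2^{-13S/4}$.
\end{itemize}

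The remaining work is bookkeeping, and the main obstacle is making the union bound over segments come out below $2^{-S}$. If $T\ge n^2$ the claimed bound holds trivially, because $S<ch/544\le cn$ makes $cnh/S\le n^2$ up to constants. Otherwise $r\le T< n^2$. Since the algorithm needs $\log n$ qubits for its index register, $2^S\ge n$, so $r\le 2^{2S}$. Then $3r\,2^{-13S/4}\le 3\cdot 2^{-5S/4}<2^{-S}$ for $S$ large enough, which holds since $S\ge\log n$ and $n$ is sufficiently large.

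We also need $T\ge t$, so that the segments are full. If instead $T<t$, a single application of \cref{lem:matvec} together with \cref{thm:quantum-union-bound} already rules out success probability $2^{-S}$, since outputting all $n\ge k$ values includes outputting $k$ correct ones. Finally, $nt/(2k)\ge n(h/64)\,c/(2\cdot 18 S)$, so $C=1/(64\cdot 36)$ (or any smaller constant) works.
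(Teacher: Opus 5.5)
Your proposal is correct and follows the same route as the paper. Both split the run into segments of at most $h/32$ queries, apply \cref{lem:matvec} to each segment, use \cref{thm:quantum-union-bound} to pay a factor $2^{S}$ for the advice, and take a union bound over segments.

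The differences are only bookkeeping. The paper takes $k=\lceil 16S/c\rceil$, which leaves room for rounding in $k\le g$ and $k\le h/32$. Your $k=\lceil 17S/c\rceil$ has no such room when $g$ or $h/32$ is not an integer, though this is easily patched. The paper offsets its smaller $k$ by restricting to $T\le nh$, giving at most $33n$ segments, and using $2^{2S}\ge 99n$. You instead bound the segment count by $T<n^2\le 2^{2S}$ and spend the extra $2^{-S/4}$ that your larger $k$ buys.
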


\begin{proof}
    Consider any quantum algorithm with $T$ queries and space $S$ that computes $f_M(x)$ with success probability larger than $2^{-S}$.
    For a non-trivial lower bound, we
    only need to consider the case that $T\le n h$.
    We always have $S\ge \log_2 n$ since the query register already takes this space.

  We partition the computation of the quantum algorithm into $\lceil 32T/h\rceil$ segments that each have at most $h/32$ queries.   
  In particular there are at most $32 n+1\le 33n$ such segments.
 
Fix $k = \ceil{16 S / c}$.
Then
$2^{-ck/4} \leq 2^{-4 S} \leq 2^{-2S}/(99 n)$ for
$n$ sufficiently large.
Since $S\le cg/17$, we have $k\le g$ and 
by combining \cref{thm:quantum-union-bound} and \cref{lem:matvec}, we know that each segment of the quantum algorithm  can produce $k\le g$ correct output values with probability at most $3\cdot 2^{S} \cdot 2^{-ck/4}\le 2^{-S}/(33n)$.

Taking a union bound over the at most $33n$ segments, the probability that any of them produces $k$ correct output values is at most $2^{-S}$. Since $f_M$ has $n$ outputs, this means that
    \begin{math}
    \lceil32 T / h\rceil (k-1) \geq n.
    \end{math}
    We know that $T \geq h/32$ and so we have that
    \begin{math}
    64 T / h (k-1) \geq n.
    \end{math}
    Next plugging in our value of $k$ implies
    \begin{math}
    (64 T / h) (16 S / c) \geq n.
    \end{math}
    And so we can conclude that $T\geq  c n h /(1024 S)$
\end{proof}

As noted in~\cite{BKW26}, this theorem has large numbers of corollaries, including that quantum query algorithms computing the discrete Fourier transform with space $S$ require
$T$ queries where $T$ is $\Omega(n^2/S)$ as well as the previous lower bounds for wrapped convolution and integer multiplication~\cite{BKW26} that we discussed
in~\cref{sec:hashing}.

\section{Acknowledgments}
Paul Beame's contribution to this work was supported by National Science Foundation grant CCF-2422205.
Blake Holman's and Niels Kornerup's contributions to this work were supported by the Laboratory Directed Research and Development program (project
240650) at Sandia National Laboratories, a multimission laboratory managed and operated by National
Technology and Engineering Solutions of Sandia LLC, a wholly owned subsidiary of Honeywell International
Inc. for the U.S. Department of Energy’s National Nuclear Security Administration under contract DE-
NA0003525.
This paper describes objective technical results and analysis. Any subjective views or opinions that might be expressed in the paper do not necessarily represent the views of the U.S. Department of Energy or the United States Government.
The publisher acknowledges that the U.S. Government retains a non-exclusive, paid-up, irrevocable, world-wide license to publish or reproduce the published form of this written work or allow others to do so, for U.S. Government purposes. The DOE will provide public access to results of federally sponsored research in accordance with the \href{https://www.energy.gov/doe-public-access-plan}{DOE Public Access Plan}.

We would additionally like to thank Yassine Hamoudi for insightful discussions regarding the viability of proving fully general quantum time-space tradeoff lower bounds for sorting via the recording query method.

\subsection{AI use statement}

%
%
We began our work by focusing on the general bound for sorting.
After having proved
\cref{cor:non-spread-output}, and independent of our proof of it, we asked ChatGPT 5.6 to solve a 
specific toy example related to 
\cref{sec:noisysort},
involving 
projection onto states of low Fourier degree $\projfouriervs{t}$, that we knew was associated with
the hypergeometric distribution.  
%
In the course of solving this example, which it did via a convoluted chain
of reasoning involving families of orthogonal polynomials (e.g. \cite{IlievX22}), it connected conjugation by 
$\projfouriervs{t}$
to a noise operator bound for the problem. 
Along the way it proved $(1-p)^{-t}T_p\succeq \projfouriervs{t}$ via an argument involving elementary symmetric polynomials.
We observed that this fact immediately follows from viewing $\projfouriervs{t}$ and $T_p$ in the Fourier basis
and realized that this was likely to be a generally valuable approach that we should apply to other problems.

In retrospect, noise operators are a very natural tool, 
as the properties $T_p$ are straightforward to analyze in both the Fourier basis (where the algorithm's state is easily described) as well as in the standard basis (where our events of interest are defined). While this high-level idea was used in \cref{sec:sorting}, it was only a sub-case of the problem, which still required substantial new analysis to solve in sufficient generality.

Claude Fable 5.1 and GPT 6.1 were used to catch minor errors in drafts of this paper.
The identified errors were verified and fixed by the authors.

All text and proofs contained in this document were written by the authors, who have verified the originality and correctness of the contained results.

\bibliography{sources}
\bibliographystyle{alpha}

\appendix

\section{Poisson multinomial anti-concentration}
\label{sec:poisson}

In this section we prove the following anti-concentration bound
for Poisson multinomial distributions.

    \poissonmax*

\begin{sloppypar}
The proof will be based on the following result of Baillon, Cominetti and Vaisman~\cite{BCV16} which proves strong anti-concentration bounds for sums of independent Bernoulli random variables based on their variances, known as the Poisson binomial distribution, which is the case of  the Poisson multinomial distribution for domain size 2.
\end{sloppypar}

\begin{proposition}[{\cite[Theorem~1]{BCV16}}]
        \label{lem:poisson-binomial-two-outcome}
        Let $P_1,\ldots,P_n$ be $n$ Bernoulli distributions with associated
        probabilities $p_1,\ldots, p_n$.
        If we independently choose
        $b_1\sim P_1,\ldots, b_n\sim P_n$ and write
        $\sigma^2=\sum_{i\in [n]} p_i(1-p_i)$ for
        the variance of their sum,
        then
        every $c \in\mathbb N$ satisfies
        \[
            \sigma\cdot \Pr\big[\sum_{i\in [n]}b_i =c\big]
            \le \eta
        \]
        where $\eta\approx 0.4688\le 1/2$
    \end{proposition}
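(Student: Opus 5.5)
The plan is to prove the bound in three steps: write the point probability as a Fourier integral, bound the integrand by a Gaussian-type expression in $\sigma^2$ alone, and then maximize a one-variable Bessel-function expression. Let $X=\sum_{i\in[n]} b_i$ and write $v_i=p_i(1-p_i)\in[0,1/4]$, so that $\sigma^2=\sum_i v_i$.

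\textbf{Step 1 (Fourier inversion).} For every integer $c$,
$$\Pr[X=c]=\frac{1}{2\pi}\int_{-\pi}^{\pi}\prod_{i\in[n]}\big(1-p_i+p_ie^{\mathrm{i}\theta}\big)\,e^{-\mathrm{i}c\theta}\,d\theta .$$
Taking absolute values inside the integral gives
$$\Pr[X=c]\le\frac{1}{2\pi}\int_{-\pi}^{\pi}\prod_{i\in[n]}\big|1-p_i+p_ie^{\mathrm{i}\theta}\big|\,d\theta .$$
This bound is uniform in $c$, so the rest of the argument does not depend on $c$.

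\textbf{Step 2 (pointwise bound on the factors).} A direct computation gives
$$|1-p+pe^{\mathrm{i}\theta}|^2=1-2p(1-p)(1-\cos\theta)=1-4v\sin^2(\theta/2).$$
Using $\sqrt{1-x}\le e^{-x/2}$ on each factor yields
$$\prod_{i\in[n]}|1-p_i+p_ie^{\mathrm{i}\theta}|\le \exp\Big(-\sigma^2(1-\cos\theta)\Big).$$
The right-hand side depends only on $\sigma^2$, not on the individual $p_i$. Integrating and using the integral representation $I_0(u)=\frac1{2\pi}\int_{-\pi}^{\pi}e^{u\cos\theta}\,d\theta$ of the modified Bessel function, we get
$$\Pr[X=c]\le e^{-\sigma^2}I_0(\sigma^2).$$

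\textbf{Step 3 (one-variable maximization).} It remains to show that
$$g(u)=\sqrt u\,e^{-u}I_0(u)\le\eta\qquad\text{for all } u\ge 0,$$
where $\eta$ is defined as $\sup_{u\ge0}g(u)$. Then $\sigma\Pr[X=c]\le g(\sigma^2)\le\eta$, which is the claimed inequality.

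To identify the constant, note the behaviour of $g$ at the ends of its range. First, $g(0)=0$. Second, using the asymptotic $I_0(u)\sim e^u/\sqrt{2\pi u}$, we have $g(u)\to 1/\sqrt{2\pi}\approx0.399$ as $u\to\infty$. So the supremum is attained at an interior critical point. This point satisfies $g'(u)=0$, equivalently $I_0(u)\big(\tfrac{1}{2u}-1\big)+I_1(u)=0$. Numerically it lies near $u\approx0.8$, where $g(0.8)\approx0.4688$; for comparison $g(1)\approx0.466$. This gives $\eta\approx0.4688<1/2$.

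\textbf{Main obstacle.} The main obstacle is making Step 3 rigorous rather than numerical. I would first show that the critical point is unique, for instance via monotonicity of the ratio $I_1/I_0$. I would then bound $g$ on a compact interval using the power series of $I_0$. On the tail I would use the standard estimate $e^{-u}I_0(u)\le 1/\sqrt{2\pi u}\,(1+O(1/u))$, which suffices because $1/\sqrt{2\pi}<0.4688$.

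If only the weaker bound $\le 1/2$ is needed, as in the application in the paper, this step simplifies considerably. It suffices to use crude series bounds for $u\le 2$ together with the asymptotic upper bound for $u\ge 2$.
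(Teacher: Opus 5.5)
Your proof is correct and is essentially the argument of \cite{BCV16}, which the paper cites rather than reproves: Fourier inversion with $|1-p+pe^{\mathrm{i}\theta}|\le e^{-p(1-p)(1-\cos\theta)}$ gives $\Pr[X=c]\le e^{-\sigma^2}I_0(\sigma^2)$, and $\eta=\sup_{u\ge 0}\sqrt{u}\,e^{-u}I_0(u)\approx 0.4688$, which is sharp in the limit of a symmetric difference of two Poisson variables. For the numerical step you do not need uniqueness of the critical point, and monotonicity of $I_1/I_0$ alone would not give it since $1-\frac{1}{2u}$ is also increasing: on $[0,3]$ a finite grid suffices because $\sqrt{u}$ increases while $e^{-u}I_0(u)$ decreases, and for $u\ge 3$ the representation $e^{-u}I_0(u)=\frac{1}{\pi}\int_0^2 e^{-us}\frac{ds}{\sqrt{s(2-s)}}$ gives the explicit bound $\sqrt{u}\,e^{-u}I_0(u)\le\frac{1}{\sqrt{2\pi}}\bigl(1+\frac{0.21}{u}\bigr)+\frac{\sqrt{u}\,e^{-u}}{2}<\frac{1}{2}$.
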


If we have  $\gamma\le p_i\le 1-\gamma$ then $p_i(1-p_i)\ge \gamma/2$ and hence
$\sigma^2=\sum_{i\in [n]} p_i(1-p_i)\ge \gamma n/2$, which
gives a slightly sharper upper bound of $1/\sqrt{2\gamma n}$ for the
probability in \cref{lem:poisson-multinomial-point}
in the case that $d=2$.

    \begin{proof}
        Let $d\ge 2$.
        Fix $c\in\mathbb N^d$ with $\sum_{j=1}^d c_j=n$, and relabel
        the events without loss of generality so that
        $c_1\le\cdots\le c_d$.
        It will suffice
        to derive good upper bounds on
        $\Pr[C_j=c_j\mid C_{1}=c_1,\ldots, C_{j-1}=c_{j-1} ]$ under this re-ordering
        for each $j\in [d-1]$.
        Observe that, as noted above, the case $j=1$
        follows immediately from~\cref{lem:poisson-binomial-two-outcome} with the slightly 
        better bound of $1/\sqrt{2\gamma n}$, since the
        probability
        is the same as that of the induced Bernoulli distribution in which all values in $[2,d]$
        are indistinguishable.

        For $j>1$, let $s_{j-1}=\sum_{j'\in [j-1]} c_{j'}$ and $n_j=n-s_{j-1}$.
        The conditioning that $C_1=c_1,\ldots, C_{j-1}=c_{j-1}$ means that there is some
        subset $A\in \binom{[n]}{s_{j-1}}$ where 
        $v_i\sim P_i$ has value in $[1,\ldots,j-1]$ for
        every $i\in A$
        and for every $i\in [n]\setminus A$
        the value $v_i\sim P_i$ has a value in
        $[j,\ldots, d]$.
        For each fixed choice of $A$ consistent
        with our conditioning, the conditional probability
        that $C_j=c_j\;|\;C_1=c_1,\ldots C_{j-1}=c_{j-1}$ is
        equivalent to $c_j$ being the sum of $n_j$ independent Bernoulli
        distributions for $i\in [n]\setminus A$ obtained by sampling $P_i \;| \; (P_i\ge j)$, with success on value $j$.
        Observe that such a sample equals $j$
        with probability exactly
        \[
            \frac{P_{i,j}}{\sum_{j'=j}^d P_{i,j'}}.
        \]
        Both this probability and its complement must be at least $\gamma$ since $P_{i,j+1}\ge\gamma$.
 Therefore conditioned
        on $C_1=c_1,\ldots,C_{j-1}=c_{j-1}$ and the
        value of $A$,  
        \cref{lem:poisson-binomial-two-outcome}
        implies that the probability that
        $C_j=c_j$ is at most $1/\sqrt{2\gamma n_j}$.
        Therefore,
        $$\Pr[C_j=c_j\mid C_{1}=c_1,\ldots, C_{j-1}=c_{j-1} ]\le 1/\sqrt{2\gamma n_j}.$$
Since the $c_j$ are non-decreasing, we have
$n_j\ge n-(j-1)n/d=(1-(j-1)/d)n$.
Therefore,
{\allowdisplaybreaks
        \begin{align*}
            \Pr[C=c]
            &=\prod_{j=1}^{d-1} \Pr[C_j=c_j\mid C_1=c_1,\ldots,C_{j-1}=c_{j-1}]\\
            &\le 
            \prod_{j=1}^{d-1} \frac{1}{\sqrt{2\gamma n_j}}\\
            &\le \bigg(\frac{1}{\sqrt{2\gamma n}}\bigg)^{d-1}\  \prod_{j=1}^{d-1} \frac{1}{\sqrt{1-(j-1)/d}}\\
            &=\bigg(\frac{1}{\sqrt{2\gamma n}}\bigg)^{d-1}\ \ \prod_{j=1}^{d-1} \sqrt{\frac{d}{d-(j-1)}}\\
            &=\bigg(\frac{1}{\sqrt{2\gamma n}}\bigg)^{d-1}
            \sqrt{\frac{2d^{d-1}}{d!}}\\
            &\le\bigg(\frac{e}{2\gamma n}\bigg)^{(d-1)/2}
        \end{align*}}
        since $d!\ge 2(d/e)^{d-1}$ for $d\ge 2$.
    \end{proof}

\end{document}